\newcommand{\defeq}{\overset{\rm def}{=}}
\newcommand{\ceil}[1]{\lceil{#1}\rceil}
\newcommand{\flip}{\overline}
\newcommand{\E}{\mathrm{E}}
\newcommand{\tent}{f}
\newcommand{\bitseq}{\mathbf{b}}
\newcommand{\bitseqc}{\mathbf{c}}
\newcommand{\enc}{\gamma}
\newcommand{\typei}{I}
\newcommand{\typej}{J}
\newcommand{\typefn}{T}
\newcommand{\typeset}{\mathcal{T}}
\newcommand{\requiredstgsize}{K}
\newcommand{\poly}{\mathrm{poly}}
\newcommand{\Order}{\mathrm{O}}
\newcommand{\order}{\mathrm{o}}
\newcommand{\lev}{L}
\newtheorem{theorem}{Theorem}[section]
\newtheorem{lemma}[theorem]{Lemma}
\newtheorem{proposition}[theorem]{Proposition}
\newtheorem{observation}[theorem]{Observation}
\newtheorem{problem}{Problem}
\title{A Smoothed Analysis of the Space Complexity\\of Computing a Chaotic Sequence}
\author{
 Naoaki Okada\footnote{Graduate School of Information Science and Electrical Engineering, Kyushu University} \and 
 Shuji Kijima\footnote{Faculty of Data Science, Shiga University}
}
\begin{document}
\maketitle

\begin{abstract}
 This work is motivated by a question 
  whether it is  possible to calculate a chaotic sequence efficiently, 
  e.g., is it possible to get the $n$-th bit of a bit sequence generated by a chaotic map, 
   such as $\beta$-expansion, tent map and logistic map in $\order(n)$ time/space?
 This paper gives an affirmative answer to the question about the space complexity of a tent map. 
 We show that 
  the decision problem of whether a given bit sequence is a {\em valid} tent code  
  is solved in $\Order(\log^2 n)$ space 
  in a sense of the smoothed complexity. 
\end{abstract}

\section{Brief Introduction}
A {\em tent map} $\tent_{\mu} \colon [0, 1] \to [0, 1]$ (or simply $\tent$) is given by 
\begin{align}\label{eq:tentmap}
  \tent(x) =
  \begin{cases}
    \mu x &: x \leq \frac{1}{2}, \\
    \mu (1 - x) &: x \ge \frac{1}{2}
  \end{cases}
\end{align}
  where this paper is concerned with the case of $1 < \mu < 2$. 
 As Figure~\ref{fig:tentmap} shows, 
   it is a simple piecewise-linear map looking like a tent.  
 Let $x_n = \tent(x_{n-1}) = \tent^n(x) $ recursively for $n=1,2,\ldots$, where $x_0=x$ for convenience. 
 Clearly, $x_0,x_1,x_2,\ldots$ is a deterministic sequence.
 Nevertheless, the deterministic sequence shows a complex behavior, as if ``random,'' when $\mu>1$. 
 It is said {\em chaotic}~\cite{Lorenz93}. 
 For instance, 
  $\tent^n(x)$ becomes quite different from $\tent^n(x')$ for $x \neq x'$ as $n$ increasing, 
    even if $|x-x'|$ is very small, and 
   it is one of the most significant characters of 
    a chaotic sequence   
    known as the  {\em sensitivity to initial conditions} --- 
    a chaotic sequence is said ``unpredictable'' despite a deterministic process~\cite{Lorenz63,SMYO83,YMS83,CH94}. 

 From the viewpoint of theoretical computer science, 
  computing chaotic sequences 
     seems to contain (at least) two computational issues: 
     numerical issues and 
     combinatorial issues including computational complexity. 
 This paper is concerned with the computational complexity of a simple problem: 
 Given $\mu$, $x$ and $n$, 
  decide whether $\tent^n(x) < 1/2$. 
 Its time complexity might be one of 
  the most interesting questions; 
   e.g., is it possible to ``predict'' whether $\tent^n(x) < 1/2$ in time polynomial in $\log n$? 
 Unfortunately, we in this paper cannot answer the question\footnote{
   We think that the problem might be NP-hard 
    using the arguments on the complexity of algebra and number theory in~\cite{GJ79}, 
    but  we could not find the fact. 
  }. 
 Instead, 
   this paper 
   is concerned with the space complexity of the problem.  

\begin{figure}[tbp]
  \begin{tabular}{cc}
    \begin{minipage}[t]{.5\hsize}
      \centering
      \includegraphics[width=1\linewidth]{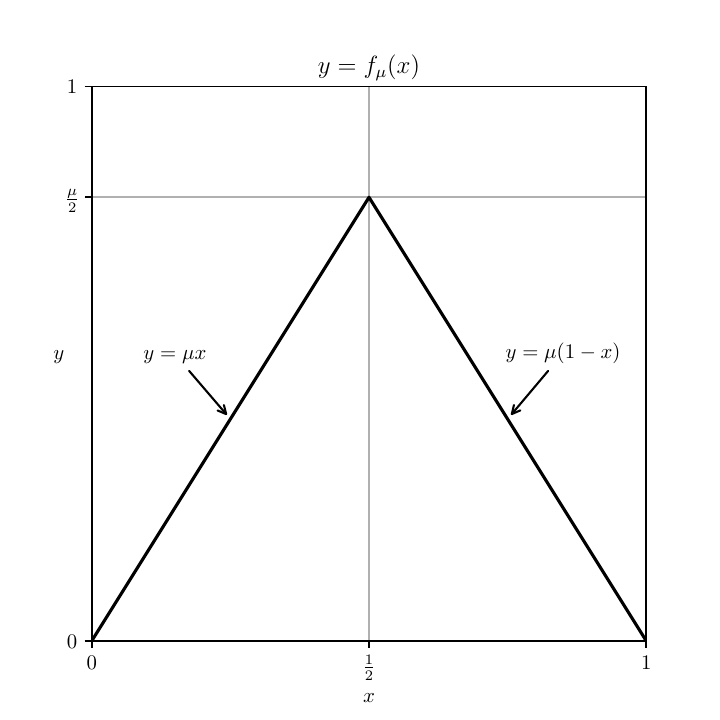}
      \subcaption{Tent map $\tent(x)$ for $\mu=1.6$. }
      \label{fig:tentmap}
    \end{minipage} &
    \begin{minipage}[t]{.5\hsize}
      \centering
      \includegraphics[width=1\linewidth]{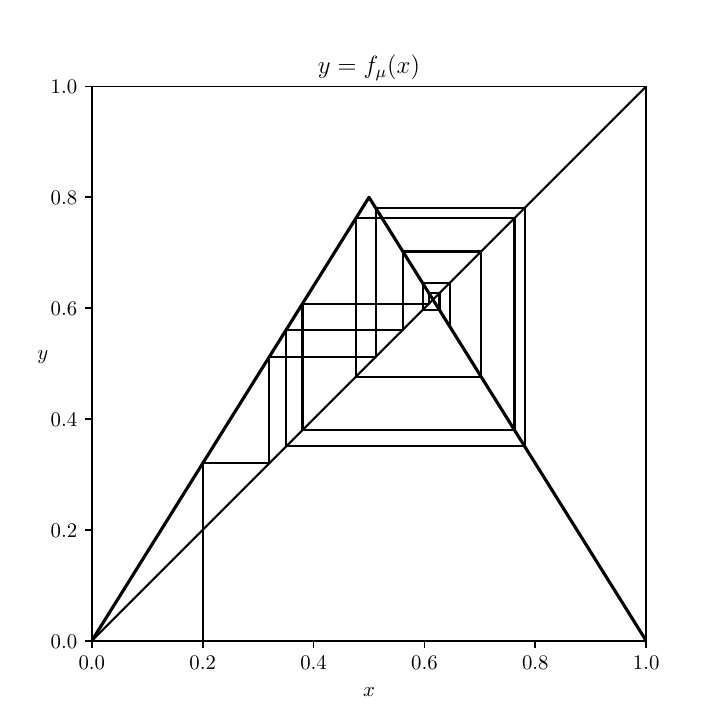}
      \subcaption{Cobweb of $x_0, x_1, \ldots, x_{15}$ for $x_0=0.2$.\\Is $x_{1000} < \frac{1}{2}$?}
      \label{fig:expansion}
    \end{minipage}
  \end{tabular}
  \caption{A tent map $\tent(x)$ and its cobweb.}
  \label{fig:compress0}
\end{figure}

\subsection{Background}
\paragraph{Chaos.}
%
%
%
%
%

 Chaotic sequences show many interesting figures 
  such as 
    cobweb, 
    strange attractor, 
    bifurcation, etc.~\cite{Lorenz63,LY75,May76,Lorenz93,Kohda08}. 
 The chaos theory has been intensively developed 
   in several context such as 
     electrical engineering, information theory, statistical physics, neuroscience and computer science,  
  with many applications, such as weather forecasting, climate change, diastrophism and disaster resilience since the 1960s. 
 For instance, 
   a cellular automaton, including the life game, is 
    a classical topic in computer science, and 
   it is closely related to the ``edge of chaos.'' 
 For another instance, 
  the ``sensitivity to initial conditions'' are often regarded as unpredictability, 
   and chaotic sequences are used in 
    pseudo random number generator,  
    cryptography, 
    or heuristics for NP-hard problems including chaotic genetic algorithms. 
 
 From the viewpoint of theoretical computer science, 
  the numerical issues of computing 
   chaotic sequences have been intensively investigated 
  in statistical physics, information theory and probability. 
 In contrast, 
   the computational complexity of computing a chaotic sequence 
    seems not well developed. 
 It may be a simple reason that 
  it looks unlike a decision problem. 
 
%


\paragraph{Tent map: 1-D, piecewise-linear and chaotic.}
 Interestingly, 
   very simple maps show chaotic behavior. 
 One of the most simplest maps are piece-wise linear maps, 
  including the tent map and the $\beta$-expansion (a.k.a. Bernoulli shift) which are 1-D maps and 
    the baker's map which is a 2-D map 
  \cite{Lorenz63,Renyi57,Parry60,Parry64,SMYO83,YMS83,CH94,Hopf37,baker00}. 

 The tent map, as well as the $\beta$-expansion, is known to be 
  topologically conjugate to the {\em logistic map} 
   which is a quadratic map cerebrated as a chaotic map. 
 Chaotic behavior of the tent map, in terms of power spectra, band structure, critical behavior, are analyzed in e.g.,~\cite{Lorenz63,SMYO83,YMS83,CH94}. 
 The tent map is also used for pseudo random generator or encryption e.g., \cite{addabbo2006,Access20,LLQL17}. 
 It is also used for  meta-heuristics for NP-hard problems~\cite{DNA09,NoC09}.
%
%

\paragraph{Smoothed analysis. }
 Linear programming is an optimization problem on a (piecewise) linear system, for a linear objective function. 
 Dantzig in 1947 gave an ``efficient'' algorithm, known as the simplex method, for linear programming. 
 Khachiyan~\cite{Khachiyan} gave the ellipsoid method, and 
  proved that the linear programming is in P (cf.~\cite{KV}) in 1979. 
 Karmarkar~\cite{Karmarkar} in 1984 gave another polynomial time  algorithm, interior point method. 
  
 The smoothed analysis is introduced by Spielman and Teng~\cite{spielman2004}, 
  to prove that the simplex algorithms for linear programmings run in ``polynomial time,'' beyond the average case analysis. 
 There are several recent progress on the smoothed analysis of algorithms~\cite{CGGYZ20,BBHR20,HRS22,GKM22}.

\subsection{Contribution}
\paragraph{This work.}
 This paper is concerned with a problem 
   related to deciding whether $\tent^n(x) < 1/2$ for $x \in [0,1)$ for the $n$-th iterated tent map $\tent^n$. 
 More precisely, we will define the {\em tent language} ${\cal L}_n \subseteq \{0,1\}^n$ 
   consisting of {\em tent codes} of  $x \in [0,1)$ in Section~\ref{sec:tent-map}, 
  and we are concerned with the {\em correct} recognition of ${\cal L}_n$. 
 The main target of the paper is the {\em space complexity} of the following simple problem; 
    given a bit sequence $\bitseq \in \{0,1\}^n$ and $x \in [0,1)$, 
   decide whether $\bitseq$ is a tent code of $x$. 
 One may think that 
    it is a problem just to compute $\tent^i(x)$ ($i=1,\ldots,n$), and 
    there is nothing more than the precision issue, even in the sense of computational complexity. 
 However, we will show in Section~\ref{sec:issue} that 
  a standard calculation attended by {\em rounding-off} easily allows {\em impossible} tent code. 

 By a standard argument on the numerical error, cf.\ \cite{KV}, 
   $\Order(n)$ space is enough to get it. 
 At the same time, 
   it seems hopeless to solve the target problem {\em exactly} in $\order(n)$ space, 
   due to the ``sensitivity to initial conditions'' of a chaotic sequence.  
 Then, this paper is concerned with a decision problem of 
   whether an input $\bitseq \in \{0,1\}^n$ is a tent code of $\epsilon$-perturbed $x$, and 
   proves that it is {\em correctly} recognized in $\Order(\log^2 n)$ space (Theorem~\ref{thm:smooth-recog}).

\paragraph{Related works.}
 The analysis technique of the paper basically follows~\cite{OK23}, 
   which showed that the recognition of ${\cal L}_n$ is in $\Order(\log^2 n)$ space {\em in average}. 
 The technique is also very similar to or essentially the same as 
   {\em Markov extension}  developed in the context of symbolic dynamics. 
 In 1979, Hofbauer~\cite{Hofbauer79} gave a representation of the kneading invariants for unimodal maps, 
  which is known as the Markov extension and/or Hofbauer tower, and then 
   discussed {\em topological entropy}. 
 Hofbauer and Keller extensively developed the arguments in 1980s, see e.g., \cite{deMelo-vanStrien,Bruin95}. 
 We do not think the algorithms of the paper are trivial, 
   but they are composed of the combination of the above nontrivial argument, and 
   some classical techniques of designing space efficient algorithms.  

 As we stated above, the computational complexity of computing a chaotic sequence seems not well developed. 
 Perl showed some NP-complete systems, e.g., knapsack, shows chaotic behavior \cite{Perl87}. 
 On the other hand, 
  it seems not known whether every chaotic sequence is hard to compute in the sense of NP-hard; 
  particularly we are not sure if the problem $\tent^n(x) < 1/2$ is NP-hard for a tent map $\tent$. 
 Recently, chaotic dynamics are used for solving NP-hard problems e.g., SAT \cite{ET11}.

\subsection{Organization}
 In Section~\ref{sec:tent-map}, 
   we will define the tent code, 
   describe the issue of a standard calculation  rounding-off, and 
   show the precise results of the paper. 
 Section~\ref{sec:OK23} imports some basic technologies from \cite{OK23}. 
 Section~\ref{sec:calc} gives a simple algorithm for a valid {\em calculation}, 
   as a preliminary step. 
 Section~\ref{sec:smooth-recog} gives a smoothed analysis for the {\em decision} problem.

\section{Issues and Results}\label{sec:tent-map}
\subsection{Tent code}\label{sec:code}
 We define a tent-encoding function $\enc^{n}_{\mu}\colon [0,1) \to \{0,1\}^n$ (or simply $\enc^n$) as follows. 
For convenience, 
 let $x_i=\tent^i(x)$ for $i=1,2,\ldots$ as given $x \in [0, 1)$, 
 where $\tent^i$ denote the $i$-times iterated tent map 
  formally given by $\tent^i(x)=\tent(\tent^{i-1}(x))$ recursively. 
Then, the {\em tent code} $\enc^{n}(x)=b_1 \cdots b_n$ for $x \in [0, 1)$ is a bit-sequence, 
  where 
\begin{align}
  b_{1} &=
  \begin{cases}
    0 &: x < \frac{1}{2}, \\
    1 &: x \ge \frac{1}{2},
  \end{cases} 
  \label{def:encode0} 
\end{align}
 and $b_i$ ($i=2,3,\ldots,n$)  is recursively given by 
\begin{align}
  b_{i+1} &=
  \begin{cases}
    b_{i} &: x_i < \frac{1}{2}, \\
    \flip{b_{i}} &: x_i > \frac{1}{2}, \\
    1 &: x_i = \frac{1}{2},
  \end{cases}
\label{def:encode1}
\end{align}
where $\flip{b}$ denotes bit inversion of $b$,
i.e., $\overline{0}=1$ and $\overline{1}=0$.
We remark that the definition \eqref{def:encode1} is rephrased by
\begin{align}
  b_{i+1} &=
  \begin{cases}
    0 &: [b_i = 0] \wedge \left[x_i < \frac{1}{2}\right], \\
    1 &: [b_i = 0] \wedge \left[x_i \geq \frac{1}{2}\right], \\
    1 &: [b_i = 1] \wedge \left[x_i \leq \frac{1}{2}\right], \\
    0 &: [b_i = 1] \wedge \left[x_i > \frac{1}{2}\right].
  \end{cases}
\label{def:encode2}
\end{align}

\begin{proposition}\label{prop:tent-expansion}
 Suppose $\enc_{\mu}^{\infty}(x) = b_1b_2\cdots$ for $x \in [0,1)$. 
 Then, $(\mu - 1) \sum_{i=1}^{\infty} b_{i} \mu^{-i} = x$. 
\end{proposition}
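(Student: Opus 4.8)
The plan is to unfold the recursive definition of the tent code and track how the tent map acts on the binary-like expansion $\sum_i b_i \mu^{-i}$. First I would establish the base case: writing $S(\bitseq) \defeq (\mu-1)\sum_{i=1}^{\infty} b_i \mu^{-i}$, I want to show $S(\enc^{\infty}(x)) = x$ for all $x \in [0,1)$. The key observation is a one-step relation. Suppose $\enc^{\infty}(x) = b_1 b_2 \cdots$ and $\enc^{\infty}(x_1) = b_1' b_2' \cdots$ where $x_1 = \tent(x)$. By the definition \eqref{def:encode0}--\eqref{def:encode1}, the tail $b_2 b_3 \cdots$ of the code of $x$ is exactly the code of $x_1$ when $b_1 = 0$ (i.e.\ $x < 1/2$), and is the bitwise complement of the code of $x_1$ when $b_1 = 1$ (i.e.\ $x \ge 1/2$); the boundary case $x_i = 1/2$ needs to be checked to be consistent with either convention since $\tent(1/2) = \mu/2$ is then fed forward. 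Concretely I would prove: if $b_1 = 0$ then $b_{i+1} = b_i'$ for all $i \ge 1$, and if $b_1 = 1$ then $b_{i+1} = \flip{b_i'}$ for all $i \ge 1$. This follows by induction on $i$ directly from the recursion \eqref{def:encode2}, since the rule for $b_{i+1}$ depends only on $b_i$ and the position of $x_i$, and flipping all bits simultaneously is compatible with the flip-on-$x_i > 1/2$ rule.

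Next I would convert this combinatorial relation into an arithmetic one. Let $t = \sum_{i=1}^{\infty} b_{i+1}\mu^{-i}$ be the value associated with the shifted tail. In the case $b_1 = 0$: the shifted tail equals $\enc^{\infty}(x_1)$, so $(\mu-1)t = x_1 = \mu x$ (using $x < 1/2$), hence $t = \mu x/(\mu-1)$. Then
\begin{align}
S(\enc^{\infty}(x)) = (\mu-1)\left( b_1 \mu^{-1} + \mu^{-1} t \right) = (\mu-1)\mu^{-1} t = (\mu-1)\mu^{-1}\cdot \frac{\mu x}{\mu-1} = x .
\end{align}
In the case $b_1 = 1$: each $b_{i+1} = \flip{b_i'} = 1 - b_i'$, so $t = \sum_{i\ge1}\mu^{-i} - \sum_{i\ge1} b_i'\mu^{-i} = \frac{1}{\mu-1} - \frac{x_1}{\mu-1} = \frac{1 - x_1}{\mu-1} = \frac{1 - \mu(1-x)}{\mu-1}$, using $x_1 = \mu(1-x)$ since $x \ge 1/2$. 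Then
\begin{align}
S(\enc^{\infty}(x)) = (\mu-1)\left(\mu^{-1} + \mu^{-1} t\right) = (\mu-1)\mu^{-1}\left(1 + \frac{1-\mu(1-x)}{\mu-1}\right) = (\mu-1)\mu^{-1}\cdot\frac{\mu x}{\mu-1} = x ,
\end{align}
after simplifying $1 + \frac{1-\mu(1-x)}{\mu-1} = \frac{(\mu-1) + 1 - \mu + \mu x}{\mu-1} = \frac{\mu x}{\mu-1}$. So in both cases $S(\enc^{\infty}(x)) = x$, which is the claim.

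There is a well-definedness subtlety I should address to make this rigorous rather than circular: the above argument presupposes that $S(\enc^{\infty}(x))$ is a well-defined real number (the series converges absolutely since $b_i \in \{0,1\}$ and $\mu > 1$, so that is fine) and that the recursion relating the code of $x$ to the code of $x_1$ genuinely holds including at boundary points $x_i = 1/2$. I expect the main obstacle to be exactly this boundary bookkeeping: when $x_i = 1/2$ the map sends it to $\mu/2 > 1/2$ and the code convention picks $b_{i+1} = 1$; I need to verify that the identity still closes up, i.e.\ that substituting $x_i = 1/2$ into both the ``$b_i = 0$'' and ``$b_i = 1$'' arithmetic above yields consistent values, which it does because the rephrased form \eqref{def:encode2} was designed so that the $x_i = 1/2$ case agrees with the $x_i < 1/2$ branch when $b_i = 1$ and with the $x_i \ge 1/2$ branch when $b_i = 0$. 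An alternative, perhaps cleaner route avoiding any induction: show directly that the partial sums satisfy $x - (\mu-1)\sum_{i=1}^{n} b_i \mu^{-i} = \mu^{-n} x_n \cdot(\pm 1 \text{ adjusted})$ — more precisely that $(\mu-1)\sum_{i=1}^n b_i\mu^{-i}$ plus a $\mu^{-n}$-scaled remainder term built from $x_n$ equals $x$, and then let $n \to \infty$ using $\mu^{-n} \to 0$ and $x_n \in [0,1)$ bounded. I would likely present the induction version as the primary proof since the per-step relation is conceptually transparent, and relegate the boundary check to a short remark.
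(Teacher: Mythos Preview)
The paper does not supply its own proof of this proposition; it simply refers the reader to \cite{OK23}, so there is nothing in the paper to compare your route against. That said, your primary argument has a genuine gap.

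Your main computation is circular. In the case $b_1 = 0$ you write ``$(\mu-1)t = x_1$'', and in the case $b_1 = 1$ you use the analogous identity for the code of $x_1$. But that equality \emph{is} the proposition applied to $x_1 = \tent(x)$; invoking it to deduce the proposition for $x$ is not an induction, because there is no well-founded descent---the orbit $x, x_1, x_2,\ldots$ need not terminate or become periodic. What your one-step calculation actually establishes is a functional relation: writing $g(x) \defeq S(\enc^{\infty}(x))$, you have shown
\[
g(x) = \mu^{-1} g(\tent(x)) \quad (b_1 = 0), \qquad g(x) = 1 - \mu^{-1} g(\tent(x)) \quad (b_1 = 1),
\]
and the identity map $x \mapsto x$ satisfies the same two relations. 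To close the argument you still need the missing step
\[
|g(x) - x| = \mu^{-1}\,|g(x_1) - x_1| = \cdots = \mu^{-n}\,|g(x_n) - x_n| \le \mu^{-n} \to 0,
\]
using that both $g$ and the identity take values in $[0,1]$. This is short, but it is the entire content of the proof and it is not present in what you wrote.

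Your sketched ``alternative''---showing that $x$ minus the $n$-th partial sum is a bounded quantity times $\mu^{-n}$ and letting $n\to\infty$---is exactly this contraction argument unpacked, and it is the correct, self-contained version. You should make that the primary proof rather than relegating it to a remark; what you currently label the ``induction version'' is not an induction at all. (Your worry about the boundary $x_i = \tfrac12$ is, incidentally, harmless: there the tail fails to be the bitwise complement of $\enc^{\infty}(x_1)$ as you claimed, but since $x_1 = 1 - x_1$ the two candidate tails sum to the same value, so the arithmetic closes up either way.)
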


 See \cite{OK23} for a proof. 
 The proofs are not difficult but lengthy.  
 Thanks to this a little bit artificial definition \eqref{def:encode1},  
   we obtain the following two more facts. 
\begin{proposition}\label{prop:order}
  For any $x, x^{\prime} \in [0, 1)$,
  \begin{align*}
    x \leq x^{\prime} &\Rightarrow \enc_{\mu}^n(x) \preceq \enc_{\mu}^n(x^{\prime})  
  \end{align*}
  hold where $\preceq$ denotes the {\em lexicographic order}, 
   that is $b_{i_*}=0$ and $b'_{i_*}=1$ at $i_* = \min\{j \in \{1,2,\ldots \} \mid b_j \neq b'_j\}$
   for $\enc^n(x) =b_1b_2\cdots b_n$ and  $\enc^n(x^{\prime})=b'_1b'_2\cdots b'_n$
   unless  $\enc^n(x) = \enc^n(x')$. 
\end{proposition}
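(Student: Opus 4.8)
The plan is to prove the statement by induction on $n$, carrying along an auxiliary ``reflected'' encoding so that the recursion closes under the map. Besides $\enc^n$, define $\tilde\enc^n(x) = \tilde b_1 \cdots \tilde b_n$ by the \emph{same} recursion \eqref{def:encode1} on the orbit $x_1, x_2, \dots$, but starting from the opposite first bit: $\tilde b_1 = 1$ if $x \le \frac{1}{2}$ and $\tilde b_1 = 0$ if $x > \frac{1}{2}$. I will prove the following two statements simultaneously, for every $n$: (I) $x \le x' \Rightarrow \enc^n(x) \preceq \enc^n(x')$, and (II) $x \le x' \Rightarrow \tilde\enc^n(x) \succeq \tilde\enc^n(x')$. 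The base case $n = 1$ is immediate from \eqref{def:encode0} and the definition of $\tilde b_1$.

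The heart of the argument is a pair of \emph{shift identities}: for every $y \in [0,1)$,
\[
  \enc^{n+1}(y) = \begin{cases} 0 \cdot \enc^n(\tent(y)) & y < \frac{1}{2}, \\ 1 \cdot \tilde\enc^n(\tent(y)) & y \ge \frac{1}{2}, \end{cases}
  \qquad
  \tilde\enc^{n+1}(y) = \begin{cases} 1 \cdot \tilde\enc^n(\tent(y)) & y \le \frac{1}{2}, \\ 0 \cdot \enc^n(\tent(y)) & y > \frac{1}{2}, \end{cases}
\]
where $\cdot$ denotes concatenation. Each identity is verified by checking the \emph{second} bit of the left-hand code against the \emph{first} bit of the code appearing on the right, using \eqref{def:encode1}--\eqref{def:encode2}, and then noting that from the second bit onward both sides are generated by the identical recursion driven by the same orbit $\tent(y), \tent^2(y), \dots$, so they agree bit by bit. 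All iterates here lie in $[0,1)$, since $1 < \mu < 2$ forces $\tent([0,1)) \subseteq [0, \mu/2] \subsetneq [0,1)$, so every code written above is well-defined. The point of switching between $\enc$ and $\tilde\enc$ when $y$ crosses $\frac{1}{2}$ is precisely to absorb the fact that $\tent$ \emph{reverses} order on $[\frac{1}{2}, 1)$; and the tie-breaking clauses ``$b_{i+1} = 1$ when $x_i = \frac{1}{2}$'' in \eqref{def:encode1} --- the ``a little bit artificial'' feature of the definition --- are exactly what make $b_1$, $\tilde b_1$ and the second bits line up at the boundary points.

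Granting the shift identities, the inductive step is a short case analysis. For (I) at $n+1$ with $x \le x'$: if $x < \frac{1}{2} \le x'$, the two codes already disagree in the first bit ($0$ versus $1$), so $\enc^{n+1}(x) \prec \enc^{n+1}(x')$; if $x' < \frac{1}{2}$, then also $x < \frac{1}{2}$ and $\tent(x) = \mu x \le \mu x' = \tent(x')$, so (I) for $n$ gives $\enc^n(\tent x) \preceq \enc^n(\tent x')$, whence $0 \cdot \enc^n(\tent x) \preceq 0 \cdot \enc^n(\tent x')$; if $x \ge \frac{1}{2}$, then also $x' \ge \frac{1}{2}$ and $\tent(x) = \mu(1-x) \ge \mu(1-x') = \tent(x')$, so the companion (II) for $n$, applied to $\tent(x') \le \tent(x)$, gives $\tilde\enc^n(\tent x) \preceq \tilde\enc^n(\tent x')$, whence $1 \cdot \tilde\enc^n(\tent x) \preceq 1 \cdot \tilde\enc^n(\tent x')$. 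The step for (II) is the mirror image, invoking (I) for $n$ on the branch where $\tent$ reverses order. Because the $\preceq$/$\succeq$ statements permit equality, the case $x < x'$ with $\enc^n(x) = \enc^n(x')$ needs no special handling. The only place I expect genuine friction is pinning down the shift identities at the boundary --- verifying that the auxiliary convention of $\tilde\enc$ and the ``$x_i = \frac{1}{2}$'' clauses of \eqref{def:encode1} agree at the (measure-zero) set of points whose orbit meets $\frac{1}{2}$; once those are established, the induction is essentially bookkeeping.
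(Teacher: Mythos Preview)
Your argument is correct. The shift identities hold exactly as you claim (the boundary cases at $y=\tfrac12$ and $x_i=\tfrac12$ do line up, thanks to the ``artificial'' tie-breaking in \eqref{def:encode1}), and the simultaneous induction on (I) and (II) goes through cleanly.

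The paper does not prove Proposition~\ref{prop:order} in full here; it defers to \cite{OK23} and indicates that Lemma~\ref{lem:encode1} is the key tool. That lemma tracks, by induction, how the \emph{orbits} $x_n,x'_n$ are ordered under the hypothesis $\enc^n(x)=\enc^n(x')$: same order when the last common bit is $0$, reversed when it is $1$. One then looks at the first index where the codes differ and reads off the lexicographic relation. Your approach packages the same dichotomy differently: instead of carrying the orbit order, you carry a \emph{reflected code} $\tilde\enc$ that is order-reversing, and the shift identities switch between $\enc$ and $\tilde\enc$ exactly when $\tent$ flips orientation. This buys you a fully self-contained induction that never needs to speak about $x_n$ versus $x'_n$; the paper's route, by contrast, isolates Lemma~\ref{lem:encode1} as a standalone fact that is reused elsewhere (e.g., in the proof of Theorem~\ref{thm:calc}). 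The two arguments are equivalent in content --- your $\tilde\enc$ is essentially $\enc$ with the opposite half-open convention at $\tfrac12$, which is also what underlies Proposition~\ref{prop:-heikai} --- but yours is a cleaner direct proof of the monotonicity statement in isolation.
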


\begin{proposition}\label{prop:-heikai}
The $n$-th iterated tent code is right continuous, i.e., $\enc_{\mu}^n(x) = \enc_{\mu}^n(x+0)$. 
\end{proposition}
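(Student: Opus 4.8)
The plan is to fix $x \in [0,1)$ and produce a $\delta > 0$ such that $\enc^n_\mu$ is constant on $[x, x+\delta)$; right continuity at $x$ follows immediately, and $x$ was arbitrary. Write $x_j = \tent^j(x)$, and for $x' \in (x, x+\delta)$ write $x_j' = \tent^j(x')$, $\enc^n_\mu(x) = b_1\cdots b_n$, and $\enc^n_\mu(x') = b_1'\cdots b_n'$. First I would record a purely one-dimensional fact about the orbit of $x$: define signs $s_j \in \{+1,-1\}$ by $s_0 = +1$, and $s_{j+1} = s_j$ if $x_j < \frac{1}{2}$, $s_{j+1} = -s_j$ if $x_j > \frac{1}{2}$, and $s_{j+1} = -1$ if $x_j = \frac{1}{2}$. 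A straightforward induction on $j$, using that $\tent$ is continuous and piecewise linear with slopes $\pm\mu$ and attains its maximum at $\frac{1}{2}$ with $\tent(y) < \tent(\frac{1}{2})$ for every $y \ne \frac{1}{2}$, shows the following: if $\mu^{n}\delta$ is smaller than every nonzero value among $|x_j - \frac{1}{2}|$ for $0 \le j \le n-1$, then for every $x' \in (x, x+\delta)$ and every $j \in \{0,\dots,n-1\}$ one has $0 < |x_j' - x_j| \le \mu^{j}(x'-x)$ and $\sgn(x_j' - x_j) = s_j$; in particular, when $x_j = \frac{1}{2}$ the point $x_j'$ lies strictly on the side of $\frac{1}{2}$ prescribed by $s_j$, and when $x_j \ne \frac{1}{2}$ the point $x_j'$ lies strictly on the same side of $\frac{1}{2}$ as $x_j$.

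The key step, and the one place the ``artificial'' tie-breaking rule in \eqref{def:encode1} is used, is the claim that $s_j = (-1)^{b_j}$ for every $j \in \{1,\dots,n-1\}$. I would prove this by induction on $j$, matching the recursion for $s_j$ against \eqref{def:encode1}: the base case $j = 1$ is \eqref{def:encode0}; and for the step, if $x_j < \frac{1}{2}$ then $b_{j+1} = b_j$ while $s_{j+1} = s_j$, if $x_j > \frac{1}{2}$ then $b_{j+1} = \flip{b_j}$ while $s_{j+1} = -s_j$, and if $x_j = \frac{1}{2}$ then $b_{j+1} = 1$, so $(-1)^{b_{j+1}} = -1 = s_{j+1}$. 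Hence $(-1)^{b_{j+1}} = s_{j+1}$ once $(-1)^{b_j} = s_j$. This is the main obstacle of the proof: the bookkeeping of signs against bit flips has to line up exactly, and it does so only because the code is forced to the value $1$ at $x_j = \frac{1}{2}$ rather than being left free or set to $0$.

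Finally I would show $b_{j+1}' = b_{j+1}$ for every $j \in \{0,\dots,n-1\}$ by induction on $j$, using the rephrasing \eqref{def:encode2} together with $b_j' = b_j$. If $x_j \ne \frac{1}{2}$, then $x_j'$ is strictly on the same side of $\frac{1}{2}$ as $x_j$ by the first paragraph, so \eqref{def:encode2} returns the same bit for $(b_j', x_j')$ as for $(b_j, x_j)$, namely $b_{j+1}$. If $x_j = \frac{1}{2}$ with $j \ge 1$, then $\sgn(x_j' - \frac{1}{2}) = s_j = (-1)^{b_j}$, so $x_j' > \frac{1}{2}$ precisely when $b_j = 0$; inspecting the four lines of \eqref{def:encode2} shows the output is $1$ in both the $b_j = 0$ and the $b_j = 1$ subcase, which matches $b_{j+1} = 1$. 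The remaining case $x_0 = \frac{1}{2}$ is immediate, since $x_0' = x' > \frac{1}{2}$ forces $b_1' = 1 = b_1$ by \eqref{def:encode0}. Choosing $\delta$ as in the first paragraph completes the proof. (Incidentally, the argument shows that $\enc^n_\mu$ is two-sided continuous at every $x$ whose orbit $x_0,\dots,x_{n-1}$ avoids $\frac{1}{2}$, and that right continuity is genuinely one-sided at the finitely many remaining points.)
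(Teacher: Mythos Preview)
Your proof is correct. The paper does not give an explicit proof here but points to Lemmas~\ref{lem:encode1} and~\ref{lem:encode3} as the intended tools; your sign-tracking identity $s_j = (-1)^{b_j}$ is exactly the content of Lemma~\ref{lem:encode1} (the sign of $x_j' - x_j$ is governed by the common bit $b_j$), so your argument is essentially the approach the paper has in mind, written out in full.
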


 These two facts make the arguments simple.  
 The following technical lemmas are useful 
  to prove Propositions~\ref{prop:order} and \ref{prop:-heikai}, 
   as well as the arguments in Sections~\ref{sec:calc} and \ref{sec:smooth-recog}. 
\begin{lemma}\label{lem:encode1}
  Suppose $x,x' \in [0,1)$ satisfy $x < x'$. 
 If $\enc^n(x) = \enc^n(x')$ then 
  \begin{align}
  \begin{cases}
    x_n < x^{\prime}_n  & \mbox{if $b_n = b^{\prime}_n = 0$, }\\ 
    x_n > x^{\prime}_n & \mbox{if $b_n = b^{\prime}_n = 1$}
 \end{cases}
  \label{eq:tentexpansion_lexicography_1}
  \end{align}
  holds. 
\end{lemma}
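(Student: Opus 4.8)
The plan is to prove the statement by induction on $n$, using only two ingredients: the tent map $\tent$ is strictly increasing on $[0,\frac{1}{2}]$ and strictly decreasing on $[\frac{1}{2},1)$ (both intervals containing the common value $\tent(\frac{1}{2})=\mu/2$), and the rephrased recursion \eqref{def:encode2}. I first record the preliminary remark that, since $1<\mu<2$, every iterate $x_i=\tent^i(x)$ with $i\geq 1$ lies in $[0,\mu/2]\subseteq[0,1)$, so for each such iterate exactly one of $x_i<\frac{1}{2}$, $x_i\geq\frac{1}{2}$ holds and the monotonicity facts above apply. The base case $n=1$ is immediate: $b_1=b_1'$ forces $x$ and $x'$ onto the same side of $\frac{1}{2}$; if $b_1=b_1'=0$ then $x,x'\in[0,\frac{1}{2})$ and $x<x'$ yields $x_1<x_1'$ by monotonicity, while if $b_1=b_1'=1$ then $x,x'\in[\frac{1}{2},1)$ and $x<x'$ yields $x_1>x_1'$.

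For the inductive step, assume the claim for $n$ and let $x<x'$ satisfy $\enc^{n+1}(x)=\enc^{n+1}(x')$. Then $\enc^{n}(x)=\enc^{n}(x')$ as well, so the hypothesis at level $n$ gives $x_n<x_n'$ when $b_n=b_n'=0$ and $x_n>x_n'$ when $b_n=b_n'=1$; in either case $x_n\neq x_n'$. Now apply \eqref{def:encode2} with $b_n=b_n'$ and $b_{n+1}=b_{n+1}'$: the ordered pair $(b_n,b_{n+1})$ pins down the side of $\frac{1}{2}$ on which both $x_n$ and $x_n'$ sit. Indeed, when $b_n=0$, $b_{n+1}=0$ forces $x_n,x_n'<\frac{1}{2}$ and $b_{n+1}=1$ forces $x_n,x_n'\geq\frac{1}{2}$; when $b_n=1$, $b_{n+1}=1$ forces $x_n,x_n'\leq\frac{1}{2}$ and $b_{n+1}=0$ forces $x_n,x_n'>\frac{1}{2}$. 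In each of the four cases $x_n$ and $x_n'$ lie in a common interval of monotonicity of $\tent$; feeding the already-known order of $x_n,x_n'$ through strict monotonicity of $\tent$ on that interval produces precisely $x_{n+1}<x_{n+1}'$ in the two cases with $b_{n+1}=0$ and $x_{n+1}>x_{n+1}'$ in the two cases with $b_{n+1}=1$, which is the claim for $n+1$.

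I do not expect a real obstacle; the only points needing care are the boundary value $x_i=\frac{1}{2}$ --- at which \eqref{def:encode2} sets $b_{i+1}=1$ and at which the two linear pieces of $\tent$ agree, so the endpoint is harmlessly absorbed into whichever closed interval ($[0,\frac{1}{2}]$ or $[\frac{1}{2},1)$) the case analysis dictates --- and the hypothesis $1<\mu<2$, which is what keeps every iterate inside $[0,1)$ and hence makes the side-of-$\frac{1}{2}$ dichotomy exhaustive. The same inductive scheme, incidentally, is what underlies Propositions~\ref{prop:order} and \ref{prop:-heikai}.
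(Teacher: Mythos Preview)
Your inductive argument is correct. The paper itself does not give a proof of this lemma; in the proof of Theorem~\ref{thm:calc} it remarks that the argument ``is based on the argument of segment type (see \cite{OK23})'', so the intended proof lives in the companion reference and apparently passes through the segment-type machinery of Section~\ref{sec:OK23}. Your route is more elementary: a direct induction on $n$ using only the piecewise strict monotonicity of $\tent$ and the case split of~\eqref{def:encode2}, with no appeal to the intervals $\typefn(\bitseq_n)$. That self-containedness is a virtue here, since the lemma is stated precisely as a tool \emph{for} establishing the order-preservation properties (Propositions~\ref{prop:order} and~\ref{prop:-heikai}) that the segment-type picture later relies on; conversely, the segment-type framework packages this and several related monotonicity facts into a single recursion (Lemma~\ref{lem:transition}), which is tidier once the automaton becomes the main object of interest.
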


\begin{lemma}\label{lem:encode3}
 If $\enc^n(x) = \enc^n(x')$ for  $x,x' \in[0,1)$ then $|\tent^n(x) - \tent^n(x')| = \mu^n|x-x'|$. 
\end{lemma}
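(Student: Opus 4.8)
The plan is to show that, under the hypothesis $\enc^n(x)=\enc^n(x')$, the orbit segments of length $n$ starting at $x$ and $x'$ are never separated by the ``fold'' of the tent map, so that $\tent^n$ acts affinely with slope $\pm\mu^n$ on a segment joining the two points. Write $x_i=\tent^i(x)$ and $x'_i=\tent^i(x')$ as in the statement. I would first establish the claim that for every $i\in\{0,1,\dots,n-1\}$ the points $x_i$ and $x'_i$ lie in a common closed half of $[0,1]$ --- either both in $[0,\tfrac12]$ or both in $[\tfrac12,1]$. Granting this, the lemma follows by a one-line induction: on $[0,\tfrac12]$ the map is $t\mapsto\mu t$ and on $[\tfrac12,1]$ it is $t\mapsto\mu(1-t)$, both affine of slope $\pm\mu$, so $|x_{i+1}-x'_{i+1}|=\mu\,|x_i-x'_i|$ for each $i$; telescoping from $i=0$ to $i=n-1$ yields $|\tent^n(x)-\tent^n(x')|=\mu^n|x-x'|$.

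The heart of the argument is the common-half claim, and this is exactly where the bit-flip convention \eqref{def:encode1} pays off. Note that $\enc^n(x)=\enc^n(x')$ forces $b_j=b'_j$ for all $j\le n$, since $\enc^j$ is the length-$j$ prefix of $\enc^n$. Suppose for contradiction that at some step $i$ one of $x_i,x'_i$ is strictly below $\tfrac12$ and the other strictly above. If $i=0$, then $b_1\neq b'_1$ by \eqref{def:encode0}, a contradiction. If $1\le i\le n-1$, then reading off \eqref{def:encode1}: the point that is $<\tfrac12$ keeps its bit ($b_{i+1}=b_i$) while the point that is $>\tfrac12$ flips it ($b'_{i+1}=\flip{b'_i}=\flip{b_i}$), so $b_{i+1}\neq b'_{i+1}$, again a contradiction. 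Hence no such $i$ exists, which is precisely the common-half claim.

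The only point requiring a little care --- and the likeliest place to slip --- is the degenerate case $x_i=\tfrac12$ (or $x'_i=\tfrac12$), which is why I phrase the claim with \emph{closed} halves: if $x_i=\tfrac12$ then $x_i$ lies in whichever closed half contains $x'_i$, so the pair still shares a half, and the two affine formulas for $\tent$ agree at $\tfrac12$, so the slope-$\mu$ equality is unaffected. (One should also observe that every iterate $x_1,x_2,\dots$ lies in $[0,\mu/2]\subseteq[0,1)$ since $\mu<2$, so no iterate hits the right endpoint $1$ and the two affine branches cover all cases.) I would finally remark that the same non-folding observation drives Lemma~\ref{lem:encode1} as well: once $x_i,x'_i$ remain in a common half at every step, the orientation between $x_i$ and $x'_i$ flips exactly at the steps where that half is $[\tfrac12,1]$, i.e.\ it is governed by $b_i$, so Lemmas~\ref{lem:encode1} and \ref{lem:encode3} could be proved in tandem.
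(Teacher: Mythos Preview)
Your argument is correct. The key observation---that equal length-$n$ codes force $x_i$ and $x'_i$ into a common closed half of $[0,1]$ at every step $i<n$, whence $\tent$ acts as an affine map of slope $\pm\mu$ between them---is exactly the right one, and your treatment of the degenerate case $x_i=\tfrac12$ via closed halves is clean.

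The paper itself does not supply a proof of this lemma: it is stated as one of two ``technical lemmas'' supporting Propositions~\ref{prop:order} and~\ref{prop:-heikai}, with the surrounding proofs deferred to~\cite{OK23}. Your write-up is the natural elementary argument and would serve as the missing proof. Your closing remark that Lemma~\ref{lem:encode1} follows from the same non-folding observation (the orientation of $x_i$ versus $x'_i$ flips precisely when the common half is $[\tfrac12,1]$, i.e.\ is tracked by $b_i$) is also on the mark and matches how the paper packages the two lemmas together.
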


 Let ${\cal L}_{n,\mu}$ (or simply ${\cal L}_n$) denote the set of all $n$-bits tent codes, 
 i.e., 
\begin{align}
{\cal L}_n = \left\{ \enc_{\mu}^{n}(x) \in \{0,1\}^n \ \middle|\ x \in [0,1) \right\}
\label{def:tent-lang}
\end{align} 
 and we call ${\cal L}_n$ {\em tent language} (by $\mu \in (1,2)$). 
 Note that ${\cal L}_n \subsetneq \{0,1\}^n$ for $\mu \in (1,2)$. 
 We say $\bitseq_n \in \{0,1\}^n$ is a {\em valid} tent code if $\bitseq_n \in {\cal L}_n$.

\subsection{What is the issue?}\label{sec:issue}
 A natural problem for the tent code could be {\em calculation}:  
  given $x \in (0,1]$ and $n \in \mathbb{Z}_{>0}$, find $\enc^n(x)$. 
By a standard argument (see e.g., \cite{KV}), 
 it requires $\Theta(n \log \mu)$ working space to compute $\tent^n(x)$, in general.  
Thus, it is natural in practice to employ 
 {\em rounding-off}, like Algorithm~\ref{alg:calc-wrong}.  

\begin{algorithm}[t]
    \caption{Rounding-off could output an {\em invalid} code}
    \label{alg:calc-wrong}
    \begin{algorithmic}[1]
    \REQUIRE $x \in [0,1]$ 
    \ENSURE a bit sequence $b_1\cdots b_n$ \COMMENT{$ b_1\cdots b_n \not\in {\cal L}_n$ in bad cases}
    \STATE {\bf set} {\rm int} $\kappa$ large constant 
    \STATE {\rm rational} $z \leftarrow \langle x \rangle_{\kappa}$ \COMMENT{Round off by $\kappa$ bits (or digits)}
    \STATE {\rm bit} $b \leftarrow 0$
    \FOR{$i=1$ to $n$}
    \IF{$b=0$}
    \STATE  {\bf if} $z<\frac{1}{2}$ {\bf then} $b \leftarrow 0$, {\bf else} $b \leftarrow 1$\COMMENT{recall \eqref{def:encode2}}
    \ELSE
    \STATE  {\bf if} $z > \frac{1}{2}$ {\bf then} $b \leftarrow 0$, {\bf else} $b \leftarrow 1$\COMMENT{recall \eqref{def:encode2}}
    \ENDIF
    \RETURN $b$ \COMMENT{as $b_i$}
    \STATE $z \leftarrow \langle \tent(z) \rangle_{\kappa}$
    \ENDFOR
    \end{algorithmic}
\end{algorithm}
 Due to the sensitivity to initial condition of a chaotic sequence, 
   we cannot expect that Algorithm~\ref{alg:calc-wrong} to output $b_1\cdots b_n = \enc^n(x)$ exactly, 
   but we hope that it would output some {\em approximation}. 
 It could be a natural question 
  whether  the output $b_1\cdots b_n \in {\cal L}_n $. 
 The following proposition means that 
   Algorithm~\ref{alg:calc-wrong} could output an impossible tent code. 


\begin{proposition}\label{prop:round}
Algorithm~\ref{alg:calc-wrong} could output $b_1\cdots b_n \not \in {\cal L}_n$. 
\end{proposition}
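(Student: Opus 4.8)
The plan is to exhibit a single explicit bad instance rather than to argue abstractly. I would fix $\mu=\tfrac32$, use rounding precision $\kappa=2$ (binary places), let $\langle\cdot\rangle_{\kappa}$ be truncation (rounding toward $0$), and execute Algorithm~\ref{alg:calc-wrong} on the input $x=\tfrac12$. Tracing the loop by hand: $z$ is initialized to $\langle\tfrac12\rangle_2=\tfrac12$; then it becomes $\langle\tent(\tfrac12)\rangle_2=\langle\tfrac34\rangle_2=\tfrac34$; then $\langle\tent(\tfrac34)\rangle_2=\langle\tfrac38\rangle_2=\tfrac14$; and since $\tent(\tfrac14)=\tfrac38$ again truncates to $\tfrac14$, the value $z=\tfrac14$ is a fixed point of the rounded iteration and $z$ never changes afterwards. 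Following the bit $b$ along the loop (it is set to $1$ in step $1$ because the test $z<\tfrac12$ fails, to $0$ in step $2$ because $z=\tfrac34>\tfrac12$, and it stays $0$ in every later step because $z=\tfrac14<\tfrac12$), the algorithm outputs $b_1\cdots b_n=1\,0^{n-1}$.

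The second step is to prove $1\,0^{n-1}\notin{\cal L}_n$ for every $n\ge4$, which finishes the proof. Since the length-$4$ prefix of $\enc^n(x)$ equals $\enc^4(x)$ by definition, it suffices to show $1000\notin{\cal L}_4$, i.e.\ that no $x\in[0,1)$ satisfies $\enc^4(x)=1000$. Suppose one did, and put $x_i=\tent^i(x)$. Reading off \eqref{def:encode2} step by step: $b_1=1$ forces $x_0\ge\tfrac12$, hence $x_1=\mu(1-x_0)$ and $x_1\le\mu/2$; next $b_1b_2=10$ forces $x_1>\tfrac12$, hence $x_2=\mu(1-x_1)$, and combining $\tfrac12<x_1\le\mu/2$ yields $x_2\ge\mu(1-\mu/2)$; next $b_2b_3=00$ forces $x_2<\tfrac12$ (so that $x_3=\tent(x_2)=\mu x_2$); finally $b_3b_4=00$ forces $x_3<\tfrac12$, i.e.\ $x_2<\tfrac1{2\mu}$. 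Thus $1000\in{\cal L}_4$ would require $\mu(1-\mu/2)<\tfrac1{2\mu}$, that is $\mu^2(2-\mu)<1$, equivalently $(\mu-1)(\mu^2-\mu-1)>0$; but for $1<\mu<\tfrac{1+\sqrt5}{2}$ the factor $\mu-1$ is positive while $\mu^2-\mu-1$ is negative, a contradiction. Since $\mu=\tfrac32$ lies in this range, $1000\notin{\cal L}_4$, and therefore $1\,0^{n-1}\notin{\cal L}_n$ for all $n\ge4$; the output computed in the first step is an invalid tent code.

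The point that needs care is that the statement of Algorithm~\ref{alg:calc-wrong} does not fully fix the operator $\langle\cdot\rangle_{\kappa}$: the value $\tfrac38$ is exactly halfway between the two admissible $2$-bit values $\tfrac14$ and $\tfrac12$, so a round-to-nearest rule breaking ties upward would generate a different orbit. I would settle this either by phrasing the example under whatever rounding convention the paper actually adopts, or by perturbing the instance (for example $\kappa=3$ together with a starting value whose rounded orbit encounters no tie) so that the produced word is the same forbidden string regardless of the convention; the verification in the second paragraph is unaffected. A secondary thing to get right is the orientation-flipping nature of the encoding in \eqref{def:encode1}--\eqref{def:encode2}: the backward interval bookkeeping must use exactly those case splits, since a single misplaced bit inversion would destroy the contradiction. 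Apart from this, the argument is elementary arithmetic, so I expect no real obstacle.
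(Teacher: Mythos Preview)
Your proof is correct and follows the same strategy as the paper: exhibit one explicit run of Algorithm~\ref{alg:calc-wrong} whose output string cannot be a tent code. The paper takes $\mu=1.62$ (just \emph{above} the golden ratio) with $\kappa=8$ binary places and floor rounding, starts from $x=\tfrac12$, and asserts that the resulting $14$-bit output lies outside ${\cal L}_{14}$; you take $\mu=\tfrac32$ (just \emph{below} the golden ratio) with $\kappa=2$ and truncation, and obtain the eventually constant word $10^{n-1}$. The main practical difference is that your argument is self-contained: the interval bookkeeping reducing ``$1000\in{\cal L}_4$'' to the inequality $\mu^2(2-\mu)<1$, and the factorisation $(\mu-1)(\mu^2-\mu-1)$, give a clean closed-form proof of invalidity that works for every $\mu$ below the golden ratio, whereas the paper simply states that its output is not in ${\cal L}_{14}$ without justification in the text. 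Your caveat about the tie at $3/8$ is well placed, but since the paper's own proof explicitly adopts floor rounding ($\langle z\rangle_\kappa=\lfloor 2^\kappa z\rfloor/2^\kappa$), your choice of truncation is already consistent with its convention and no perturbation is needed.
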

\begin{proof}
Let $\mu = 1.62$, 
  which is slightly greater than the golden ratio $\frac{1+\sqrt{5}}{2} \simeq 1.61803\ldots$, 
  where  the golden ratio is a solution of $\mu^2 - \mu - 1 = 0$. 
By our calculation, $\enc_{\mu}^{15}(\frac{1}{2}) = 100\, 011\, 011\, 011\, 011$, and it is, of course, a word in ${\cal L}_{15}$.  
If we set $\langle z \rangle_8 = \frac{\lfloor 2^8 z \rfloor}{2^8}$, meaning that round down the nearest to $2^{-9}$, then 
Algorithm~\ref{alg:calc-wrong} outputs $100\, 011\, 011\, 011\, 00$, which is not a word of ${\cal L}_{14}$. 
Similarly for the same $\mu$, 
if we set $\langle z \rangle = \frac{\lfloor 1000 z \rfloor}{1000}$, meaning that round down the nearest to $10^{-3}$, then 
Algorithm~\ref{alg:calc-wrong} outputs $100\, 011\, 011\, 011\, 010$, which is not a word of ${\cal L}_{15}$.  
\end{proof}

 Proposition~\ref{prop:round} might not be surprising. 
 Can we correct Algorithm~\ref{alg:calc-wrong} so as to output $b_1 \dots b_n \in {\cal L}_n$? 
 Yes it is possible if we set $\kappa = \Theta(n)$, that is sufficiently precise. 
 Clearly, it requires $\Theta(n)$ working space. 
 Then, it is validate to ask 
   {\em if it is possible to generate/recognize $b_1 \dots b_n \in {\cal L}_n$ in $\order(n)$}. 

 


\subsection{Problems and Results}\label{sec:result}
 The following problems could be natural in the sense of computational complexity of tent codes. 
\begin{problem}[Decision]\label{prob:dec}
 Given 
   a real $x \in [0,1)$ and a bit sequence $\bitseq_n \in \{0,1\}^n$, 
  decide  if $\bitseq_n = \enc^n(x)$. 
\end{problem}
\begin{problem}[Calculation]\label{prob:calc}
 Given a real $x \in [0,1)$ and a positive integer $n$, 
  find $\enc^n(x) \in \{0,1\}^n$. 
\end{problem}
 Recalling Proposition~\ref{prop:round}, 
  it seems difficult to solve Problems~\ref{prob:dec} and~\ref{prob:calc} exactly, in $\order(n)$ space.  
 Then, we consider to compute a {\em valid} tent code {\em around} $x$. 
 Let $0 < \epsilon \ll 1$, we define 
\begin{align*}
 {\cal L}_n(x,\epsilon) &= \{ \enc^n(x') \mid x-\epsilon \leq x' \leq x+\epsilon  \}
\end{align*}
  for $x \in [0,1)$. 
 It is equivalently rephrased by 
\begin{align}
 {\cal L}_n(x,\epsilon) &= \{ \bitseq \in {\cal L}_n \mid \enc^n(x-\epsilon) \preceq \bitseq \preceq \enc^n(x+\epsilon)  \} 
\label{eq:approxL}
\end{align}
  by Propositions \ref{prop:tent-expansion} and \ref{prop:order}. 
 For the calculation Problem~\ref{prob:calc},  we establish the following simple theorem.  
\begin{theorem}[Approximate calculation]\label{thm:calc}
  Let $\mu \in (1, 2)$ be rational given by an irreducible fraction $\mu = c/d$, and 
   let $0 < \epsilon <1/4$. 
 Given a real\footnote{By a read only tape of infinite length. } $x \in [0,1)$, 
  Algorithm~\ref{alg:calc-improve} described in Section~\ref{sec:calc} outputs $\bitseq_n \in {\cal L}_n(x,\epsilon)$. 
  The space complexity of Algorithm~\ref{alg:calc-improve} is $\Order(\lg^2 \epsilon^{-1} \lg d / \lg^2 \mu + \lg n)$. 
\end{theorem}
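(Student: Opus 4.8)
The plan is to design Algorithm~\ref{alg:calc-improve} as a ``rounding-off with automatic precision refinement'' procedure: instead of fixing $\kappa = \Theta(n)$ up front, maintain an interval $[\ell, r] \subseteq [0,1)$ that is guaranteed to contain a perturbed point $x' \in [x-\epsilon, x+\epsilon]$, and track both endpoints symbolically (as rationals with denominators that are powers of $\mu = c/d$, following Proposition~\ref{prop:tent-expansion}) under iteration of $\tent$. First I would initialize $[\ell, r]$ to a dyadic (or $\mu$-adic) approximation of $[x-\epsilon/2, x+\epsilon/2]$ read off the input tape to enough bits that $r - \ell < \epsilon$; crucially this only costs $\Order(\lg \epsilon^{-1})$ space to specify. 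At each step $i$, I apply $\tent$ to the interval: if the whole interval lies on one side of $1/2$, the map is linear there and I emit the corresponding bit $b_i$ via \eqref{def:encode2}, updating $[\ell,r] \mapsto [\mu\ell, \mu r]$ or $[\mu(1-r), \mu(1-\ell)]$; this multiplies the width by $\mu$, so after $j$ consecutive such steps the width has grown by $\mu^j$ and I must have re-refined — this is where the Hofbauer-tower / renormalization idea from \cite{OK23} enters.

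The key mechanism is: whenever the interval straddles $1/2$ (so the next bit is ambiguous) or whenever its width would exceed a fixed threshold (say $1/4$), I \emph{renormalize} — I recompute a fresh, sharper rational approximation of the current value of $\tent^i(x')$ directly, by going back to the original input $x$ on the read-only tape and iterating $\tent$ from scratch with a larger $\kappa$. Because by Lemma~\ref{lem:encode3} the width after $i$ steps without renormalization is exactly $\mu^i$ times the initial width, and a renormalization to precision $\kappa$ buys us roughly $\kappa \lg 2 / \lg \mu$ steps before the next refinement is forced, the number of renormalizations over $n$ steps is $\Order(n \lg \mu / \kappa)$, and choosing $\kappa = \Theta(\lg \epsilon^{-1} \lg d / \lg \mu)$ balances the accumulated rounding error (each of the $\le \kappa/\lg\mu$ steps inside one block contributes at most a $\mu$-factor blow-up of a $\mu^{-\kappa}$-scale error, which must stay below $\epsilon$). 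The space to store one $\mu$-adic rational of this precision is $\Order(\kappa \lg d) = \Order(\lg^2 \epsilon^{-1} \lg d / \lg \mu)$ bits; dividing by an extra $\lg\mu$ accounts for the fact that $\kappa$ bits of a base-$\mu$ expansion is only $\kappa/\lg\mu$ base-2 digits of ``useful'' information, which is exactly the $\lg^2\mu$ in the denominator of the claimed bound. Finally, the loop counter $i$ up to $n$ needs $\Order(\lg n)$ space, giving the additive $\lg n$ term.

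I would then verify correctness via Proposition~\ref{prop:order} and \eqref{eq:approxL}: the interval $[\ell, r]$ always contains the tent-code-value of some genuine $x' \in [x-\epsilon, x+\epsilon]$ (I must maintain this invariant across every renormalization, which is where right-continuity, Proposition~\ref{prop:-heikai}, and the monotonicity of $\enc^n$ are used to argue that emitting the bit common to the whole interval is consistent with \emph{some} point in it), and since the emitted bits $b_1 \cdots b_i$ are always a common prefix of $\enc^i(\ell)$ and $\enc^i(r)$, the final output $\bitseq_n$ satisfies $\enc^n(x-\epsilon) \preceq \bitseq_n \preceq \enc^n(x+\epsilon)$, hence $\bitseq_n \in {\cal L}_n(x,\epsilon)$.

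The main obstacle I anticipate is the bookkeeping of the renormalization trigger and the proof that the invariant ``$[\ell,r]$ contains $\tent^i(x')$ for some fixed-in-advance $x'$'' survives re-approximation — one has to be careful that shrinking the interval around the \emph{current} value does not accidentally exclude every preimage-consistent $x'$ from the \emph{original} $\epsilon$-ball, and that the total number of renormalizations (hence total time, and the fact that each renormalization can be recomputed from the read-only tape in the allotted space) is correctly bounded. Getting the constants in $\kappa$ right so that the error never pushes $\tent^i(x')$ outside the band while keeping $\kappa = \Theta(\lg \epsilon^{-1} \lg d / \lg \mu)$ is the delicate quantitative step; the rest is a routine composition of standard space-efficient-algorithm techniques.
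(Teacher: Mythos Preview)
Your renormalization scheme has a genuine gap that breaks the space bound. When you ``go back to the original input $x$ on the read-only tape and iterate $\tent$ from scratch with a larger $\kappa$'' at step $i$, the intermediate values of that re-iteration must themselves be stored. Since each application of $\tent$ multiplies any stored error by $\mu$, to have \emph{any} precision left after $i$ applications you must carry $\Omega(i\lg\mu)$ bits through the recomputation. Thus a single renormalization at step $i$ already costs $\Omega(i)$ working space, regardless of how many bits of $x$ you are allowed to read from the tape; the claimed $\Order(\lg^2\epsilon^{-1}\lg d/\lg^2\mu + \lg n)$ cannot be achieved this way. (Your arithmetic for $\kappa$ and the resulting space also does not match the target bound, which is a symptom of the same issue.)

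The paper's algorithm avoids tracking $x$ beyond a fixed horizon. It runs in two phases. Phase~1 covers only the first $\kappa = \lceil -3\lg\epsilon/\lg\mu\rceil$ steps and is just Algorithm~\ref{alg:calc-wrong} with that fixed precision, producing a prefix $b_1\cdots b_\kappa$. Phase~2 then \emph{discards $x$ entirely}: for $i>\kappa$ the algorithm simply walks the Hofbauer automaton of Section~\ref{sec:OK23}, at each state choosing whichever bit $b$ gives the smaller level $\delta(l,b)$. Lemma~\ref{lem:back2} guarantees that such a greedy walk never rises above level $2\kappa$, so only the portion of the automaton up to level $2\kappa$ need be stored, costing $\Order(\kappa^2\lg d)=\Order(\lg^2\epsilon^{-1}\lg d/\lg^2\mu)$ bits by Lemma~\ref{lemm:space_of_stg}, plus $\Order(\lg n)$ for the loop counter. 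Correctness of Phase~2 is automatic (any automaton path is in $\mathcal L_n$); the only thing to check is that the prefix $b_1\cdots b_\kappa$ already lies lexicographically between $\enc^\kappa(x-\epsilon)$ and $\enc^\kappa(x+\epsilon)$, which the paper does by a two-step sandwich: $\enc^\kappa(x-\epsilon)\prec\enc^\kappa(x-\epsilon/2)$ via Lemma~\ref{lem:encode3} (the gap $\epsilon/2$ blows up past $1$ after $\kappa$ steps), and $\enc^\kappa(x-\epsilon/2)\preceq b_1\cdots b_\kappa$ by checking that the per-step rounding error $2^{-\kappa}\le(\mu-1)\epsilon/2$ never flips the order maintained in Lemma~\ref{lem:encode1}. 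The missing idea in your proposal is precisely this Phase~2: once the prefix is pinned down, \emph{any} valid tent-code extension of it lies in $\mathcal L_n(x,\epsilon)$, so one may follow the automaton blindly without ever looking at $x$ again.
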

 Then, the following theorem for Problem~\ref{prob:dec} is the main result of the paper.  
\begin{theorem}[Decision for $\epsilon$-perturbed input]\label{thm:smooth-recog}
  Let $\mu \in (1, 2)$ be rational given by an irreducible fraction $\mu = c/d$, and 
   let $0 < \epsilon <1/4$. 
 Given a bit sequence $\bitseq_n \in \{0,1\}^n$ and 
  a real\footnote{By a read only tape of infinite length. } $x \in [0,1)$, 
 Algorithm~\ref{alg:recog} described in Section~\ref{sec:smooth-recog} accepts it if $\bitseq_n \in {\cal L}_n(x,\epsilon)$ and 
  rejects it if  $\bitseq_n \not \in {\cal L}_n(x,2\epsilon)$. 
 If an ($\epsilon$-perturbed) instance $\bitseq_n$ is given by $\bitseq_n = \enc^n(X)$ for $X \in [x-\epsilon,x+\epsilon]$ uniformly  at random 
  then the space complexity of Algorithm~\ref{alg:recog} is $\Order(\lg^2 n/\lg^3 d + \lg \epsilon^{-1} /\lg d)$ in expectation.  
\end{theorem}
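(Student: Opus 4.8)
The plan is to reduce the question to two tests --- membership $\bitseq_n \in {\cal L}_n$, and whether the cylinder $C_n \defeq \{\, y \in [0,1) : \enc^n(y) = \bitseq_n \,\}$ meets the window $[x-\epsilon, x+\epsilon]$ --- and then to spend the gap between the accept threshold ($\epsilon$) and the reject threshold ($2\epsilon$) on representing $C_n$ only approximately. Directly from the definition of ${\cal L}_n(x,\epsilon)$ (equivalently from the rephrasing~\eqref{eq:approxL}, obtained via Propositions~\ref{prop:tent-expansion} and~\ref{prop:order}), one has $\bitseq_n \in {\cal L}_n(x,\epsilon)$ if and only if $\bitseq_n \in {\cal L}_n$ and $C_n \cap [x-\epsilon,x+\epsilon] \neq \emptyset$; moreover, by Lemma~\ref{lem:encode3}, $C_n$ is an interval on which $\tent^n$ is affine with slope $\pm\mu^n$, so $|C_n| \le \mu^{-n}$, and the cylinders nest, $C_1 \supseteq C_2 \supseteq \cdots \supseteq C_n$.

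Algorithm~\ref{alg:recog} reads $b_1,\dots,b_n$ from left to right while maintaining, through the Markov extension / Hofbauer tower imported in Section~\ref{sec:OK23}, the state of the cylinder $C_i = \{\, y : \enc^i(y) = b_1\cdots b_i \,\}$: a tower level together with the orientation and parameters of the affine map $\tent^i|_{C_i}$ onto the corresponding piece of the Markov partition. Applying $b_{i+1}$ either shrinks $C_i$ to one of the two sub-pieces dictated by~\eqref{def:encode2} or makes it empty; in the latter case $b_1\cdots b_i \notin {\cal L}_i$, hence $\bitseq_n \notin {\cal L}_n(x,2\epsilon)$, so rejecting is sound. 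Since $|C_i| \le \mu^{-i}$, after $i_0 = \Order(\lg\epsilon^{-1}/\lg\mu)$ steps the cylinder has width at most $\epsilon/4$; the algorithm therefore records, once, a rounded interval $\widetilde C_{i_0} \supseteq C_{i_0}$ of width at most $\epsilon/2$, computed to precision $\Theta(\epsilon)$ by the valid-approximation technique of Section~\ref{sec:calc} (Theorem~\ref{thm:calc}) in $\Order(\lg\epsilon^{-1}/\lg d)$ work space, then discards the location information and continues only the tower bookkeeping. At the end it accepts iff the tower never emptied and $\widetilde C_{i_0}$ meets $[x-\epsilon,x+\epsilon]$. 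Correctness: if $C_n \cap [x-\epsilon,x+\epsilon] \neq \emptyset$ then, as $C_n \subseteq C_{i_0} \subseteq \widetilde C_{i_0}$, the test passes and $C_n \neq \emptyset$, so the algorithm accepts; conversely, if it accepts then $\widetilde C_{i_0}$, of width $\le \epsilon/2$, meets $[x-\epsilon,x+\epsilon]$, hence $\widetilde C_{i_0} \subseteq [x-2\epsilon,x+2\epsilon]$, so $\emptyset \neq C_n \subseteq [x-2\epsilon,x+2\epsilon]$ and $\bitseq_n \in {\cal L}_n(x,2\epsilon)$. Thus the accept/reject specification holds for every input, with an additional $\Order(\lg n)$ for the step counter.

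It remains to bound the work space in expectation when $\bitseq_n = \enc^n(X)$ with $X$ uniform on $[x-\epsilon,x+\epsilon]$. On such inputs the tower never empties, and the tape holds the counter ($\Order(\lg n)$), the frozen approximation $\widetilde C_{i_0}$ ($\Order(\lg\epsilon^{-1}/\lg d)$), and the tower state, whose size is controlled by the level $\lev_i$ reached at step $i$; so it suffices to control $\E[\max_{i \le n}\lev_i]$ over the suffix $i > i_0$ (the prefix $i\le i_0$ being absorbed into the $\lg\epsilon^{-1}/\lg d$ term). Following \cite{OK23}, the key estimate is a tail bound $\Pr[\lev_i \ge k]$ that decays geometrically in $k$ with rate governed by $\mu$: the level of $C_i$ is large precisely when $\tent^i(C_i)$ is an unusually short interval abutting the orbit of $\tfrac{1}{2}$, and the total Lebesgue measure of the depth-$i$ cylinders with this property is small. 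Since, after step $i_0$, the window $[x-\epsilon,x+\epsilon]$ spans only $\Order(1)$ cylinders, the conditional law of $C_i$ under $X\sim\mathrm{Unif}[x-\epsilon,x+\epsilon]$ is within a constant factor of the uniform-measure law analysed in \cite{OK23}, and a union bound over $i\le n$ gives $\E[\max_{i\le n}\lev_i]$ of the order needed; translating this into work space via the arithmetic of Section~\ref{sec:OK23} yields the stated $\Order(\lg^2 n/\lg^3 d + \lg\epsilon^{-1}/\lg d)$ bound in expectation.

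The main obstacle is this last step, the smoothed tail estimate together with its accounting. Two points need care: (a) how a level of the Hofbauer tower for $\mu = c/d$ translates into work-space cells --- the relevant Markov-partition endpoints are rationals whose denominators grow by a factor $\Theta(d)$ per level, which is the source of the $\lg d$ denominators and where the detailed tower structure of Section~\ref{sec:OK23} is essential; and (b) the dependence between the ``pre-mixing'' prefix $b_1\cdots b_{i_0}$, which is essentially determined by $x$ rather than by the perturbation, and the remaining bits, which must be decoupled before the uniform-case analysis of \cite{OK23} can be invoked on the suffix. The remaining ingredients --- the reduction, soundness of early rejection, the affine-piece arithmetic, and the approximate computation of $\widetilde C_{i_0}$ --- are routine given Theorem~\ref{thm:calc} and Section~\ref{sec:OK23}.
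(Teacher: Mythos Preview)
Your correctness argument is a different route from the paper's and it works. The paper's Algorithm~\ref{alg:recog} does not freeze an approximate cylinder $\widetilde C_{i_0}$ and test intersection; instead it iterates two rounded boundary points $z^-\approx x-\tfrac{3}{2}\epsilon$ and $z^+\approx x+\tfrac{3}{2}\epsilon$ forward, producing bit streams $b^-_1b^-_2\cdots$ and $b^+_1b^+_2\cdots$ that sandwich $\enc^\kappa(x)$, and compares $\bitseq_n$ to these bit by bit (this is exactly the machinery of Theorem~\ref{thm:calc}, reused). Your cylinder-intersection test is arguably cleaner conceptually, but note that Theorem~\ref{thm:calc} is the wrong citation for computing $\widetilde C_{i_0}$: that theorem goes from $x$ to a code, whereas you need to go from $b_1\cdots b_{i_0}$ to an approximate preimage interval. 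This is easy to do directly from Proposition~\ref{prop:tent-expansion} (truncate the series), but say so.

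The genuine gap is in your space analysis, specifically the sentence ``after step $i_0$, the window $[x-\epsilon,x+\epsilon]$ spans only $\Order(1)$ cylinders, the conditional law of $C_i$ \ldots\ is within a constant factor of the uniform-measure law analysed in~\cite{OK23}.'' At depth $i_0=\Theta(\lg\epsilon^{-1}/\lg\mu)$ the window does span $\Order(1)$ depth-$i_0$ cylinders, but the two \emph{boundary} cylinders --- those only partially covered by $[x-\epsilon,x+\epsilon]$ --- carry a $\Theta(1)$ fraction of the probability, and on them $X$ is uniform on a proper subinterval of the cylinder, not on the whole cylinder. That is not ``within a constant factor'' of the uniform law in any sense that lets you invoke~\cite{OK23} on the suffix: the subsequent tower level is a function of the suffix bits, whose law on a half-cylinder can be arbitrarily far from the law on the full cylinder. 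You flag this as obstacle~(b) in your last paragraph, but the paragraph before it tries to dispatch it with the constant-factor claim, which does not hold.

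The paper's fix is to change the decoupling depth from $i_0$ to $\lceil 2\log_\mu n\rceil$. Lemma~\ref{lem:cover} shows that at that depth the window fully covers every cylinder except those touching the two endpoints $x\pm\epsilon$; the set of $X$ landing in one of those exceptional cylinders has measure $\Order(n^{-2})$ in the window, and on that bad event the trivial bound $\requiredstgsize\le n$ suffices. On the complementary good event, Lemma~\ref{lem:trans-prob} says the transition probabilities along the tower are \emph{exactly} (not approximately) the uniform-measure probabilities of~\cite{OK23}, so Lemma~\ref{lemm:average_space_complexity} imports the $\E[\requiredstgsize^2]=\Order(\log_\mu^2 n\,\log_\mu^2 d)$ bound unchanged. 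If you replace your $i_0$ by $\lceil 2\log_\mu n\rceil$ in the decoupling step and drop the constant-factor claim in favour of this exact-coverage argument, your sketch goes through.
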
 
 As stated in theorems, 
  this paper assumes $\mu$ rational 
  mainly for the reason of Turing comparability, but it is not essential\footnote{
    We can establish some arguments for any real $\mu \in (0,1)$ 
     similar (but a bit weaker) to the theorems (see also \cite{OK23}). }.
 Instead, we allow an input instance $x \in [0,1)$ being a {\em real}\footnote{
  We do not use this fact directly in this paper, but 
   it might be worth to mention it for some conceivable variants 
     in the context of smoothed analysis to draw $X \in [x-\epsilon,x+\epsilon]$ uniformly at random. 
 }, given by a read only tape of infinite length.
 We remark that the space complexity of Theorem~\ref{thm:calc} is optimal in terms of $n$. 
 
\paragraph{Proof strategy of the theorems.}
 For proofs, 
   we will introduce the ``automaton'' for ${\cal L}_n$ given by \cite{OK23} in Section~\ref{sec:OK23}. 
 Once we get the automaton, 
  Theorem~\ref{thm:calc} is not difficult, 
   and we prove it in Section~\ref{sec:calc}. 
 Algorithm~\ref{alg:calc-improve} is relatively simple and 
   the space complexity is trivial. 
 Thus, the correctness is the issue, but it is also not very difficult. 
 Then, we give Algorithm~\ref{alg:recog} and prove Theorem~\ref{thm:smooth-recog}  in Section~\ref{sec:smooth-recog}. 
 The correctness is essentially the same as  Algorithm~\ref{alg:calc-improve}. 
 The analysis of the space complexity is the major issue. 

  

\section{Underlying Technology}\label{sec:OK23}
 This section 
   briefly
   introduces some fundamental technology for the analyses in Sections~\ref{sec:calc} and \ref{sec:smooth-recog} 
   including the automaton and the Markov chain for ${\cal L}_n$, according to \cite{OK23}. 

 The key idea of a space efficient computation of a tent code 
   is a representation of an equivalent class with respect to $\enc^n$. 
 Let
\begin{align}
  \typefn(\bitseq_n) &\defeq \{ \tent^{n}(x) \mid \enc^n(x)= \bitseq_n \} 
  \label{def:type}
\end{align}
 for $\bitseq_n \in {\cal L}_n$, 
 we call $\typefn(\bitseq_n)$ the {\em segment-type} of $\bitseq_n$. 
%
In fact, $\typefn(\bitseq_n)$ is a continuous interval, where one end is open and the other is close. 
By some straightforward argument with \eqref{def:encode2}, we get the following recursive formula. 
\begin{lemma}[\cite{OK23}]
\label{lem:transition}
Let $x \in [0,1)$, and let $\enc^n(x) = b_1\cdots b_n$. \\
 (1) Suppose $\typefn^i(x) = [v,u)$ ($v<u$). 
\vspace{-1ex}
\begin{itemize}\setlength{\itemindent}{4em}\setlength{\parskip}{0.5ex}\setlength{\itemsep}{0cm} 
\item[Case 1-1:]  $v < \frac{1}{2} < u$. 
\vspace{-1ex}
\begin{itemize}\setlength{\itemindent}{4em}\setlength{\parskip}{0.5ex}\setlength{\itemsep}{0cm} 
\item[Case 1-1-1.] If $\tent^i(x) < 1/2$ then $\typefn^{i+1}(x) = [\tent(v),\tent(\tfrac{1}{2}))$, and $b_{i+1}=0$. 
\item[Case 1-1-2.] If $\tent^i(x) \geq 1/2$ then $\typefn^{i+1}(x) =  (\tent(u),\tent(\tfrac{1}{2})]$, and $b_{i+1}=1$. 
\end{itemize}
\item[Case 1-2:] $u \leq \frac{1}{2}$. Then $\typefn^{i+1}(x) = [\tent(v),\tent(u))$, and $b_{i+1}=0$.  
\item[Case 1-3:] $v \geq \frac{1}{2}$. Then $\typefn^{i+1}(x) = (\tent(u),\tent(v)]$, and $b_{i+1}=1$. 
\end{itemize}
\vspace{-1ex}

\noindent (2) Similarly, suppose $\typefn^i(x) = (v,u]$ ($v<u$). 
\vspace{-1ex}
\begin{itemize}\setlength{\itemindent}{4em}\setlength{\parskip}{0.5ex}\setlength{\itemsep}{0cm} 
\item[Case 2-1:]  $v < \frac{1}{2} < u$. 
\vspace{-1ex}
\begin{itemize}\setlength{\itemindent}{4em}\setlength{\parskip}{0.5ex}\setlength{\itemsep}{0cm} 
\item[Case 2-1-1.] If $\tent^n(x) \leq 1/2$ then $\typefn^{i+1}(x) = (\tent(v),\tent(\tfrac{1}{2})]$, and $b_{i+1}=1$. 
\item[Case 2-1-2.] If $\tent^n(x) > 1/2$ then $\typefn^{i+1}(x) =  [\tent(u),\tent(\tfrac{1}{2}))$, and $b_{i+1}=0$. 
\end{itemize}
\item[Case 2-2:] $u \leq \frac{1}{2}$. Then $\typefn^{i+1}(x) = (\tent(v),\tent(u)]$, and $b_{i+1}=1$. 
\item[Case 2-3:] $v \geq \frac{1}{2}$. Then $\typefn^{i+1}(x) = [\tent(u),\tent(l))$, and $b_{i+1}=0$. 
\end{itemize}
\end{lemma}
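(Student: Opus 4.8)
The plan is to prove the lemma by induction on $i$, carrying along the ``interval'' fact stated just before the lemma in the following quantitative form: for every $x\in[0,1)$ the set $\typefn^i(x):=\typefn(\enc^i(x))=\{\tent^i(y)\mid\enc^i(y)=\enc^i(x)\}$ is a nonempty half-open interval that is left-closed (of the form $[v,u)$) exactly when $b_i=0$ and right-closed (of the form $(v,u]$) exactly when $b_i=1$. The base case $i=1$ is a direct computation from \eqref{def:encode0}: $\{y\mid\enc^1(y)=0\}=[0,\tfrac12)$ has image $[0,\tfrac\mu2)$ under $\tent$, while $\{y\mid\enc^1(y)=1\}=[\tfrac12,1)$ has image $(0,\tfrac\mu2]$.

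For the inductive step I would first reduce the computation of $\typefn^{i+1}(x)$ to an image under $\tent$ of a subinterval of $\typefn^i(x)$. Because $\enc^{i+1}(y)=\enc^{i+1}(x)$ holds iff $\enc^i(y)=\enc^i(x)$ and $b_{i+1}(y)=b_{i+1}(x)$, and because every $t\in\typefn^i(x)$ is realised as $\tent^i(y)$ for some $y$ in the class of $x$ (this is just the definition \eqref{def:type}), one obtains
\[
 \typefn^{i+1}(x)=\tent(\se),\qquad \se=\{\,t\in\typefn^i(x)\mid \text{the $(i{+}1)$-th bit determined by }t\text{ equals }b_{i+1}\,\}.
\]
The crucial point is that by \eqref{def:encode2} this $(i{+}1)$-th bit depends only on $b_i$ --- which is common to the whole class by the inductive hypothesis --- and on the side of $\tfrac12$ on which $t$ lies, with the single point $t=\tfrac12$ attached to the ``$\ge\tfrac12$'' side when $b_i=0$ and to the ``$\le\tfrac12$'' side when $b_i=1$ (this is precisely the ``$x_i=\tfrac12$'' clause of \eqref{def:encode1}). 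Hence $\se$ is either the half of $\typefn^i(x)$ cut off at $\tfrac12$ --- which happens exactly when $\tfrac12$ is interior to $\typefn^i(x)$, i.e.\ Cases~1-1 and~2-1, the choice of half being dictated by which side $\tent^i(x)$ itself lies on --- or else all of $\typefn^i(x)$, when the interval lies entirely on one side of $\tfrac12$ (Cases~1-2, 1-3, 2-2, 2-3).

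It then remains to evaluate $\tent(\se)$ and read off its orientation, for which I would use only that $\tent$ is the increasing affine map $t\mapsto\mu t$ on $[0,\tfrac12]$ and the decreasing affine map $t\mapsto\mu(1-t)$ on $[\tfrac12,1]$. The image of a half-open subinterval of $[0,\tfrac12]$ retains its open/closed orientation, whereas the image of one contained in $[\tfrac12,1]$ reverses it (so, for instance, $[\tfrac12,u)\mapsto(\tent(u),\tent(\tfrac12)]$ and $(\tfrac12,u]\mapsto[\tent(u),\tent(\tfrac12))$, and similarly for the ``whole interval'' cases). Verifying in each of the eight cases that the orientation so obtained is consistent with the value of $b_{i+1}$ simultaneously confirms the displayed formulas of the lemma and re-establishes the invariant, closing the induction; part~(2) is the mirror image of part~(1).

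I expect the only genuinely fiddly part to be the endpoint bookkeeping --- keeping straight which half of the interval the point $\tfrac12$ joins, given the asymmetric ``$x_i=\tfrac12$'' rule, and tracking the orientation reversal on the decreasing branch of $\tent$. Once the correspondence ``left-closed $\Leftrightarrow b_i=0$'' is threaded through the induction, however, each of the eight cases is a one-line check. (I note in passing that the statement of part~(2) appears to contain two typos: ``$\tent^n(x)$'' should read ``$\tent^i(x)$'' in Cases~2-1-1 and~2-1-2, and ``$\tent(l)$'' should read ``$\tent(v)$'' in Case~2-3.)
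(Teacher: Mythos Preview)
The paper does not actually prove this lemma: it is quoted from \cite{OK23} and stated without argument, so there is no in-paper proof to compare against. Your proposed proof is correct and is the natural one --- carry the invariant ``$\typefn^i(x)$ is left-closed iff $b_i=0$, right-closed iff $b_i=1$'' through an induction on $i$, reduce $\typefn^{i+1}(x)$ to $\tent(S)$ for the appropriate half (or whole) of $\typefn^i(x)$ via \eqref{def:encode2}, and read off the orientation using that $\tent$ is increasing on $[0,\tfrac12]$ and decreasing on $[\tfrac12,1]$. Your handling of the boundary point $t=\tfrac12$ (attached to the ``$\ge$'' side when $b_i=0$ and to the ``$\le$'' side when $b_i=1$) is exactly what \eqref{def:encode1}/\eqref{def:encode2} dictate, and your observation about the two typos in part~(2) is also correct.
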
 

Let
\begin{equation}
  \typeset_n = \{\typefn(\bitseq) \subseteq [0,1)  \mid \bitseq \in {\cal L}_n \}
\label{def:typeset}
\end{equation}
denote the set of segment-types (of ${\cal L}_n$). 
 It is easy to observe that $|{\cal L}_n|$ can grow exponential to $n$, 
 while the following theorem implies that $|\typeset_n| \leq 2n$. 
\begin{theorem}[\cite{OK23}]
  \label{theo:number_of_type}
 Let $\mu \in (1, 2)$. 
 Let $\bitseqc_i = \enc^i(\frac{1}{2})$, and let  
 \begin{align}
 \typei_i = \typefn(\bitseqc_i) \hspace{2em}\mbox{and}\hspace{2em}
 \flip{\typei}_i = \typefn(\flip{\bitseqc}_i)
 \label{def:typei}
 \end{align}
  for $i=1,2,\ldots$. 
Then, 
  \begin{align*}
    \typeset_n  = \bigcup_{i=1}^{n_*} \left\{ \typei_i, \flip{\typei}_i \right\}
  \end{align*}
  for $n \geq 1$, 
 where 
  $n_* = \min (\{i \in \{1,2,\ldots,n-1\} \mid \typei_{i+1} \in \typeset_i\} \cup \{n\})$. 
\end{theorem}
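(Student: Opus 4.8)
The plan is to read the transition structure on segment-types off Lemma~\ref{lem:transition} and run an induction on $n$. The starting observation is that the encoding is prefix-consistent, so every $\bitseq_{i+1}\in{\cal L}_{i+1}$ has the form $\bitseq_i b$ with $\bitseq_i\in{\cal L}_i$, and conversely $\bitseq_i b\in{\cal L}_{i+1}$ exactly when $b$ is an admissible successor bit for $\typefn(\bitseq_i)$; moreover $\typefn(\bitseq_i b)$ is then determined from $\typefn(\bitseq_i)$ by the one or two rules of Lemma~\ref{lem:transition}. Hence $\typeset_{i+1}$ is precisely the set of all ``successor intervals'' of the intervals of $\typeset_i$: a non-straddling type has a single successor, while a type $(v,u)$ with $v<\tfrac12<u$ has two (one per side of $\tfrac12$, both realised since the interval meets both sides). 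The base case is the direct computation $\bitseqc_1=\enc^1(\tfrac12)=1$, $\typei_1=\typefn(1)=(0,\tent(\tfrac12)]$, $\flip{\typei}_1=\typefn(0)=[0,\tent(\tfrac12))$, so $\typeset_1=\{\typei_1,\flip{\typei}_1\}$, which is the claim for $n=1$ (there $n_*=1$).

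The inductive step rests on a structural lemma about $\typeset_i$, proved simultaneously: (a) every interval in $\typeset_i$ has both endpoints in $\{0\}\cup\{\tent^m(\tfrac12):0\le m\le i\}$, with the left-closed versus right-closed pattern fixed by the tie-breaking of \eqref{def:encode1}; (b) $\typei_j=\typefn^j(\tfrac12)$ carries $\tent^j(\tfrac12)$ as one endpoint, one successor of $\typei_j$ (resp.\ $\flip{\typei}_j$) is always $\typei_{j+1}$ (resp.\ $\flip{\typei}_{j+1}$) --- this is nothing but the defining recursion for $\bitseqc_{j+1}$ (resp.\ $\flip{\bitseqc}_{j+1}$) read through Lemma~\ref{lem:transition} --- while every other successor of an interval of $\typeset_i$ is again some $\typei_m$ or $\flip{\typei}_m$ with $m\le i$; and (c) $\typei_1$ and $\flip{\typei}_1$ carry self-loops, since Case~2-1-1 applied to $(0,\tent(\tfrac12)]$ reproduces it and Case~1-1-1 applied to $[0,\tent(\tfrac12))$ reproduces it. Granting (a)--(c): the self-loops hit $\typei_1,\flip{\typei}_1$, and for $2\le k\le i$ the spine edge from $\typei_{k-1}\in\typeset_i$ hits $\typei_k$, so $\typeset_i\subseteq\typeset_{i+1}$; conversely, the successors of $\typeset_i$ other than $\typei_{i+1},\flip{\typei}_{i+1}$ already lie in $\typeset_i$ by (b). Hence $\typeset_{i+1}=\typeset_i\cup\{\typei_{i+1},\flip{\typei}_{i+1}\}$, and iterating from the base case gives $\typeset_i=\bigcup_{j=1}^{i}\{\typei_j,\flip{\typei}_j\}$ for all $i\le n_*$.

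It remains to show the tower stabilises once $i\ge n_*$. When $i=n_*<n$ we have $\typei_{n_*+1}\in\typeset_{n_*}$ by definition of $n_*$; since $\typeset_{n_*}=\bigcup_{j\le n_*}\{\typei_j,\flip{\typei}_j\}$ is closed under the endpoint-reflection $\typei_j\leftrightarrow\flip{\typei}_j$, also $\flip{\typei}_{n_*+1}\in\typeset_{n_*}$, and then by (b) every successor of $\typeset_{n_*}$ already lies in $\typeset_{n_*}$, so $\typeset_{n_*+1}=\typeset_{n_*}$; inductively $\typeset_m=\typeset_{n_*}$ for all $m\ge n_*$. This yields $\typeset_n=\bigcup_{i=1}^{n_*}\{\typei_i,\flip{\typei}_i\}$ in all cases, and in particular $|\typeset_n|\le 2n$.

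I expect the real work to be part (b): tracking the endpoints honestly through the fold at $\tfrac12$ (the splitting Cases~1-1 and~2-1 versus the order-reversing translations in Cases~1-2, 1-3, 2-2, 2-3), and in particular proving that the ``return'' successor of $\typei_j$ --- the one not on the spine --- is exactly the previously constructed level indexed by the $\tent$-image of the non-distinguished endpoint of $\typei_j$, rather than a spurious new interval. This is precisely the Hofbauer-tower/Markov-extension closure property, and it is where the asymmetric definition~\eqref{def:encode1} (forcing one endpoint open and the other closed, consistently across levels) does its work. The degenerate configurations --- $\tent^m(\tfrac12)=\tfrac12$, or two iterates of $\tfrac12$ coinciding, i.e.\ an eventually periodic critical orbit --- should be absorbed into the same case analysis; they are exactly the situations that make $n_*<n$ occur.
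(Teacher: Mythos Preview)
The paper does not prove Theorem~\ref{theo:number_of_type}; it is imported from~\cite{OK23} without argument, so there is no in-paper proof to compare your proposal against. What the present paper does supply is the recursion of Lemma~\ref{lem:typei} for the spine $\typei_i,\flip{\typei}_i$, which is exactly the object your induction is tracking.

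Your plan is the standard Hofbauer-tower route and is sound as a strategy: the successor description via Lemma~\ref{lem:transition}, the base case, the self-loops in~(c), the monotonicity $\typeset_i\subseteq\typeset_{i+1}$, and the stabilisation once $\typei_{n_*+1}\in\typeset_{n_*}$ are all correct and cleanly organised. You are also right that the whole weight sits on claim~(b), and right to name it as the Markov-extension closure property. What is missing from the proposal is the mechanism that makes~(b) work: you need the invariant that for every $j$ the interval $\typei_j$ has $\tent(\tfrac12)=\tfrac{\mu}{2}$ as its \emph{distinguished} endpoint (the one carrying the half-open boundary coming from the last split), while the other endpoint is $\tent^{m}(\tfrac12)$ for some $m\le j$ recording the most recent split. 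With that invariant in hand, when $\typei_j$ straddles $\tfrac12$ the off-spine successor has endpoints $\tent(\tfrac12)$ and $\tent^{m+1}(\tfrac12)$ with the appropriate half-open pattern, and one checks this is literally $\flip{\typei}_{m+1}$ (or $\typei_{m+1}$); this is the content behind Lemma~\ref{lemm:transition_function}. Without stating and maintaining this invariant explicitly, your~(b) is an assertion rather than a proof, so the write-up should make that bookkeeping visible. (Incidentally, your base-case computation $\typei_1=(0,\tfrac{\mu}{2}]$, $\flip{\typei}_1=[0,\tfrac{\mu}{2})$ is the one consistent with the definitions~\eqref{def:encode0}--\eqref{def:type}; the displayed values in Lemma~\ref{lem:typei} appear to have the two swapped.)
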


The following lemma, derived from Lemma~\ref{lem:transition}, gives an explicit recursive formula of $\typei_i$ and $\flip{\typei}_i$. 
\begin{lemma}[\cite{OK23}]\label{lem:typei}
  $\typei_i$ and $\flip{\typei}_i$ given by \eqref{def:typei} 
   are recursively calculated for $i=1,2,\ldots$ as follows. 
 \begin{align*} 
 \typei_1 = [0,\tfrac{\mu}{2})\hspace{2em}\mbox{and}\hspace{2em}
 \flip{\typei}_1 = (0,\tfrac{\mu}{2}].
 \end{align*}
For $i=2,3,\ldots$,
 \begin{align} 
 \typei_{i} &= \begin{cases}
  \begin{cases}
  [\tent(v),\tent(\tfrac{1}{2})) &: v < \tfrac{1}{2} < u \\ 
  [\tent(v),\tent(u))  &:  u \leq \tfrac{1}{2}  \\
  (\tent(u),\tent(v)]  &:  v \geq \tfrac{1}{2} 
  \end{cases}
 &\mbox{if $\typei_{i-1} = [v,u)$, } \\
 \begin{cases} 
 [\tent(u),\tent(\tfrac{1}{2})) &: v < \tfrac{1}{2} < u \\ 
 (\tent(v),\tent(u)]  &: u \leq \tfrac{1}{2}  \\
 [\tent(u),\tent(v))  &: v \geq \tfrac{1}{2}  
 \end{cases}
&\mbox{if $\typei_{i-1} = (v,u]$} 
\end{cases}
\label{eq:20231105a}
\end{align} 
holds. Then, 
\begin{align*} 
\flip{\typei}_{i} &= \begin{cases}
(v,u] & \mbox{if $\typei_{i} = [v,u)$} \\
[v,u) & \mbox{if $\typei_{i} = (v,u]$} 
\end{cases}
\end{align*} 
holds. 
\end{lemma}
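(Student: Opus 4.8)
The plan is to prove the lemma by induction on $i$, where the single step is an application of Lemma~\ref{lem:transition} along the orbit of $x=\tfrac12$, strengthened by an auxiliary record of which endpoint of $\typei_i$ the value $\tent^i(\tfrac12)$ occupies. Precisely, I would establish simultaneously, for every $i\ge 1$: \emph{(i)} $\typei_i$ is the nondegenerate half-open interval displayed in \eqref{eq:20231105a}; \emph{(ii)} $\tent^i(\tfrac12)$ is exactly the \emph{closed} endpoint of $\typei_i$ (equivalently, $\tent^i$ is monotone on the $\enc^i$-equivalence class $[\tfrac12,a)$ of $\tfrac12$, so the image of the closed endpoint $\tfrac12$ is the closed endpoint of the image); and \emph{(iii)} $\flip{\bitseqc}_i\in{\cal L}_i$, it extends $\flip{\bitseqc}_{i-1}$ by the bit $\flip{b_i}$, and $\flip{\typei}_i$ is the flip of $\typei_i$, i.e.\ the same two endpoints with reversed open/closed status.

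The base case $i=1$ is a direct evaluation: $\enc^1(\tfrac12)=1$ gives $\typei_1=\typefn(1)=\{\mu(1-x)\mid x\in[\tfrac12,1)\}=(0,\tfrac\mu2]$, whose closed endpoint is $\tfrac\mu2=\tent(\tfrac12)$, while $\flip{\bitseqc}_1=0$ gives $\flip{\typei}_1=\typefn(0)=[0,\tfrac\mu2)$, the flip of $\typei_1$; this also fixes the orientation convention used throughout \eqref{eq:20231105a}. For the step, assume \emph{(i)}--\emph{(iii)} at $i-1$. Since $\tent^{i-1}(\tfrac12)\in\typei_{i-1}$ and, by \emph{(ii)}, it is the closed endpoint, in the positional case $v<\tfrac12<u$ it lies to the left of $\tfrac12$ when $\typei_{i-1}=[v,u)$ and to the right of $\tfrac12$ when $\typei_{i-1}=(v,u]$; this is exactly what selects the sub-branch (Case~1-1-1 resp.\ Case~2-1-2) of Lemma~\ref{lem:transition}, while in the other two positional cases the extension bit is already forced by Lemma~\ref{lem:transition}. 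Reading off that lemma then yields $b_i$ together with an interval matching line by line the corresponding case of \eqref{eq:20231105a}; pushing the closed endpoint of $\typei_{i-1}$ through $\tent$ gives $\tent^i(\tfrac12)$ as the closed endpoint of $\typei_i$, re-establishing \emph{(ii)}, and nondegeneracy (hence \emph{(i)}) follows from $v<u$ together with $v<\tfrac12$ or $u>\tfrac12$ as appropriate.

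It remains to re-establish \emph{(iii)}. Since complementation commutes with concatenation, $\flip{\bitseqc}_i=\flip{\bitseqc}_{i-1}\cdot\flip{b_i}$; by hypothesis $\flip{\bitseqc}_{i-1}\in{\cal L}_{i-1}$ with $\typefn(\flip{\bitseqc}_{i-1})=\flip{\typei}_{i-1}$ the flip of $\typei_{i-1}$, so Lemma~\ref{lem:transition} applies to it. Selecting the one-bit extension by $\flip{b_i}$ again produces a half-open interval, and running the eight cases ($\typei_{i-1}$ in each of its two shapes, times the three positional cases, with the sub-branch inside the first) shows both that $\flip{b_i}$ is an admissible extension bit---hence $\flip{\bitseqc}_i\in{\cal L}_i$---and that the resulting interval is the flip of the one just computed for $\typei_i$; in the two ``$v<\tfrac12<u$'' cases one uses $b_i$, already determined above, to pick the sub-branch, so no separate orientation invariant for $\flip{\typei}$ is needed. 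The only real obstacle is this bookkeeping: the case analyses for $\typei_i$ and $\flip{\typei}_i$ must be run in parallel and the invariant \emph{(ii)}---which is precisely what makes the recursion \eqref{eq:20231105a} deterministic by disambiguating the sub-branches of Cases~1-1 and 2-1---must be threaded through every branch, including the boundary configurations with an endpoint of $\typei_{i-1}$ equal to $\tfrac12$ (these fall under ``$u\le\tfrac12$'' or ``$v\ge\tfrac12$'' and are harmless). Everything else is routine arithmetic with the piecewise-linear map $\tent$.
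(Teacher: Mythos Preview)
The paper does not prove this lemma itself; it is imported from \cite{OK23} with only the remark that it is ``derived from Lemma~\ref{lem:transition}.'' Your proposal carries out exactly that derivation---induction on $i$ applying Lemma~\ref{lem:transition} to the orbit of $\tfrac12$, with the auxiliary invariant that $\tent^i(\tfrac12)$ is the closed endpoint of $\typei_i$---and it is correct; the invariant is precisely what disambiguates the sub-branches of Cases~1-1 and~2-1, and the parallel bookkeeping for $\flip{\typei}_i$ goes through as you describe. Your base-case computation $\typei_1=(0,\tfrac{\mu}{2}]$ (rather than $[0,\tfrac{\mu}{2})$ as printed) is also right, since $\enc^1(\tfrac12)=1$; this is in fact consistent with the convention used elsewhere in the paper, e.g.\ Algorithm~\ref{alg:vucdelt} sets $c[1]=1$, which by the surrounding text encodes $\typei_1=(v[1],u[1]]$.
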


 For convenience, 
 we define the {\em level} of  $\typej \in \typeset_n$ 
  by 
\begin{align}
  L(\typej) = k
\label{def:level}
\end{align}
   if $\typej = \typei_k$ or $\flip{\typei_k}$.  
 Notice that Theorem~\ref{theo:number_of_type} implies that 
   the level of $\typefn(\bitseq_k)$ for $\bitseq_k \in {\cal L}_k$ may be strictly less than $k$. 
 In fact, it happens, which provides a {\em space efficient} ``automaton''. 

\begin{figure}[tb]
  \centering
  \begin{tikzpicture}
    [node distance=.5cm, every loop/.style={looseness=2}]
    \node [state, initial left] (v0) [] {$q_{0}$};
    \node [state] (a1) [above right = of v0] {$\typei_{1}$};
    \node [state] (a2) [right = of a1] {$\typei_{2}$};
    \node [state] (a3) [right = of a2] {$\typei_{3}$};
    \node [state] (a4) [right = of a3] {$\typei_{4}$};
    \node [state] (a5) [right = of a4] {$\typei_{5}$};
    \node [state] (a6) [right = of a5] {$\typei_{6}$};
    \node [state, draw=none] (ai) [right = of a6] {$\cdots$};
    \node [state] (an) [right = of ai] {$\typei_{n}$};
    \node [state] (b1) [below right = of v0] {$\flip{\typei}_{1}$};
    \node [state] (b2) [right = of b1] {$\flip{\typei}_{2}$};
    \node [state] (b3) [right = of b2] {$\flip{\typei}_{3}$};
    \node [state] (b4) [right = of b3] {$\flip{\typei}_{4}$};
    \node [state] (b5) [right = of b4] {$\flip{\typei}_{5}$};
    \node [state] (b6) [right = of b5] {$\flip{\typei}_{6}$};
    \node [state, draw=none] (bi) [right = of b6] {$\cdots$};
    \node [state] (bn) [right = of bi] {$\flip{\typei}_{n}$};
    \path [->] (v0) edge [] node [above left] {1} (a1);
    \path [->] (a1) edge [] node [above] {0} (a2);
    \path [->] (a2) edge [] node [above] {0} (a3);
    \path [->] (a3) edge [] node [above] {1} (a4);
    \path [->] (a4) edge [] node [above] {0} (a5);
    \path [->] (a5) edge [] node [above] {0} (a6);
    \path [->] (a6) edge [] node [above] {} (ai);
    \path [->] (ai) edge [] node [above] {} (an);
    \path [->] (a1) edge [loop above] node [above] {1} ();
    \path [->] (a2) edge [bend left] node [pos=.5, right] {1} (b2);
    \path [->] (a4) edge [bend right=3] node [pos=.1, left] {1} (b3);
    \path [->] (a5) edge [bend left=3] node [pos=.1, below] {1} (b2);
    \path [->] (v0) edge [] node [below left] {0} (b1);
    \path [->] (b1) edge [] node [below] {1} (b2);
    \path [->] (b2) edge [] node [below] {1} (b3);
    \path [->] (b3) edge [] node [below] {0} (b4);
    \path [->] (b4) edge [] node [below] {1} (b5);
    \path [->] (b5) edge [] node [below] {1} (b6);
    \path [->] (b6) edge [] node [below] {} (bi);
    \path [->] (bi) edge [] node [below] {} (bn);
    \path [->] (b1) edge [loop below] node [below] {0} ();
    \path [->] (b2) edge [bend left] node [pos=.5, left] {0} (a2);
    \path [->] (b4) edge [bend left=3] node [pos=.1, left] {0} (a3);
    \path [->] (b5) edge [bend right=3] node [pos=.1, above] {0} (a2);
  \end{tikzpicture}
  \caption{Transition diagram over $\typeset_n$ for $\mu=1.6$. }
  \label{fig:stg1}
\end{figure}

\paragraph{State transit machine (``automaton''). }
 By Theorem~\ref{theo:number_of_type}, 
  we can design a {\em space efficient} state transit machine\footnote{
   Precisely, we need a ``counter'' for the length $n$ of the string, 
    while notice that our main goal is not to design an automaton for ${\cal L}_n$. 
   Our main target Theorems~\ref{thm:calc} and \ref{thm:smooth-recog} assume a standard Turing machine, where 
   obviously  we can count the length $n$ of a sequence in $\Order(\log n)$ space.  
   } according to Lemma~\ref{lem:transition}, 
   to recognize ${\cal L}_n$. 
 We define the set of states by 
  $Q_n = \{q_0\} \cup \{\emptyset\} \cup \typeset_n$, 
     where $q_0$ is the initial state, and $\emptyset$ denotes the unique reject state. 
 Let $\delta\colon Q_{n-1} \times \{0,1\} \to Q_n$ denote the state transition function, defined as follows.  
 Let $\delta(q_0,1) = \typei_1$ and $\delta(q_0,0) = \flip{\typei}_1$. 
 According to Lemma~\ref{lem:transition}, we appropriately define  
\begin{align}
 \delta(\typej,b) = \typej'
\label{def:delta}
\end{align}
 for $J \in \typeset_{n-1}$ and $b \in \{0,1\}$, 
as far as $J$ and $b$ are consistent. 
 If the pair $\typej$ and $b$ are inconsistent,  
  we define $\delta(\typej,b)=\emptyset$; 
  precisely 
\begin{align*}\begin{cases}
 \mbox{$\typej = (v,u]$ and $v \geq \frac{1}{2}$}  &\mbox{(cf. Case 1-3)}   \\
 \mbox{$\typej = [v,u)$ and $u \leq \frac{1}{2}$} &\mbox{(cf. Case 2-2)}   \\
 \mbox{$\typej = [v,u)$ and $u \leq \frac{1}{2}$} &\mbox{(cf. Case 1-2)}   \\
 \mbox{$\typej = (v,u]$ and $v \geq \frac{1}{2}$} &\mbox{(cf. Case 2-3)}   
\end{cases}
\end{align*} 
are the cases, where $v=\inf \typej$ and  $u=\sup \typej$. 
\begin{lemma}[\cite{OK23}]
  \label{lemm:space_of_stg}
  Let $\mu \in (1, 2)$ be a rational given by an irreducible fraction $\mu = c/d$.
  For any $k \in \mathbb{Z}_{>0}$,
  the state transit machine on $Q_k$ is represented by $\Order(k^{2} \lg{d})$ bits. 
\end{lemma}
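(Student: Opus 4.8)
The plan is to bound, state by state, the number of bits needed to write down $\dastates_k$, and then add a cheap bookkeeping term for the transitions. Since the states $q_0$ and $\emptyset$ cost $\Order(1)$ bits and, by Theorem~\ref{theo:number_of_type}, $|\typeset_k|\le 2k$, it is enough to show that every segment-type occurring at level $i\le k$ — that is, each $\typei_i$ and each $\flip\typei_i$ — admits a description in $\Order(k\lg d)$ bits.

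First I would prove, by induction on $i$ and using the explicit recursion of Lemma~\ref{lem:typei}, that the two endpoints of $\typei_i$ (which coincide with those of $\flip\typei_i$) are rationals whose denominators divide $2d^{\,i}$. The base case is $\typei_1=[0,\mu/2)$ with $\mu/2 = c/(2d)$. For the step, observe that on a rational $v=p/q\in[0,1]$ the tent map acts by $\tent(v)=cp/(dq)$ when $v\le\tfrac{1}{2}$ and $\tent(v)=c(q-p)/(dq)$ when $v\ge\tfrac{1}{2}$, and that $\tent(\tfrac{1}{2})=c/(2d)$; every branch of \eqref{eq:20231105a} produces the new endpoints by a single application of $\tent$ to an old endpoint or to $\tfrac{1}{2}$, so the denominator grows by a factor at most $d$ and remains a divisor of $2d^{\,i}$. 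Moreover every endpoint lies in $[0,\mu/2]\subseteq[0,1)$, because $\tent$ maps $[0,1]$ into $[0,\mu/2]$ and $\mu<2$; hence, written over the common denominator $2d^{\,i}$, each endpoint is a nonnegative integer numerator below $2d^{\,i}$, i.e.\ $\Order(i\lg d)=\Order(k\lg d)$ bits. Together with one bit recording which end of the interval is open, a single state of $\typeset_k$ is stored in $\Order(k\lg d)$ bits, and all $\Order(k)$ of them in $\Order(k^2\lg d)$ bits.

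It remains to account for the transition function $\datrans$. There are $\Order(k)$ states, each of out-degree $2$, and each target is one of $\Order(k)$ states, so a transition table costs only $\Order(k\lg k)$ bits (alternatively, the target can be recomputed on the fly from the stored interval via Lemma~\ref{lem:transition} at no extra space). Since $d\ge 2$, we have $k\lg k=\Order(k^2\lg d)$, so the grand total is $\Order(k^2\lg d)$ bits, as claimed.

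The one point that needs genuine care is the inductive denominator bound: one must run through all the sub-cases of Lemma~\ref{lem:typei}, checking that no branch multiplies the denominator by more than $d$ and that no endpoint escapes $[0,1)$. This is where the hypotheses $\mu<2$ and $\mu=c/d$ are actually used, and it is from this step that the $\lg d$ factor — hence the whole $\Order(k^2\lg d)$ bound — originates.
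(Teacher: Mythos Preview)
Your argument is correct. The paper does not actually supply a proof of this lemma here (it is imported from \cite{OK23}), but the denominator bound you establish by induction is precisely the mechanism used elsewhere in the paper --- see the proof of Lemma~\ref{lemm:min_sectype_length}, which records that the endpoints of $\typei_k$ are of the form $\tent^{i}(\tfrac{1}{2})=c_i/(2d^{\,i})$ with $i\le k$ --- so your approach coincides with the intended one.
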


 We will use the following two technical lemmas about the transition function in Sections~\ref{sec:calc} and~\ref{sec:smooth-recog}. 
\begin{lemma}[\cite{OK23}]\label{lemm:transition_function}
 Suppose for $\mu \in (1, 2)$ that $\tent_{\mu}^{i}(\frac{1}{2}) \neq \frac{1}{2}$ holds for any $i = 1, \dots, n-1$. 
 Then, 
 \begin{align*}
   \delta(\typei_{n}, b) &\in \left\{\typei_{n+1}\right\} \cup \left\{\flip{\typei}_{k+1} \mid 1 \le k \le \tfrac{n}{2} \right\} \cup \{\emptyset\}
 \end{align*}
   hold for $b=0,1$. 
\end{lemma}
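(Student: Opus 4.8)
The plan is to combine the explicit recursion for the states $\typei_i$ (Lemma~\ref{lem:typei}) with a structural description of their two endpoints, and then to reduce the quantitative bound $k\le n/2$ to a growth property of the ``splitting times'' of the top chain.

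\emph{An invariant on the endpoints.} By induction on $i$, using Lemma~\ref{lem:typei} together with the hypothesis $\tent_\mu^i(\tfrac12)\neq\tfrac12$ for $i\le n-1$ (which keeps every relevant endpoint away from $\tfrac12$, so that the three cases of Lemma~\ref{lem:typei} are exhaustive and disjoint), I would prove: for $2\le i\le n$ the interval $\typei_i$ has its closed endpoint equal to $\tent^i(\tfrac12)$ (the \emph{critical} endpoint) and its open endpoint equal to $\tent^{\ell(i)}(\tfrac12)$ for some integer $\ell(i)\in\{1,\dots,i-1\}$ (the \emph{link}), where $\ell(i+1)=1$ if $\typei_i$ is \emph{split}, meaning that $\tfrac12$ lies strictly between its two endpoints, and $\ell(i+1)=\ell(i)+1$ otherwise; the open/closed orientation is likewise governed by whether $\typei_i$ was split. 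The content is that passing through a split state resets the link to $\tent^1(\tfrac12)=\tfrac\mu2$, whereas on a non-split interval $\tent$ acts affinely with slope $\pm\mu$, so both endpoints advance together and $\ell$ merely counts steps since the last split.

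\emph{The transition out of $\typei_n$.} Reading $\delta(\typei_n,\cdot)$ off Lemma~\ref{lem:transition}: if $\typei_n$ is not split, exactly one bit $b^\ast$ is consistent, and $\delta(\typei_n,b^\ast)=\typei_{n+1}$ while $\delta(\typei_n,\flip{b^\ast})=\emptyset$, so both values lie in the asserted set. If $\typei_n$ is split, both bits are consistent: the bit $b_{n+1}$ by which $\bitseqc_{n+1}=\enc_\mu^{n+1}(\tfrac12)$ extends $\bitseqc_n=\enc_\mu^{n}(\tfrac12)$ gives $\delta(\typei_n,b_{n+1})=\typefn(\bitseqc_{n+1})=\typei_{n+1}$, and by the formulas of Cases~1-1 and 2-1 of Lemma~\ref{lem:transition} the other bit produces the interval $Z$ which is the $\tent$-image of the part of $\typei_n$ lying on the far side of $\tfrac12$; hence $Z$ has endpoints $\tent(\tfrac12)=\tfrac\mu2$ and $\tent^{\ell(n)+1}(\tfrac12)$, with orientation opposite to that of $\typei_{\ell(n)+1}$. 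As $Z$ is a state of $\typeset_{n+1}$, Theorem~\ref{theo:number_of_type} makes it some $\typei_j$ or $\flip{\typei}_j$, and comparing endpoints with the invariant above, $Z=\flip{\typei}_{\ell(n)+1}$ precisely when $\typei_{\ell(n)+1}$ has link $\tfrac\mu2$, i.e.\ when $\ell(\ell(n)+1)=1$, i.e.\ when $\typei_{\ell(n)}$ is split. So the lemma reduces to the claim: whenever $\typei_n$ is split, $\typei_{\ell(n)}$ is split and $\ell(n)\le n/2$ --- then $k:=\ell(n)$ does the job, with $k\ge1$ since $1$ is the smallest split index.

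\emph{The crux.} Let $1=T_1<T_2<\cdots$ enumerate the indices $i$ with $\typei_i$ split. If $\typei_n$ is split then $n=T_{j+1}$ for some $j\ge1$, its predecessor split is $T_j$, and by the invariant $\ell(n)=n-T_j=T_{j+1}-T_j$. Hence it suffices to establish, for every $j\ge1$, (A) $T_{j+1}-T_j$ is again a splitting time, and (B) $T_{j+1}\le2T_j$; indeed (A) gives that $\typei_{\ell(n)}$ is split, and (B) gives $\ell(n)=T_{j+1}-T_j\le T_{j+1}-T_{j+1}/2=n/2$. I would prove (A) and (B) together by induction on $j$, tracing the interval through the non-split stretch $(T_j,T_{j+1}]$ with Lemma~\ref{lem:typei}: since $\tent$ expands every non-split interval by the factor $\mu$, the length $|\typei_i|$ throughout the stretch is a fixed multiple of the length of the ``critical half'' of the split interval $\typei_{T_j}$, and the stretch ends exactly when $\tent^i(\tfrac12)$ first re-crosses $\tfrac12$; relating the critical half of $\typei_{T_j}$ to the stretch preceding $T_j$, where the inductive hypothesis applies, yields both the identity $T_{j+1}-T_j=T_{g(j)}$ with $g(j)\le j$ and the ``at most doubling'' bound. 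This is the classical Hofbauer-tower cutting-time recursion, and carrying it through --- keeping careful track of orientations and of the two (possibly coinciding) endpoints --- is the main obstacle; the earlier steps are bookkeeping, and the small degenerate cases (in particular $n=1$, the self-loop $\delta(\typei_1,1)=\typei_1$) are checked directly.
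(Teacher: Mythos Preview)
The paper does not contain a proof of this lemma: it is imported verbatim from \cite{OK23} and simply cited (see Section~\ref{sec:OK23}, where the lemma is stated with the attribution ``[OK23]'' and no argument is given). So there is no in-paper proof to compare against. The paper does, however, remark in the related-work paragraph that the technique ``is also very similar to or essentially the same as \emph{Markov extension}'' and cites Hofbauer~\cite{Hofbauer79}; your approach via the endpoint invariant and the cutting-time recursion $T_{j+1}-T_j=T_{g(j)}$ with $g(j)\le j$ is exactly the Hofbauer-tower picture the authors are alluding to, so it is almost certainly the same line as the proof in \cite{OK23}.

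On the substance of your outline: the reduction is set up correctly. The endpoint invariant (closed endpoint $=\tent^i(\tfrac12)$, open endpoint $=\tent^{\ell(i)}(\tfrac12)$, with $\ell$ resetting to $1$ at each split) is the right bookkeeping, and the identification of the ``other'' transition out of a split $\typei_n$ with $\flip{\typei}_{\ell(n)+1}$ is exactly what Lemma~\ref{lem:typei} gives once you match endpoints. The whole lemma then does reduce, as you say, to showing that $\ell(n)$ is itself a splitting time and that $T_{j+1}\le 2T_j$; and (B) indeed follows from (A) once you have $T_{j+1}-T_j=T_{g(j)}$ with $g(j)\le j$. What you have not actually done is prove (A): you describe it as ``the classical Hofbauer-tower cutting-time recursion'' and say that ``carrying it through\dots\ is the main obstacle,'' which is an honest admission that the heart of the argument is still a sketch. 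The missing step is to show that on the non-split stretch $(T_j,T_{j+1}]$ the critical orbit, viewed relative to the half of $\typei_{T_j}$ it starts in, replays an initial segment of the full critical itinerary---this is what forces the gap $T_{j+1}-T_j$ to be an earlier $T_{g(j)}$. That step is not long, but it is the content of the lemma, and your write-up defers it rather than executes it.
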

\begin{lemma}[\cite{OK23}]\label{lem:back2}
 Suppose for $\mu \in (1, 2)$ that $\tent_{\mu}^{i}(\frac{1}{2}) \neq \frac{1}{2}$ holds for any $i = 1, \dots, 2n-1$. 
 Then, there exists $k \in \{n+1,\ldots,2n\}$ and $b \in \{0,1\}$ such that 
 \begin{align*}
   \delta(\typei_{k}, b) &\in \left\{\flip{\typei}_{k'+1} \mid 1 \le k' \le \tfrac{k}{2} \right\} 
 \end{align*}
   hold. 
\end{lemma}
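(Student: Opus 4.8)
The plan is to argue by contradiction: assume no $k\in\{n+1,\ldots,2n\}$ and $b\in\{0,1\}$ satisfy the conclusion. First I would reduce this to a geometric statement about the intervals $\typei_k$, namely that \emph{none of $\typei_{n+1},\ldots,\typei_{2n}$ straddles the turning point} $\tfrac12$ (i.e.\ for each such $k$ either $\sup\typei_k\le\tfrac12$ or $\inf\typei_k\ge\tfrac12$). Indeed, if $\typei_k$ straddles $\tfrac12$ then both input bits are consistent with $\typei_k$, so $\delta(\typei_k,0)$ and $\delta(\typei_k,1)$ are two \emph{distinct} non-rejecting states, the images of the two sub-intervals of $\tent(\typei_k)$ lying on either side of $\tfrac12$; one of them is the spine successor $\typei_{k+1}$, and then Lemma~\ref{lemm:transition_function} (applicable since $\tent^i(\tfrac12)\neq\tfrac12$ for all $i\le 2n-1$) forces the other to be some $\flip\typei_{k'+1}$ with $k'\le k/2$ --- exactly the forbidden situation. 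Conversely, if $\typei_k$ does \emph{not} straddle $\tfrac12$ then (as $\tent^i(\tfrac12)\neq\tfrac12$ for $i\le 2n-1$ keeps $\tfrac12$ off the endpoints of $\typei_k$ as well) $\typei_k$ lies entirely on one side of $\tfrac12$, only one bit is consistent with it, and that transition is $\typei_{k+1}=\tent(\typei_k)$, so no backward transition leaves $\typei_k$. This proves the reduction.

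Next I would extract a quantitative consequence. Let $m^\ast\le n$ be the largest index $\le n+1$ at which $\typei_{m^\ast}$ straddles $\tfrac12$; it exists because $\typei_1$ straddles $\tfrac12$, and $m^\ast\ne n+1$ by the contradiction hypothesis, so none of $\typei_{m^\ast+1},\ldots,\typei_{2n}$ straddles $\tfrac12$. Each of these types is one-sided, so $\tent$ acts affinely on it and $|\typei_{k+1}|=\mu|\typei_k|$; iterating gives $|\typei_{2n}|=\mu^{\,2n-m^\ast-1}|\typei_{m^\ast+1}|$. On the other hand $\typei_{m^\ast+1}$ is the affine ($\times\mu$) image of the half of $\typei_{m^\ast}$ bounded by $\tfrac12$ and $\tent^{m^\ast}(\tfrac12)$ (here I use that $\tent^{m^\ast}(\tfrac12)$ is itself an endpoint of $\typei_{m^\ast}$, which follows by induction from the recursion of Lemma~\ref{lem:typei}), so $\tfrac\mu2=\tent(\tfrac12)$ is one endpoint of $\typei_{m^\ast+1}$ and $|\typei_{m^\ast+1}|=\mu\,|\tfrac12-\tent^{m^\ast}(\tfrac12)|$. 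Since every type is contained in the range $[0,\tfrac\mu2]$ of $\tent$, we have $|\typei_{2n}|\le\tfrac\mu2$, whence $|\tfrac12-\tent^{m^\ast}(\tfrac12)|\le\tfrac12\mu^{\,m^\ast+1-2n}$: the orbit of $\tfrac12$ returns exponentially close to $\tfrac12$ already at the early time $m^\ast\le n$.

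The final step --- and the one I expect to be the main obstacle --- is to turn this near-return into a contradiction by showing that a straddle is forced inside $\{n+1,\ldots,2n\}$. The idea is a renormalization/shadowing argument. Because $\tent^{m^\ast}(\tfrac12)$ is within $\tfrac12\mu^{\,m^\ast+1-2n}$ of $\tfrac12$, the orbit ``restarts'': $\tent^{m^\ast+j}(\tfrac12)$ shadows $\tent^{j}(\tfrac12)$ for roughly $n$ further steps, and correspondingly $\typei_{m^\ast+1+j}$ shares the endpoint $\tent^{1+j}(\tfrac12)$ with $\typei_{1+j}$ and is contained in it; since $\typei_1$ and $\typei_2$ both straddle $\tfrac12$ (the latter for every $\mu\in(1,2)$ --- a one-line computation shows $\typei_2$ has endpoints $\tfrac\mu2$ and $\mu-\tfrac{\mu^2}2$, and $\mu-\tfrac{\mu^2}2<\tfrac12$), these inclusions are \emph{proper} and propagate properly forward. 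Tracking the position of $\tfrac12$ relative to the nested chain $\typei_{m^\ast+1+j}\subsetneq\typei_{1+j}$ while the lengths of the outer intervals grow by the factor $\mu$ per step between straddles, one shows the chain must either overshoot the length budget $\tfrac\mu2$ or cross $\tfrac12$ --- i.e.\ some $\typei_k$ with $k\in\{n+1,\ldots,2n\}$ straddles $\tfrac12$, contradicting the assumption. Making this estimate tight, so that the window $\{n+1,\ldots,2n\}$ (rather than a shorter initial window) is the right range, is the delicate point; the degenerate cases (an endpoint of some $\typei_k$ equal to $0$, or $\tent^{2n}(\tfrac12)=\tfrac12$) either cannot occur under the hypotheses or are disposed of directly.
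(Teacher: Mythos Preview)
The paper does not prove this lemma at all; it is imported verbatim from \cite{OK23} and stated without proof, so there is no in-paper argument to compare your proposal against.

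That said, your approach is sound and in fact closer to complete than you seem to realise. The reduction in your first paragraph (backward edge from $\typei_k$ $\Longleftrightarrow$ $\typei_k$ straddles $\tfrac12$) is exactly right, and the containment $\typei_{m^\ast+1+j}\subseteq\closure{\typei_{1+j}}$ with shared endpoint $\tent^{1+j}(\tfrac12)$, which you sketch in your third paragraph, is correct and follows by the straightforward induction you indicate (if $\typei_{1+j}$ does not straddle, apply $\tent$ to the inclusion; if it does, observe that $\typei_{m^\ast+1+j}$, not straddling and sharing the endpoint $\tent^{1+j}(\tfrac12)$ with $\typei_{1+j}$, lies in the half of $\typei_{1+j}$ on the $\tent^{1+j}(\tfrac12)$ side of $\tfrac12$, which is precisely the half whose $\tent$-image is $\typei_{2+j}$).

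What you are missing is only the clean closing step, and it is much simpler than the ``delicate'' window-tightening you anticipate. Specialise the containment to $j=m^\ast-1$: then $\typei_{2m^\ast}\subseteq\closure{\typei_{m^\ast}}$ and they share the endpoint $\tent^{m^\ast}(\tfrac12)$. Since $\typei_{m^\ast}$ straddles $\tfrac12$ but $\typei_{2m^\ast}$ does not, $\typei_{2m^\ast}$ lies entirely in the portion of $\typei_{m^\ast}$ between $\tent^{m^\ast}(\tfrac12)$ and $\tfrac12$, so
\[
|\typei_{2m^\ast}|\;\le\;\bigl|\tfrac12-\tent^{m^\ast}(\tfrac12)\bigr|.
\]
On the other hand, since none of $\typei_{m^\ast+1},\ldots,\typei_{2m^\ast-1}$ straddle, $|\typei_{2m^\ast}|=\mu^{m^\ast-1}|\typei_{m^\ast+1}|$, and you already computed $|\typei_{m^\ast+1}|=\mu\,\bigl|\tfrac12-\tent^{m^\ast}(\tfrac12)\bigr|$. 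Combining,
\[
\mu^{m^\ast}\,\bigl|\tfrac12-\tent^{m^\ast}(\tfrac12)\bigr|\;\le\;\bigl|\tfrac12-\tent^{m^\ast}(\tfrac12)\bigr|,
\]
hence $\mu^{m^\ast}\le 1$, contradicting $\mu>1$ and $m^\ast\ge 1$. The near-return bound you derive from $|\typei_{2n}|\le\tfrac\mu2$ is correct but unnecessary, and no separate shadowing estimate is needed.
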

 Roughly speaking, Lemma~\ref{lemm:transition_function} implies that 
  the level increases by one, or decreases into (almost) a half by a transition step. 
 Furthermore, 
  Lemma~\ref{lem:back2} implies that 
  there is at least one way to decrease the level during $n,\ldots,2n$.

\paragraph{Markov model.}
 Furthermore, 
  the state transitions preserve the uniform measure, over $[0,1)$ in the beginning, 
  since the tent map is  piecewise linear. 
\begin{lemma}[\cite{OK23}]\label{lem:cond-prob}
Let $X$ be a random variable drawn from $[0,1)$ uniformly at random. 
Let $\bitseq_n \in {\cal L}_n$. Then, 
\begin{align*}
 \Pr[ B_{n+1}= b \mid \enc^n(X) = \bitseq_n]
    = \frac{|\typefn(\bitseq_n b)|}{|\typefn(\bitseq_n 0)|+|\typefn(\bitseq_n 1)|}
\end{align*}
 holds for $b \in \{0,1\}$, where let  $|\typefn(\bitseq_n b)| = 0$ if $\bitseq_n b \not\in {\cal L}_{n+1}$. 
\end{lemma}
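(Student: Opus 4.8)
The plan is to rewrite the conditional probability as a ratio of Lebesgue measures of preimages under $\enc^n$ and $\enc^{n+1}$, and then to evaluate those measures using the piecewise-linearity of $\tent$ recorded in Lemma~\ref{lem:encode3}. Throughout, $|\cdot|$ denotes Lebesgue measure on $[0,1)$; since $X$ is uniform, $\Pr[\enc^m(X)=\bitseqc] = |\{x\in[0,1) : \enc^m(x)=\bitseqc\}|$ for every codeword $\bitseqc$ and every $m$. Note also that $\{\enc^n(X)=\bitseq_n\}$ has positive probability: by Lemma~\ref{lem:transition} the segment-type $\typefn(\bitseq_n)$ is a nondegenerate interval, so the preimage is a nondegenerate interval, and the conditioning is legitimate.

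The first and only substantive step is the identity $|\typefn(\bitseqc)| = \mu^m\,|\{x : \enc^m(x)=\bitseqc\}|$ for every $\bitseqc\in{\cal L}_m$. Write $S = \{x\in[0,1):\enc^m(x)=\bitseqc\}$; by Propositions~\ref{prop:order} and \ref{prop:-heikai} this is a half-open interval, hence measurable. By Lemma~\ref{lem:encode3}, any two points $x,x'\in S$ satisfy $|\tent^m(x)-\tent^m(x')| = \mu^m|x-x'|$, so $\tent^m|_S$ is injective and scales all distances by the constant factor $\mu^m$; by the definition \eqref{def:type} its image is exactly $\typefn(\bitseqc)$. A distance-scaling bijection between subsets of $\mathbb{R}$ is bi-Lipschitz and therefore multiplies Lebesgue measure by the scale factor, which gives the identity. (Equivalently: Lemma~\ref{lem:encode1} makes $\tent^m|_S$ monotone, so it is an affine bijection of intervals and the length identity is immediate.) When $\bitseqc\notin{\cal L}_m$ both sides are $0$, matching the stated convention $|\typefn(\bitseqc)|=0$.

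It then remains to assemble the pieces. Since the tent code is defined recursively by \eqref{def:encode0}--\eqref{def:encode1}, the first $n$ bits of $\enc^{n+1}(x)$ equal $\enc^n(x)$; hence, writing $B_{n+1}$ for the $(n{+}1)$-st bit of $\enc^{n+1}(X)$, the event $\{\enc^n(X)=\bitseq_n\}\cap\{B_{n+1}=b\}$ coincides with $\{\enc^{n+1}(X)=\bitseq_n b\}$, and for $b\in\{0,1\}$ these two events partition $\{\enc^n(X)=\bitseq_n\}$ (one of them possibly empty). Applying the identity of the previous paragraph with $m=n+1$, so that the factor $\mu^{-(n+1)}$ cancels between the numerator and the two-term denominator, yields
\[
 \Pr[B_{n+1}=b \mid \enc^n(X)=\bitseq_n]
 = \frac{\Pr[\enc^{n+1}(X)=\bitseq_n b]}{\Pr[\enc^{n+1}(X)=\bitseq_n 0]+\Pr[\enc^{n+1}(X)=\bitseq_n 1]}
 = \frac{|\typefn(\bitseq_n b)|}{|\typefn(\bitseq_n 0)|+|\typefn(\bitseq_n 1)|},
\]
which is the assertion.

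The main (and rather mild) obstacle is the measure identity $|\typefn(\bitseqc)|=\mu^m|S|$: one must be careful that $S$ is measurable and that the measure-zero set of $x$ with $\tent^i(x)=\tfrac{1}{2}$ for some $i\le m$ is harmless, since it has probability $0$. Everything else is bookkeeping with conditional probabilities, and in particular no appeal to the state-transit machine or to the bound $|\typeset_n|\le 2n$ of Theorem~\ref{theo:number_of_type} is needed here.
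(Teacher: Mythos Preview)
Your proof is correct. The paper does not actually prove Lemma~\ref{lem:cond-prob} itself (it is imported from \cite{OK23}), but the closely related Lemma~\ref{lem:trans-prob} is proved in Appendix~\ref{apx:trans-prob} by exactly the same mechanism you use: the measure identity $\mu^{m}|S_m| = |\typefn(\enc^{m}(y))|$ coming from the piecewise linearity of $\tent^m$, followed by cancellation of the common scale factor in numerator and denominator. Your derivation of that identity from Lemma~\ref{lem:encode3} together with Propositions~\ref{prop:order} and~\ref{prop:-heikai} is a slightly more explicit version of what the paper invokes as ``$\tent^i$ preserves the uniform measure (cf.\ Lemma~B.6 in \cite{OK23})''.
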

 Let ${\cal D}_{n,\mu}$ (or simply ${\cal D}_n$) denote a probability distribution over ${\cal L}_n$ 
  which follows $\enc^{n}(X) $ for $X$ is uniformly distributed over $[0,1)$, 
 i.e., ${\cal D}_n$ represents the probability of appearing $\bitseq_n \in {\cal L}_n$ 
   as given the initial condition $x$ uniformly at random. 
%
\begin{theorem}[\cite{OK23}]\label{thm:main}
 Let $\mu \in (1,2)$ be a rational given by an irreducible fraction $\mu=c/d$. 
 Then, it is  possible to generate $B \in {\cal L}_n$ according to ${\cal D}_n$ in $\Order(\lg^2 n \lg^3 d / \lg^4 \mu)$ space in expectation 
 (as well as, with high probability). 
\end{theorem}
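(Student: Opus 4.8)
The plan is to realise ${\cal D}_n$ by an online sampler that emits $b_1,\dots,b_n$ one bit at a time while keeping only a succinct \emph{fingerprint} of the current prefix $b_1\cdots b_i$. By Theorem~\ref{theo:number_of_type} every segment-type is either $\typei_k$ or $\flip{\typei}_k$, so the fingerprint consists of the integer level $k_i=L(\typefn(b_1\cdots b_i))$, one bit recording which of $\typei_{k_i},\flip{\typei}_{k_i}$ the current type is, and an $\Order(\lg n)$-bit step counter. To emit $b_{i+1}$ the sampler (i) recomputes the actual interval $\typefn(b_1\cdots b_i)$ from scratch by running the recursion of Lemma~\ref{lem:typei} $k_i$ times; (ii) applies Lemma~\ref{lem:transition} to obtain the two candidate successors $\typefn(b_1\cdots b_i 0)$ and $\typefn(b_1\cdots b_i 1)$ with their lengths (one may be empty); (iii) draws $b_{i+1}\in\{0,1\}$ with probability proportional to those two lengths; and (iv) updates the fingerprint, using Lemma~\ref{lemm:transition_function} to recognise the chosen successor either as $\typei_{k_i+1}$ or $\flip{\typei}_{k_i+1}$ on the same side, so $k_{i+1}=k_i+1$, or as a type on the opposite side at a level $k_{i+1}\le k_i/2+1$, which is pinned down by scanning $\typei_1,\typei_2,\dots$ until a match appears. (If $\tent_{\mu}^{j}(\tfrac{1}{2})=\tfrac{1}{2}$ for some small $j$ the tower is finite and the machine collapses to a fixed finite automaton, which only helps, so we may assume the hypotheses of Lemmas~\ref{lemm:transition_function} and~\ref{lem:back2}.) Correctness is then immediate from Section~\ref{sec:OK23}: Lemma~\ref{lem:cond-prob} says that the conditional law of $B_{i+1}$ given $\enc^i(X)=b_1\cdots b_i$ depends on the prefix only through its segment-type and equals exactly the length-proportional law used in step~(iii), while Lemma~\ref{lem:transition} guarantees the fingerprint update tracks the true segment-type; hence the emitted word has the law of $\enc^n(X)$ for $X$ uniform on $[0,1)$, i.e.\ of ${\cal D}_n$.

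Next I would account for space. Since $\mu=c/d$, one application of $\tent$ multiplies denominators by at most $d$, so $\typei_k$ has rational endpoints representable in $\Order(k\lg d)$ bits, and each of the $\Order(k)$ steps of the Lemma~\ref{lem:typei} recursion — a few additions, one multiplication by $c/d$, one comparison with $\tfrac{1}{2}$ — runs within $\Order(k\lg d)$ space; the same budget covers computing the two successors, the length-proportional coin flip (comparing a stream of random bits against a ratio of $\Order(k\lg d)$-bit integers), and the scan that identifies the new level. Therefore the space in use at step $i$ is $\Order(\lg n+k_i\,\lg d)$, and the space of the whole run is $\Order\!\bigl(\lg n+(\max_{i\le n}k_i)\,\lg d\bigr)$, so it remains to bound $\max_{i\le n}k_i$.

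Finally I would analyse the level process. It increases by exactly one on every step whose current type does not contain $\tfrac{1}{2}$ in its interior (a forced transition), and on a ``straddling'' step it moves either to $k_i+1$ or, with the complementary probability, down to some level $\le k_i/2+1$ (Lemma~\ref{lemm:transition_function}); straddling steps cannot be postponed forever, since Lemma~\ref{lem:back2} guarantees a usable downward branch somewhere between levels $m$ and $2m$ for every $m$. To reach a high level $\ell$ the trajectory must climb through every straddling level below $\ell$ (in particular all those in $(\ell/2,\ell)$) without ever taking a downward branch, and because the transitions preserve Lebesgue measure (Lemma~\ref{lem:cond-prob}) the \emph{aggregate} downward-branch probability over such a stretch is bounded away from $0$ in terms of $\mu$; this forces the probability of so long a climb to decay fast in $\ell$, and a union bound over the $\le n$ steps gives that $\max_{i\le n}k_i$ is polylogarithmic in $n$, with the dependence on $d$ and $\mu$ coming out — after the exact-arithmetic accounting of the previous paragraph — to the recorded $\Order(\lg^2 n\,\lg^3 d/\lg^4\mu)$ bits, both in expectation and with high probability. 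The crux, and the one genuinely delicate point, is this last estimate: the downward-branch probability at an individual straddling level can be tiny (its target type may be a very short interval), so one cannot reason level by level but must amortise over a whole window of straddling levels — equivalently, track a potential $\E_{{\cal D}_i}[\varphi(k_i)]$ built from the invariant measure — in order to extract a tail sharp enough to yield a polylogarithmic maximum. Everything else reduces to the recursions and the measure-preservation already recorded in Section~\ref{sec:OK23}.
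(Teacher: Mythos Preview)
This theorem is quoted from \cite{OK23} and not proved in the present paper, but the paper's own argument for the closely related Lemma~\ref{lemm:average_space_complexity} (Appendix~\ref{sec:proof-lemm5.2}) is the natural benchmark. Your algorithmic framework and correctness argument are fine, and in one respect you diverge from the paper in a way that is actually tighter: you recompute $\typei_{k_i}$ from scratch at each step, giving $\Order(k_i\lg d)$ working space, whereas the paper stores the whole ladder $v[1],u[1],\dots,v[k],u[k]$ via deferred update and invokes Lemma~\ref{lemm:space_of_stg} to get $\Order(\requiredstgsize^2\lg d)$. With your accounting the total space would be $\Order\bigl((\max_i k_i)\lg d\bigr)$, which is strictly smaller than the stated bound; so when you then assert that the numbers ``come out to the recorded $\Order(\lg^2 n\,\lg^3 d/\lg^4\mu)$'' you are either overcounting or tacitly reverting to the paper's $\requiredstgsize^2$ accounting.

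The real gap is in your last paragraph. You correctly flag that the downward-branch probability at a single straddling level can be arbitrarily small and that one must amortise across a window, but your proposed fix---``a potential $\E_{{\cal D}_i}[\varphi(k_i)]$ built from the invariant measure'', aggregate probability ``bounded away from $0$ in terms of~$\mu$''---does not close the argument and in particular misses where the $d$-dependence enters. The paper's route is concrete: by Lemma~\ref{lem:go_back} the probability of climbing straight from level $l$ to $2l$ equals $|\typei_{2l}|/(\mu^{l}|\typei_l|)$, and Lemma~\ref{lemm:min_sectype_length} supplies the arithmetic lower bound $|\typei_k|\ge 1/(2d^k)$ coming from the rationality of $\mu=c/d$. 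Telescoping over the geometric scales $l_i=2^i\lceil\log_\mu n\rceil$ (Lemma~\ref{lemm:existence_of_short_q_s}) then forces some $l\le 8\lceil\log_\mu d\rceil\lceil\log_\mu n\rceil$ with $|\typei_{2l}|/(\mu^{l}|\typei_l|)\le n^{-3}$, whence $\Pr[\requiredstgsize\ge 2l]\le n^{-2}$ by a union bound over the $n$ steps (Lemma~\ref{lem:l*}). Without something playing the role of $|\typei_k|\ge 1/(2d^k)$ your amortisation has no anchor---the invariant measure alone does not preclude a long run of straddling levels each with negligibly short ``down'' branch---so as written the tail estimate, and hence the claimed space bound, is not established.
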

Thus, we remark that the tent language ${\cal L}_n$ is recognized in $\Order(\lg^2 n \lg^3 d / \lg^4 \mu)$ space
{\em  on average} all over the initial condition $x \in [0,1)$, by Theorem~\ref{thm:main}.

\section{Calculation in ``Constant'' Space}\label{sec:calc}
 Theorem~\ref{thm:calc} is easy, once the argument in Section~\ref{sec:OK23} is accepted. 
 Algorithm~\ref{alg:calc-improve} shows the approximate calculation of $\enc^n(x)$ 
    so that the output is a {\em valid} tent code (recall the issue in Section~\ref{sec:issue}). 
%
%
 Roughly speaking, 
  Algorithm~\ref{alg:calc-improve} 
    calculates by rounding-off to $\kappa = \Order(\log \epsilon^{-1})$ bits 
    for the first $\kappa$ iterations (lines 5--10), and 
   then traces the automaton within the level $2 \kappa$ after the $\kappa$-th iteration (lines 11--20). 
%
 The algorithm traces finite automaton, and the desired space complexity is almost trivial. 
 A main issue is the correctness; 
   it is also trivial that the output sequence $\bitseq_n \in \{0,1\}^n$ is a valid tent code, 
  then 
   our goal is to prove 
   $\enc^n(x-\epsilon) \preceq \bitseq_n \preceq \enc^n(x+\epsilon)$.  
 The trick is based on the fact that the tent map is an extension map, 
  which will be used again in the smoothed analysis in the next session.

Then we explain the detail of Algorithm~\ref{alg:calc-improve}. 
 Let $\langle x \rangle_k$ denote a binary expression by the rounding off a real $x$ to the nearest $1/2^{k+1}$, 
  where the following argument only requires $|\langle x \rangle_k - x| \leq 1/2^k$, 
  meaning that rounding up and down is not essential.  
 Naturally assume that 
   $\langle x \rangle_k$ for $x \in [0,1)$ is represented by $k$ bits, 
   meaning that the space complexity of $\langle x \rangle_k$ is $\Order(k)$. 

 In the algorithm, 
  rationals $v[k]$ and $u[k]$ respectively denote  $\inf \typei_k$ and $\sup \typei_k$ for $k=1,2,\ldots$. 
 For descriptive purposes, 
  $v[0]$ and $u[0]$ corresponds to $q_0$, and  $v[-1]$ and $u[-1]$ corresponds to the reject state $\emptyset$ in $Q_n$. 
 The single bit $c[k]$ denotes $c_k$ for $\bitseqc_n =c_1 \cdots c_n = \enc^n(\frac{1}{2})$ (recall Theorem~\ref{theo:number_of_type}). 
 Thus, $\typei_k = [v[k],u[k])$ and $\flip{\typei}_k =(v[k],u[k]]$ if $c[k]=0$, 
  otherwise $\typei_k = (v[k],u[k]]$ and $\flip{\typei}_k =[v[k],u[k])$ see Section~\ref{sec:OK23}. 
 The integer $\delta(l,b) = l'$ represents the transition $\delta(\typei_l,b) = \typej'$, 
   where $\typej' = \typei_{l'}$ or $\flip{\typei}_{l'}$, 
  given by \eqref{def:delta} in Section~\ref{sec:OK23}.  
 Notice that if $\delta(\typei_l,b) = \typej'$ then  $\delta(\flip{\typei}_l,b) = \flip{\typej'}$ holds \cite{OK23}. 
 The pair $l$ and $b$ represent $Z_i = \typei_l$ if $b=c[l]$, 
  otherwise, i.e., $\flip{b} =c[l]$, $Z_i = \flip{\typei}_l$,
  at the $i$-th iteration (for $i=1,\ldots n$). 
 See Section~\ref{apx:subpro} for the detail of the subprocesses, Algorithms~\ref{alg:vucdelt} and \ref{alg:computel}.

\if0  
 Lines 6--14 correspond to a transition from $Z_i$ to $Z_{i+1}$. 
 Algorithm~\ref{alg1} outputs every bit $B_i$ every iteration at line 7, 
  to avoid storing all $B_1 \cdots B_n$ that consumes $n$-bits of space.
 To attain the $\Order(\poly \log n)$ space for Theorem~\ref{thm:main}, 
   we use the deferred update strategy; 
  we calculate $v[k]$ and $u[k]$ representing $\typei_k$ on demand, 
   in lines 15--27 according to \eqref{eq:transition1} and \eqref{eq:transition2}. 
 By a standard argument of the space complexity of basic arithmetic operations, 
   see e.g., \cite{KV}, 
 rationals $v[k]$ and  $u[k]$ requires $\Order(k \log d)$ bits for each $k=1,2,\ldots$, 
   where the rational $\mu \in (1,2)$ is given by an irreducible fraction $c/d$. 

 Then, we look at the space complexity of the algorithm. 
 Rationals $v[k']$ and $u[k']$ ($k'=-1,0,1,2,\ldots,k$) consume at most $\Order(k^2 \log d)$ bits in total, 
    where $k$ denotes its value in the end of iterations of Algorithm~\ref{alg1}.  
 Integers $\delta[k',0]$ and $\delta[k',1]$ ($k'=0,1,2,\ldots,k$) consume at most $\Order(k \log k)$ bits in total. 
 Bits $c[k']$ ($k'=0,1,2,\ldots,k$) consume at most $\Order(k)$ bits in total.  
 Integers $k$, $l$ use $\Order (\log k)$ bits, and $b$ uses a single bit. 
 The value of $k$ becomes $n$ in the worst case, 
  while we will prove in Section~\ref{sec:average-complexity} that $k$ is $\Order(\log n \log d)$ in expectation, as well as with high probability. 
\fi

\begin{algorithm}[t]
    \caption{Valid calculation with ``constant'' space (for $\mu,\epsilon$)}
    \label{alg:calc-improve}
    \begin{algorithmic}[1]
    \REQUIRE a real $x \in [0,1]$ 
    \ENSURE a bit sequence $b_1\cdots b_n \in {\cal L}_n(x,\epsilon)$ 
    \STATE {\rm int} $\kappa \leftarrow \lceil -3\lg \epsilon /\lg \mu \rceil$
    \STATE {\bf compute} rational $v[k]$, rational $u[k]$, bit $c[k]$, int $\delta[k,0]$ and int $\delta[k,1]$ for $k=-1$ to $2\kappa$ 
      by Algorithm~\ref{alg:vucdelt}
    \STATE {\rm int} $l \leftarrow 0$, {\rm bit} $b \leftarrow 1$
    \STATE {\rm rational} $z \leftarrow \langle x \rangle_{\kappa}$
    \FOR{$i=1$ to $\kappa$}
    \STATE  {\bf compute} {\rm bit} $b'$ based on $b$ and $z$ by \eqref{def:encode2}
    \RETURN $b'$ \COMMENT{as $b_i$}
    \STATE  {\bf update} $l$ based on  $b$ and $b'$, {\bf update} $b$ (by Algorithm~\ref{alg:computel})
    \STATE $z \leftarrow \langle \tent(z) \rangle_{\kappa}$
    \ENDFOR
    \FOR{$i=\kappa+1$ to $n$}
    \STATE $b' \leftarrow \arg \min \{ \delta(l,b'')  \mid b'' \in \{0,1\},\ \delta(l,b'') > 0 \}$
    \IF{$b=c[l]$}
    \STATE $b \leftarrow b'$
    \ELSE
    \STATE $b \leftarrow \flip{b'}$
    \ENDIF
    \RETURN $b$ \COMMENT{as $b_i$}
    \STATE $l \leftarrow \delta[l,b']$
    \ENDFOR
    \end{algorithmic}
\end{algorithm}

\begin{theorem}[Theorem~\ref{thm:calc}]
 Given a real $x \in [0,1)$, 
  Algorithm~\ref{alg:calc-improve} outputs $\bitseq_n \in {\cal L}_n(x,\epsilon)$. 
  The space complexity of Algorithm~\ref{alg:calc-improve} is $\Order(\lg^2 \epsilon^{-1} \lg d / \lg^2 \mu + \lg n)$. 
\end{theorem}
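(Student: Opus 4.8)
The plan is to establish the two claims separately: first the space bound (which the paper already flags as ``trivial''), then correctness, which amounts to the sandwich $\enc^n(x-\epsilon) \preceq \bitseq_n \preceq \enc^n(x+\epsilon)$.

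For the space complexity, I would simply account for the memory used by Algorithm~\ref{alg:calc-improve} line by line. The quantity $\kappa = \lceil -3\lg\epsilon/\lg\mu\rceil = \Theta(\lg\epsilon^{-1}/\lg\mu)$ is the governing parameter. By Lemma~\ref{lemm:space_of_stg}, storing the state transit machine restricted to levels up to $2\kappa$ (the rationals $v[k],u[k]$, the bits $c[k]$, and the transition integers $\delta[k,0],\delta[k,1]$ for $k=-1,\dots,2\kappa$) costs $\Order(\kappa^2\lg d)$ bits; substituting $\kappa=\Theta(\lg\epsilon^{-1}/\lg\mu)$ gives the $\Order(\lg^2\epsilon^{-1}\lg d/\lg^2\mu)$ term. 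The rounded value $z=\langle\tent(z)\rangle_\kappa$ uses $\Order(\kappa)$ bits, which is absorbed. Finally the loop counter $i$ ranges up to $n$ and the pointer $l$ up to $2\kappa$, contributing $\Order(\lg n)$; the single bits $b,b'$ are $\Order(1)$. Summing gives the stated bound.

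For correctness, the output $\bitseq_n$ is a valid tent code essentially by construction: lines 11--20 follow genuine transitions $\delta$ of the automaton of Section~\ref{sec:OK23}, and Lemma~\ref{lemm:transition_function} guarantees that as long as we start at a real level $\le 2\kappa$ the level never exceeds $2\kappa$ (it increases by one or roughly halves), so the automaton never leaves the stored portion; since we always pick $b'$ with $\delta(l,b')>0$, we never fall into the reject state, hence $\bitseq_n\in{\cal L}_n$. The substantive part is the sandwich inequality. My approach: let $z_0=\langle x\rangle_\kappa$ and $z_i=\langle\tent(z_{i-1})\rangle_\kappa$ be the rounded orbit of the first phase, and show $|z_i-x_i'|$ stays small for $x'$ in a suitable subinterval of $[x-\epsilon,x+\epsilon]$. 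Concretely, rounding introduces error $\le 2^{-\kappa}$ per step and $\tent$ expands distances by a factor $\mu$, so after $\kappa$ steps the accumulated discrepancy between the rounded orbit and the true orbit of $z_0$ is $\Order(\mu^\kappa 2^{-\kappa})$. The point of choosing $\kappa\ge -3\lg\epsilon/\lg\mu$ (equivalently $\mu^\kappa\ge\epsilon^{-3}$, so $2^{-\kappa}\le\mu^{-\kappa}$ in the regime forcing $\mu^\kappa 2^{-\kappa}$ to be comparable to a small power of $\epsilon$) is to make this discrepancy $\ll\epsilon$, so that the segment-type $Z_\kappa$ reached after phase one is a segment-type $\typefn(\bitseqc)$ of some $\bitseqc=\enc^\kappa(x'')$ with $x''\in[x-\epsilon,x+\epsilon]$, and moreover its level is $\le 2\kappa$ (here I would invoke Theorem~\ref{theo:number_of_type}/Lemma~\ref{lemm:transition_function} to bound the level). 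From there, phase two deterministically extends $\bitseqc$ inside the automaton; because $\enc^n$ is monotone (Proposition~\ref{prop:order}) and the segment-type records exactly the interval of endpoints $\tent^\kappa(x')$ consistent with the prefix, the resulting $\bitseq_n$ is $\enc^n(x')$ for some $x'\in[x-\epsilon,x+\epsilon]$, which by \eqref{eq:approxL} is precisely membership in ${\cal L}_n(x,\epsilon)$.

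The main obstacle is the error-propagation bookkeeping in phase one: I must verify that the rounded trajectory $z_0,\dots,z_\kappa$ genuinely tracks the tent code of a real point within $\epsilon$ of $x$ — i.e. that replacing $\langle\tent(z)\rangle_\kappa$ for $\tent(z)$ at each step never flips a bit in a way inconsistent with \emph{every} $x'\in[x-\epsilon,x+\epsilon]$ — and that the level of the segment-type so reached is at most $2\kappa$ so the stored automaton suffices. This is where the specific constant $3$ in $\kappa=\lceil-3\lg\epsilon/\lg\mu\rceil$ gets used, and care is needed because $\tent$ is only piecewise linear, so a step that straddles $\tfrac12$ needs separate treatment (the ``extension map'' remark in the text); I would handle this by working with the interval $[\,\langle x\rangle_\kappa-2^{-\kappa},\langle x\rangle_\kappa+2^{-\kappa}\,]$ of candidate initial points and tracking its image under the true $\tent^i$, showing it remains inside one linear branch except when the bit is genuinely determined, and that it stays within the $\epsilon$-window throughout.
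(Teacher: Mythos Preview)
Your space-complexity bookkeeping is fine, but you invoke the wrong lemma for why the level never exceeds $2\kappa$ in phase two. Lemma~\ref{lemm:transition_function} only says that from level $l$ a transition goes either to level $l{+}1$ or back to some level $\le l/2{+}1$; it does \emph{not} guarantee that the back option is ever available. What you need is Lemma~\ref{lem:back2}, which asserts that a back edge actually occurs somewhere in $\{\kappa{+}1,\dots,2\kappa\}$; since line~12 always picks $\arg\min$, the level is driven back below $\kappa$ before it can exceed $2\kappa$. This is how the paper argues the point.

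The more serious gap is in correctness. Your plan is to bound the accumulated discrepancy between the rounded orbit $z_i$ and the true orbit $\tent^i(z_0)$ by roughly $\mu^\kappa 2^{-\kappa}/(\mu{-}1)$ and argue this is $\ll\epsilon$. But with $\kappa=\lceil -3\lg\epsilon/\lg\mu\rceil$ one has $\mu^\kappa\approx\epsilon^{-3}$ and $2^{-\kappa}\approx\epsilon^{3/\lg\mu}$, so the bound is of order $\epsilon^{3/\lg\mu-3}/(\mu{-}1)$; this is $\ll\epsilon$ only when $3/\lg\mu>4$, i.e.\ $\mu<2^{3/4}\approx 1.68$, and for $\mu$ near~$2$ your bound diverges as $\epsilon\to 0$. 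Worse, the bound is valid only while $z_i$ and $\tent^i(z_0)$ stay on the same linear branch --- exactly the bit-flip obstacle you flag but do not resolve. The paper avoids both problems with a different, cleaner invariant. It introduces an auxiliary point $x^-=x-\epsilon/2$ and shows inductively that, whenever the bits of $z$ and $x^-$ still agree, the gap $|z_i-\tent^i(x^-)|$ is \emph{maintained} at $\ge\epsilon/2$: one tent-step expands the gap by a factor $\mu$ and rounding costs at most $2^{-\kappa}$, and the choice of $\kappa$ gives $2^{-\kappa}\le(\mu{-}1)\tfrac{\epsilon}{2}$, hence $\mu\cdot\tfrac{\epsilon}{2}-2^{-\kappa}\ge\tfrac{\epsilon}{2}$. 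Via Lemma~\ref{lem:encode1} this order-preserving gap yields $\enc^\kappa(x^-)\preceq b_1\cdots b_\kappa$ directly. A separate one-line argument (Lemma~\ref{lem:encode3}: equal $\kappa$-codes would force $\tfrac{\epsilon}{2}\mu^\kappa\le 1$) gives the strict inequality $\enc^\kappa(x-\epsilon)\prec\enc^\kappa(x^-)$, and the strict prefix inequality then propagates to all of $\bitseq_n$ by Proposition~\ref{prop:order}. No interval tracking or branch-straddling case analysis is needed.
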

\begin{proof}
 To begin with, 
    we remark that  Algorithm~\ref{alg:calc-improve} constructs 
    the transition diagram only up to the level $2\kappa$. 
 Nevertheless,  
   Algorithm~\ref{alg:calc-improve} correctly conducts the for-loop in lines 11--20, 
    meaning that all transitions at line 19 are always valid during the loop. 
 This is due to Lemma~\ref{lem:back2}, 
   which claims that there exists at least one $l \in \{\kappa + 1,\ldots,2 \kappa\}$ 
     such that $\delta(l,b) \leq \kappa$ (and also $\delta(l,b) >0$), meaning that it is a reverse edge. 
 Then, it is easy from Lemma~\ref{lemm:space_of_stg} that  
  the space complexity of Algorithm~\ref{alg:calc-improve} is $\Order(\kappa^2 \lg d) = \Order(\lg^2 \epsilon \lg d / \lg^2 \mu)$, 
  except for the space $\Order(\lg n)$ of counter $i$ (and $n$) at line 11. 

 Then, we prove $b_1 \cdots b_n \in {\cal L}_n(x,\epsilon)$. 
 Since the algorithm follows the transition diagram of ${\cal L}_n$ (recall Section~\ref{sec:OK23}), 
   it is easy to see that $b_1 \cdots b_n \in {\cal L}_n$, and hence  
  we only need to prove 
\begin{align}
 \enc^n(x-\epsilon) \preceq b_1 \cdots b_n \preceq \enc^n(x+\epsilon)
 \label{eq:round-order}
\end{align}
  holds. 
 We here only prove $\enc^n(x-\epsilon) \preceq b_1 \cdots b_n$ 
   while  $b_1 \cdots b_n \preceq \enc^n(x+\epsilon)$ is essentially the same. 
 The proof is similar to that of Proposition~\ref{prop:order}, while 
  the major issue is whether rounding $\langle z \rangle_{\kappa}$ preserves the ``ordering,'' 
  so does the exact calculation by Lemma~\ref{lem:encode1}. 

For convenience, 
 let $z_i$ denote the value of $z$ in the $i$-th iteration of the algorithm, i.e., 
 $z_{i+1} = \langle \tent(z_i) \rangle_{\kappa}$.  
Let $x^- = x-\epsilon/2$ and let $x^{--} = x-\epsilon$. 
The proof consists of two parts: 
\begin{align}
 \enc^{\kappa}(x^{--}) &\prec \enc^{\kappa}(x^-) \hspace{2em}\mbox{and} \label{eq:20231031a}\\
 \enc^{\kappa}(x^-) &\preceq b_1 \cdots b_{\kappa}. \label{eq:20231031b}
\end{align}

 For the claim \eqref{eq:20231031a}, notice that $\enc^{\kappa}(x^{--}) \preceq \enc^{\kappa}(x^-)$ by Lemma~\ref{lem:encode1}. 
 If $\enc^{\kappa}(x^{--}) = \enc^{\kappa}(x^-)$,  
  Lemma~\ref{lem:encode3} implies that 
   $|\tent^{\kappa}(x^{--}) - \tent^{\kappa}(x^-)| 
    = \frac{\epsilon}{2} \mu^{\kappa} 
    \geq \frac{\epsilon}{2} \mu^{-3 \lg \epsilon / \lg \mu} 
    = \frac{1}{2\epsilon^2 } > 1$, 
    which contradicts to $0 \leq \tent^{\kappa}(x') \leq 1$ for any $x' \in [0,1)$. 
 Now we get \eqref{eq:20231031a}. 

 For \eqref{eq:20231031b}, similar to the proof of Proposition~\ref{prop:order} based on Lemma~\ref{lem:encode1}, 
   we claim if $\enc^i(x^-) = b_1\cdots b_i$ then 
  \begin{align}
  \begin{cases}
    x^-_i < z_i  & \mbox{if $b_i = 0$, }\\ 
    x^-_i > z_i & \mbox{if $b_i  = 1$}
 \end{cases}
  \end{align}
  hold. 
 The basic argument is essentially the same as the proof of Lemma~\ref{lem:encode1}, 
  which is based on the argument of segment type (see \cite{OK23}), and 
  here it is enough to check 
\begin{align}
\mbox{ if $|x^-_i - z_i| \geq \frac{\epsilon}{2}$ then $|x^-_{i+1} - z_{i+1}| \geq \frac{\epsilon}{2}$, as far as $\enc^{i+1} (x) = \enc^{i+1} (z)$ }
\label{eq:20231030a}
\end{align}
  meaning that the rounding-off does not disrupt the order. 
Notice that 
\begin{align}
|x^-_{i+1} - z_{i+1}| 
 &= |\tent(x^-_i) - \langle \tent(z_i) \rangle_{\kappa} | 
 \geq |\tent(x^-_i) - \tent(z_i) | - |\langle \tent(z_i) \rangle_{\kappa} - \tent(z_i)| 
\label{eq:20231029a}
\end{align} 
 holds,  where the last inequality follows the triangle inequality
  $|\tent(x^-_i) - \langle \tent(z_i) \rangle_{\kappa} |  + |\langle \tent(z_i) \rangle_{\kappa} - \tent(z_i)| \geq |\tent(x^-_i) - \tent(z_i) |$. 
  Note that $|\tent(x^-_i) - \tent(z_i) | = \mu|x^-_i-z_i| \geq \mu\frac{\epsilon}{2}$ holds under the hypothesis $\enc^{i+1} (x) = \enc^{i+1} (z)$.  
 We also remark $|\langle \tent(z_i) \rangle_{\kappa} - \tent(z_i)| \leq \frac{1}{2^\kappa}$ by definition of 
  $\langle \cdot \rangle_{\kappa}$. 
 Furthermore, 
   we claim $\frac{1}{2^\kappa} \leq (\mu-1)\frac{\epsilon}{2}$ 
  by $\kappa \geq -3 \lg \epsilon / \lg \mu $:  
 note that $1 / \lg \mu \geq 0.7 - \lg (\mu -1)$ holds for $1 < \mu <2$, 
  and then  
  $\kappa \geq - 3 \lg \epsilon (0.7-\lg (\mu-1)) 
    = - 2.1\lg \epsilon + 3\lg \epsilon \lg (\mu-1) 
    \geq - \lg \epsilon^2 - \lg (\mu-1)
    \geq - \lg \frac{\epsilon}{2} - \lg (\mu-1)$ 
  holds
    where we use $\epsilon < 1/4$, 
  which implies the desired claim $\frac{1}{2^{\kappa}} \leq (\mu-1) \frac{\epsilon}{2}$. 
 Then, 
\begin{align}
 \eqref{eq:20231029a} 
 \geq \mu\tfrac{\epsilon}{2} - \tfrac{1}{2^{\kappa}} 
 \geq \mu\tfrac{\epsilon}{2} - (\mu-1) \tfrac{\epsilon}{2} = \tfrac{\epsilon}{2}, 
\end{align} 
 and we got \eqref{eq:20231030a}, and hence \eqref{eq:20231031b} by Proposition~\ref{prop:order}. 
 By \eqref{eq:20231031a} and \eqref{eq:20231031b}, 
   $\enc^{\kappa}(x-\epsilon) \preceq b_1 \cdots b_{\kappa}$ is easy. 
 Now, 
   we obtain \eqref{eq:round-order} by Proposition~\ref{prop:order}.  
\end{proof}

\if0
It is not difficult to figure out to output a word of ${\cal L}_n$ 
  by using the ``automaton'' given in Section~\ref{sec:recog}. 
Algorithm~\ref{alg:calc} outputs $b_1\cdots b_n \in {\cal L}_n(x,\epsilon)$, and 
the expected space complexity could be proved in $\Order(\log^2 n + \log \epsilon^{-1})$, 
 in a similar way as Theorems~\ref{?} and \ref{?}, 


\begin{algorithm}[t]
    \caption{Valid calculation with  $\Order(\log^2 n)$ space}
    \label{alg:calc}
    \begin{algorithmic}[1]
    \REQUIRE a real $x \in [0,1]$ 
    \ENSURE a bit sequence $b_1\cdots b_n \in {\cal L}_n(x,\epsilon)$ 
    \STATE {\rm int} $\kappa \leftarrow \lceil -\lg \epsilon /\lg \mu \rceil$
    \STATE {\bf compute} $v[k]$, $u[k]$, $c[k]$ for $k=-1$ to $\kappa$, 
      $\delta[k,0]$ and $\delta[k,1]$ for $k=0$ to $\kappa-1$ 
    \STATE {\rm int} $k \leftarrow \kappa$, {\rm int} $l \leftarrow 0$, {\rm bit} $b=1$
    \STATE {\rm rational} $z \leftarrow \langle x \rangle_{\kappa}$
    \FOR{$i=1$ to $n$}
    \STATE  {\bf compute} {\rm bit} $b'$ based on $l$, $b$ and $z$ (by Algorithm~\ref{?})
    \RETURN $b'$ \COMMENT{as $b_i$}
    \STATE  {\bf update} $l$ based on  $b$ and $b'$, {\bf update} $b$ (by Algorithm~\ref{?})
    \STATE $z \leftarrow \langle \tent(z) \rangle_{\kappa}$
    \IF{$l=k$}
    \STATE {\bf compute} $v[k+1]$, $u[k+1]$, $c[k+1]$, $\delta[k,0]$ and $\delta[k,1]$ (by Algorithm~\ref{?})
    \ENDIF
    \ENDFOR
    \end{algorithmic}
\end{algorithm}
\fi

\section{Smoothed Analysis for Decision}\label{sec:smooth-recog}
 Now, we are concerned with the decision problem, Problem~\ref{prob:dec}. 
%
 Algorithm~\ref{alg:recog} efficiently solves the problem with $\epsilon$-perturbed input, for Theorem~\ref{thm:smooth-recog}. 
Roughly speaking, 
 Algorithm~\ref{alg:recog} 
  checks whether $\bitseq_n \in {\cal L}_n$ at line 24, and 
  checks whether $\enc^n(x-\epsilon) \preceq \bitseq_n \preceq \enc^n(x+\epsilon)$ for lines 6--22. 
 Lines 25--27 show a deferred update of those parameters, 
  to save the space complexity. 

\begin{algorithm}[t]
    \caption{Decision (for $\mu,\epsilon$)}
    \label{alg:recog}
    \begin{algorithmic}[1]
    \REQUIRE a bit sequence $b_1 \cdots b_n \in \{0,1\}^n$ and a real $x \in [0,1]$ 
    \ENSURE Accept if $b_1\cdots b_n \in {\cal L}_n(x,\epsilon)$ and Reject if $b_1\cdots b_n \not\in {\cal L}_n(x,2\epsilon)$ 
    \STATE {\rm int} $\kappa \leftarrow \lceil -3\lg \epsilon/\lg \mu \rceil$
    \STATE {\bf compute} {\rm rational} $v[k]$, {\rm rational} $u[k]$, {\rm bit} $c[k]$, {\rm int} $\delta[k,0]$ and {\rm int} $\delta[k,1]$ for $k=-1$ to $\kappa$ by Algorithm~\ref{alg:vucdelt}
    \STATE {\rm int} $k \leftarrow \kappa$, {\rm int} $l \leftarrow 0$, {\rm int} ${\it case} \leftarrow 1$
    \STATE {\rm rational} $z^- \leftarrow \langle x-\frac{3}{2}\epsilon \rangle_{\kappa}$, 
                {\rm rational} $z^+ \leftarrow \langle x+\frac{3}{2}\epsilon \rangle_{\kappa}$
    \FOR{$i=1$ to $n$}
    \IF{${\it case}=1$}
    \STATE {\bf compute} {\rm bit} $b^-$, $b^+$ respectively based on $z^-$, $z^+$ with $b_{i-1}$ by \eqref{def:encode2}
    \STATE {\bf if} $b_i < b^-$ or $b^+ < b_i$ {\bf then} {\bf return} Reject and {\bf halt}
    \STATE {\bf else if} $b^- < b_i$ and $b_i = b^+ $ {\bf then} ${\rm case} \leftarrow 2$
    \STATE {\bf else if} $b^- = b_i$ and $b_i < b^+ $ {\bf then} ${\rm case} \leftarrow 3$
    \STATE {\bf else} $z^- \leftarrow \langle \tent(z^-) \rangle_{\kappa}$, $z^+ \leftarrow \langle \tent(z^+) \rangle_{\kappa}$ \COMMENT{i.e., $b_- = b_+ = b_i$}
    \ELSIF{${\it case}=2$}
    \STATE {\bf compute} {\rm bit} $b^+$ based on $z^+$ with $b_{i-1}$ by \eqref{def:encode2}
    \STATE {\bf if} $b^+ < b_i$ {\bf then} {\bf return} Reject and {\bf halt}
    \STATE {\bf else if} $b_i < b^+ $ {\bf then} ${\rm case} \leftarrow 0$
    \STATE {\bf else} $z^+ \leftarrow \langle \tent(z^+) \rangle_{\kappa}$ \COMMENT{i.e., $b_+ = b_i$}
    \ELSIF{${\it case}=3$}
    \STATE {\bf compute} {\rm bit} $b^-$ based on $z^-$  with $b_{i-1}$ by \eqref{def:encode2}
    \STATE {\bf if} $b_i < b^-$ {\bf then} {\bf return} Reject and {\bf halt}
    \STATE {\bf else if} $b^- < b_i$ {\bf then} ${\rm case} \leftarrow 0$
    \STATE {\bf else} $z^- \leftarrow \langle \tent(z^-) \rangle_{\kappa}$ \COMMENT{i.e., $b_- = b_i$}
    \ENDIF
    \STATE  {\bf update} $l$ based on  $b_{i-1}$ and $b_i$ (by Algorithm~\ref{alg:computel})
    \STATE {\bf if} $l = -1$ {\bf then} {\bf return} Reject and {\bf halt}
    \IF{$l=k$}
    \STATE {\bf compute} $v[k+1]$, $u[k+1]$, $c[k+1]$, $\delta[k,0]$ and $\delta[k,1]$ by Algorithm~\ref{alg:vucdelt}
    \ENDIF
    \ENDFOR
    \RETURN Accept
    \end{algorithmic}
\end{algorithm}

\begin{theorem}[Theorem~\ref{thm:smooth-recog}]
 Given a bit sequence $\bitseq_n \in \{0,1\}^n$ and 
  a real $x \in [0,1)$, 
 Algorithm~\ref{alg:recog} accepts it if $\bitseq_n \in {\cal L}(x,\epsilon)$ and 
  rejects it if  $\bitseq_n \not \in {\cal L}_n(x,2\epsilon)$. 
 If an ($\epsilon$-perturbed) instance $\bitseq_n$ is given by $\bitseq_n = \enc^n(X)$ for $X \in [x-\epsilon,x+\epsilon]$ uniformly  at random 
  then the space complexity of Algorithm~\ref{alg:recog} is $\Order(\lg^2 n/\lg^3 d + \lg \epsilon^{-1} /\lg d)$ in expectation.  
\end{theorem}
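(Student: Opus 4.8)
The plan is to establish the two assertions of the statement separately: the (deterministic) correctness of the accept/reject decision, and then the bound on the expected space under the $\epsilon$‑perturbation.

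\textbf{Correctness.}
I would prove the accept/reject guarantee by mimicking the proof of Theorem~\ref{thm:calc}, splitting its single lexicographic comparison into two. In Algorithm~\ref{alg:recog} the variable $l$ traces the state‑transit machine of Section~\ref{sec:OK23} on the word $\bitseq_n$; exactly as for Algorithm~\ref{alg:calc-improve}, every transition needed during the run is present once the machine has been extended to the current level (the deferred update in lines~25--27, together with Lemma~\ref{lem:back2}), and $l=-1$ occurs precisely when $\bitseq_n\notin{\cal L}_n$. Independently, $z^-$ and $z^+$ are the rounded orbits started at $\langle x-\tfrac32\epsilon\rangle_\kappa$ and $\langle x+\tfrac32\epsilon\rangle_\kappa$. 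Since $\kappa=\lceil-3\lg\epsilon/\lg\mu\rceil$ yields both $2^{-\kappa}\le(\mu-1)\tfrac{\epsilon}{2}$ and $\mu^{\kappa}\ge\epsilon^{-3}$, the ``rounding preserves the lexicographic order'' argument behind \eqref{eq:20231030a}--\eqref{eq:20231031b}, combined with $\mu<2$ (which places each $\langle x\mp\tfrac32\epsilon\rangle_\kappa$ in the interval $[x\mp2\epsilon,\,x\mp\epsilon]$) and with the ``a gap of size $\tfrac{\epsilon}{2}$ must split the code within $\kappa$ steps'' observation of \eqref{eq:20231031a}, shows that the bit sequences $\beta^-,\beta^+$ emitted by the two orbits satisfy $\enc^n(x-2\epsilon)\preceq\beta^-\preceq\enc^n(x-\epsilon)$ and $\enc^n(x+\epsilon)\preceq\beta^+\preceq\enc^n(x+2\epsilon)$. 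The case‑flag bookkeeping in lines~6--22 then implements exactly the two comparisons $\beta^-\preceq\bitseq_n$ and $\bitseq_n\preceq\beta^+$, returning Reject at the first violation. Hence $\bitseq_n\in{\cal L}_n(x,\epsilon)$, i.e.\ $\enc^n(x-\epsilon)\preceq\bitseq_n\preceq\enc^n(x+\epsilon)$, gives $\beta^-\preceq\bitseq_n\preceq\beta^+$ and $\bitseq_n\in{\cal L}_n$, so the run accepts; while $\bitseq_n\notin{\cal L}_n(x,2\epsilon)$ gives one of $\bitseq_n\notin{\cal L}_n$, $\bitseq_n\prec\beta^-$, $\bitseq_n\succ\beta^+$, so the run rejects.

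\textbf{Space.}
For the space bound I would first account for the live storage cell by cell: by Lemma~\ref{lemm:space_of_stg} the rational tables $v[\cdot],u[\cdot]$ up to the largest index $K$ reached cost $\Order(K^{2}\lg d)$ bits, $c[\cdot]$ costs $\Order(K)$ bits, the table $\delta[\cdot,0],\delta[\cdot,1]$ costs $\Order(K\lg K)$ bits, $z^-$ and $z^+$ cost $\Order(\kappa)$ bits, and the counters $i,k,l$ and the flag cost $\Order(\lg n)$ bits. Since $k$ is incremented only when $l=k$, we have $K=\max(\kappa,\Lambda)$ where $\Lambda:=\max_{1\le i\le n}l_i$ is the top level attained by the trace, so the total is $\Order\!\big((\kappa+\Lambda)^{2}\lg d+\lg n\big)$ and everything reduces to bounding $\E[\Lambda^{2}]$ (the concentration already used for Theorem~\ref{thm:main} reduces this to a bound on $\E[\Lambda]$). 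This is where the perturbation enters: because $\tent^{\kappa}$ expands lengths by $\mu^{\kappa}\ge\epsilon^{-3}$, after $\kappa$ steps the interval $[x-\epsilon,x+\epsilon]$ has been spread over $[0,1)$, so that, conditioned on the length‑$\kappa$ prefix $\enc^{\kappa}(X)$, the tail bits of $\bitseq_n=\enc^{n}(X)$ for $X$ uniform on $[x-\epsilon,x+\epsilon]$ obey the same Markov model as $\enc^{n}$ of a point uniform on $[0,1)$ (Lemma~\ref{lem:cond-prob}). Hence for $i\ge\kappa$ the level process $l_i$ is distributed as in the analysis behind Theorem~\ref{thm:main}: by Lemma~\ref{lemm:transition_function} a step either raises $l$ by one or collapses it below $\tfrac{l}{2}+1$, by Lemma~\ref{lem:back2} a collapse is forced within every window $\{m+1,\dots,2m\}$, and the renewal/geometric‑series estimate of \cite{OK23} yields $\E[\max_{\kappa\le i\le n}l_i]=\Order(\lg n)$ up to the $\lg d$‑ and $1/\lg\mu$‑type factors; for $i<\kappa$ only $l_i\le i<\kappa$ is used, which the $\kappa$ term already absorbs. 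Substituting back and tracking the logarithmic factors in $d$ and $\lg\mu$ then gives the claimed space complexity.

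\textbf{Main obstacle.}
The one genuinely delicate step is the middle claim of the space argument: making rigorous that after $\kappa$ iterations the $\epsilon$‑perturbed code is governed by the uniform Markov model ${\cal D}$ of Section~\ref{sec:OK23}. Because $\tent$ folds, the ``stretching'' of $[x-\epsilon,x+\epsilon]$ onto $[0,1)$ has to be argued at the level of segment‑types (Lemma~\ref{lem:transition}) rather than of real intervals, and one must check that the conditional bit probabilities of Lemma~\ref{lem:cond-prob} are inherited; only after that can the max‑level estimate be imported from \cite{OK23}'s proof of Theorem~\ref{thm:main}. By contrast, the rounding‑error inequalities in the correctness part and the cell‑by‑cell accounting are mechanical, once Theorem~\ref{thm:calc} and Lemma~\ref{lemm:space_of_stg} are in hand.
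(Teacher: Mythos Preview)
Your correctness argument matches the paper's: both reduce to the ordering/rounding estimates already established for Theorem~\ref{thm:calc}, and your bracketing $\enc^n(x-2\epsilon)\preceq\beta^-\preceq\enc^n(x-\epsilon)$ (together with the symmetric one for $\beta^+$) is exactly what the paper is after.

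The space argument has the right architecture---bound the workspace by $\Order(K^{2}\lg d)$ via Lemma~\ref{lemm:space_of_stg}, then reduce the level process to the uniform Markov model of \cite{OK23}---but the specific ``mixing time'' you propose does not close. You take $t_0=\kappa=\lceil 3\log_\mu\epsilon^{-1}\rceil$, which depends on $\epsilon$ and not on $n$; after $\kappa$ steps the two boundary codewords (those whose preimage touches $x\pm\epsilon$) still carry probability of order $\epsilon^{2}$, and on that event you only have the trivial bound $K\le n$, contributing $\epsilon^{2}n^{2}$ to $\E[K^{2}]$. That is not $\Order(\mathrm{polylog}\,n)$ unless $\epsilon$ itself decays with $n$. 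The paper's fix is to choose the coverage time in terms of $n$: it proves (Lemma~\ref{lem:cover}) that after $\lceil 2\log_\mu n\rceil$ steps $X$ \emph{fully covers} the sub-interval $(x-\epsilon+\tfrac{1}{n^{2}},\,x+\epsilon-\tfrac{1}{n^{2}})$, so the excluded boundary has measure $\Order(1/n^{2})$ and its contribution to $\E[K^{2}]$ is $\Order(1)$. The Markov transition is then supplied by Lemma~\ref{lem:trans-prob} (the $[x-\epsilon,x+\epsilon]$-analogue of Lemma~\ref{lem:cond-prob}, not Lemma~\ref{lem:cond-prob} itself), and the max-level bound comes from showing (Lemma~\ref{lemm:existence_of_short_q_s}) that some $l\le 8\lceil\log_\mu d\rceil\lceil\log_\mu n\rceil$ satisfies $|\typei_{2l}|/(\mu^{l}|\typei_{l}|)\le n^{-3}$, whence $\Pr[K\ge 2l]\le n^{-2}$ directly---rather than via a renewal estimate on $\E[\Lambda]$ and a separate concentration step.

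So your ``main obstacle'' is correctly located, but its resolution requires replacing $\kappa$ by an $n$-dependent cutoff and formalising ``spread over $[0,1)$'' as the paper's full-coverage condition; without that change the boundary term swamps the expectation.
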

\begin{proof}
 The correctness proof is essentially the same as that of Theorem~\ref{thm:calc}. 
 In the algorithm, line 29 checks whether $b_1\cdots b_n \in {\cal L}_n$. 
 We see whether  
    $\enc^n(x-\epsilon) \prec b_1 \cdots b_n \prec \enc^n(x+\epsilon)$ 
   for the first at most $\kappa$ iterations, as follows. 
 Let $b^-_i$ and $b^+_i$ respectively represent $b^-$ and $b^+$ computed at line 7, 13 or 18 of the $i$th iteration. 
 Then, $b^-_1 \cdots b^-_{\kappa} \prec \enc^{\kappa}(x) \prec b^+_1 \cdots b^+_{\kappa}$ hold, 
  by the essentially same way as \eqref{eq:20231031a} in the proof of  Theorem~\ref{thm:calc}. 
 Similarly, 
  $\enc^{\kappa}(x-\epsilon) \prec b^-_1 \cdots b^-_{\kappa}$ holds, $b^+_1 \cdots b^+_{\kappa}\prec \enc^{\kappa}(x+\epsilon)$ as well. 
 Thus, 
  $\bitseq_n \prec \enc^{\kappa}(x-\epsilon)$ is safely rejected at line 8 or 19, 
  $\enc^{\kappa}(x+\epsilon) \prec \bitseq_n $ as well at line 8 or 13. 
 Thus we obtain the desired decision.

 Then we are concerned with the space complexity. 
 The analysis technique is very similar to or essentially the same as \cite{OK23} for a random generation of ${\cal L}_n$. 
 Let $X$ be a random variable drawn from the interval $[x-\epsilon,x+\epsilon]$ uniformly at random. 
 Let $\enc^n(X) = B_1,$
 Let  
\begin{align}
  \requiredstgsize=\max\{ k \in \mathbb{Z}_{>0} \mid L(\typefn(\enc^i(X))) = k \}
\label{def:K}
\end{align}  
  be a random variable 
  where $L(\typej) = k$ if $\typej=\typei_k$ or $\flip{\typei}_k$  
   (recall \eqref{def:level} as well as Theorem~\ref{theo:number_of_type}). 
 Lemma \ref{lemm:space_of_stg} implies 
   that its space complexity is $\Order(\requiredstgsize^{2} \lg{d})$. 
 Lemma \ref{lem:EK2},  appearing below, implies 
  \begin{align*}
    \E[\Order(\requiredstgsize^{2} \lg{d})]
    &= \Order(\E[\requiredstgsize^{2}] \lg{d}) \\
    &= \Order(\lg^{2}{n} \lg^{3}{d} / \lg^4 \mu)  
  \end{align*}
  and we obtain the claim. 
\end{proof}
\begin{lemma}
  \label{lem:EK2}
  Let $\mu \in (1, 2)$ be rational given by an irreducible fraction $\mu = c/d$.
  Suppose for $\mu \in (1, 2)$ that $\tent_{\mu}^{i}(\frac{1}{2}) \neq \frac{1}{2}$ holds for any $i = 1, \dots, n-1$. 
  Then,
    $\E[\requiredstgsize^{2}] = \Order(\log_{\mu}^{2}{n} \log_{\mu}^{2}{d}) = \Order(\lg^2 n \lg ^2 d / \lg^4 \mu)$. 
\end{lemma}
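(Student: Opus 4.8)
The plan is to analyze the random variable $\requiredstgsize$, the maximum level of a segment-type visited during the first $n$ steps of the automaton trace, and to show $\E[\requiredstgsize^2] = \Order(\log_\mu^2 n \log_\mu^2 d)$. The structural backbone is Lemma~\ref{lemm:transition_function}: at each transition the level either increases by exactly one, or it drops to at most (roughly) half of its current value, or the automaton rejects. So the level process is a ``increase-by-one-or-halve'' walk, reminiscent of the analysis of random binary search trees or the expected depth of a leader-election/height process. First I would set up, for each step $i = 1,\dots,n$, the event that the level strictly increases; call $\lev_i = L(\typefn(\enc^i(X)))$ and note $\lev_{i+1} \le \lev_i + 1$ always, with $\lev_{i+1} \le \lev_i/2 + \Order(1)$ whenever the step is not an increase (by Lemma~\ref{lemm:transition_function}). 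Hence $\requiredstgsize$ can only be large if there is a long run of consecutive increasing steps ending somewhere in $\{1,\dots,n\}$.

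The probabilistic input is Lemma~\ref{lem:cond-prob} (and the induced distribution $\distribution_n$): conditioned on the current state, the next bit $B_{i+1}$ is $b$ with probability proportional to $|\typefn(\bitseq_i b)|$. The key quantitative point I would extract is that, from a state of level $k$, the probability that the level increases (rather than halving or rejecting) is bounded away from $1$ --- more precisely, it is at most something like $1 - c/d$ or a constant strictly less than $1$ depending on $\mu = c/d$ --- because the ``reverse'' transition guaranteed by Lemma~\ref{lem:back2} (together with Lemma~\ref{lem:cond-prob}) carries a nonnegligible share of the measure. Actually, what I really need is: the probability of $j$ consecutive level-increasing steps starting from any fixed step is at most $\rho^j$ for some $\rho = \rho(\mu,d) < 1$, or more carefully at most $(1 - \Omega(\lg^{-1} d))^j$ so that the $\lg d$ factors come out right. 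This is the crux; I would get it by arguing that within any block of $\Theta(\lg d)$ (or $\Theta(1)$) consecutive steps there is a constant probability of a non-increasing step, using that the segment-type intervals have length bounded below by $d^{-\Order(k)}$ only — here I must be careful, since the probabilities in Lemma~\ref{lem:cond-prob} are ratios of interval lengths, and the denominator could make the halving probability small; I expect the resolution is exactly Lemma~\ref{lem:back2}, which ensures a reverse edge of substantial measure appears within every window of length $n$, but to get the per-step bound I instead lean on the finer structure that a reverse edge appears within every window of length $\Order(\lg d)$ at the relevant scale.

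Granting the geometric-type tail bound, the rest is a routine union/summation argument. I would write
\begin{align*}
\Pr[\requiredstgsize \ge m] \le \sum_{i=1}^{n} \Pr[\text{a run of} \ge \Omega(m) \text{ increasing steps ends at step } i] \le n \cdot \rho^{\Omega(m)},
\end{align*}
so $\Pr[\requiredstgsize \ge m] \le n\,\rho^{\Omega(m)}$, which is $\order(1)$ once $m \ge C \log_\rho^{-1} n = \Order(\lg n / \lg(1/\rho)) = \Order(\lg n \lg d / \lg\mu)$ after plugging in $1/\rho = 1 + \Omega(\lg^{-1} d)$ (so $\lg(1/\rho) = \Omega(\lg^{-1} d \cdot \lg^{-1}\text{(base change)})$, giving the stated $\log_\mu n \log_\mu d$ form). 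Then
\begin{align*}
\E[\requiredstgsize^2] = \sum_{m \ge 1} (2m-1)\Pr[\requiredstgsize \ge m] \le \Order(\log_\mu^2 n \log_\mu^2 d) + \sum_{m \ge C'} (2m-1)\, n\,\rho^{\Omega(m)},
\end{align*}
and the tail sum is $\Order(1)$ because the geometric decay dominates the polynomial factor once $m$ exceeds the threshold, yielding $\E[\requiredstgsize^2] = \Order(\log_\mu^2 n \log_\mu^2 d) = \Order(\lg^2 n \lg^2 d / \lg^4 \mu)$ as claimed.

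The main obstacle I anticipate is making the per-step (or per-$\Order(\lg d)$-block) bound on the probability of a level-increasing step precise and quantitatively correct in the dependence on $d$: Lemma~\ref{lemm:transition_function} tells us the \emph{combinatorial} shape of transitions, and Lemma~\ref{lem:cond-prob} gives the \emph{measure} of each branch, but converting ``a reverse edge exists somewhere in a window'' (Lemma~\ref{lem:back2}) into ``a reverse edge is taken with probability $\Omega(1/\lg d)$ per short window'' requires a lower bound on the length of the corresponding segment-type interval relative to the total, which is where the $\lg d$ (equivalently $\lg\mu$) factors are born and must be tracked with care. Everything downstream — the union bound over $n$ starting positions and the summation $\E[\requiredstgsize^2] = \sum_m (2m-1)\Pr[\requiredstgsize \ge m]$ — is mechanical.
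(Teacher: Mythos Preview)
Your overall skeleton---structural lemma, probabilistic tail bound on long runs of level-increasing steps, union bound over the $n$ possible endpoints, then summation for $\E[\requiredstgsize^2]$---matches the paper's architecture. But the step you correctly flag as the main obstacle is where your sketch does not go through, and the paper resolves it by a genuinely different mechanism than the one you propose.

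Your suggested bound ``probability of $j$ consecutive increases is at most $\rho^j$ (or $(1-\Omega(\lg^{-1}d))^j$)'' cannot be obtained per step or per $\Order(\lg d)$-block: when $\typei_k$ does not straddle $\tfrac12$ there is only one outgoing edge and the increase happens with probability exactly $1$, so no uniform $\rho<1$ exists, and Lemma~\ref{lem:back2} only guarantees a reverse edge somewhere in a doubling window $\{l+1,\dots,2l\}$, not in a window of length $\Order(\lg d)$. What the paper does instead is to \emph{telescope}: by Lemma~\ref{lem:trans-prob} the probability of going straight from level $l$ to level $2l$ is exactly
\[
\prod_{i=l}^{2l-1}\frac{|\typei_{i+1}|}{\mu|\typei_i|}=\frac{|\typei_{2l}|}{\mu^{l}|\typei_l|},
\]
a single ratio rather than a product of per-step factors. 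The quantitative input is then the rationality bound $|\typei_k|\ge 1/(2d^k)$ (Lemma~\ref{lemm:min_sectype_length}), which feeds a doubling argument (Lemma~\ref{lemm:existence_of_short_q_s}): setting $l_i=2^i\lceil\log_\mu n\rceil$, if $|\typei_{2l_i}|/(\mu^{l_i}|\typei_{l_i}|)>n^{-3}$ for all $i\le k$ then $|\typei_{l_{k+1}}|>1$ once $k\approx\log_2\log_\mu d$, a contradiction. This yields a specific $l\le l_*=\Order(\log_\mu d\,\log_\mu n)$ with $\Pr[\requiredstgsize\ge 2l]\le n\cdot n^{-3}=n^{-2}$, after which $\E[\requiredstgsize^2]\le(2l_*)^2+n^2\cdot n^{-2}$ follows immediately.

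A second, smaller gap: you invoke Lemma~\ref{lem:cond-prob}, which assumes $X$ uniform on $[0,1)$, but here $X$ is uniform on $[x-\epsilon,x+\epsilon]$. The paper handles this via the covering argument (Lemma~\ref{lem:cover} and Lemma~\ref{lem:trans-prob}): after $\lceil 2\log_\mu n\rceil$ iterations $X$ fully covers all but an $\Order(n^{-2})$-fraction of the interval, and on that event the transition probabilities coincide with those of Lemma~\ref{lem:cond-prob}; the exceptional event contributes at most $n^2\cdot\Order(n^{-2})=\Order(1)$ to $\E[\requiredstgsize^2]$ via the trivial bound $\requiredstgsize\le n$.
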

 We remark that the assumption of rational $\mu$ is not essential in Lemma~\ref{lem:EK2}; 
  the assumption is just for an argument about Turing comparability. 
 We can establish a similar (but a bit weaker) version of Lemma~\ref{lemm:average_space_complexity}
  for any real $\mu \in (0,1)$ (cf.\ Proposition 5.1 of \cite{OK23}). 
 Lemma~\ref{lem:EK2} is similar to Lemma~4.3 of \cite{OK23} for random generation, 
  where the major difference is that \cite{OK23} assumes $X_0$ is uniform on $[0,1)$ 
  while Lemma~\ref{lem:EK2} here assumes $X_0$ is uniform on $[x-\epsilon,x+\epsilon]$.  
 We only need to take care of some trouble 
  when the initial condition is around the boundaries of the interval, $x-\epsilon$ and $x+\epsilon$.

Suppose for the proof of Lemma~\ref{lem:EK2} that a random variable $X$ is drawn from the interval $[x-\epsilon,x+\epsilon]$ uniformly at random. 
Let $X_0 = X$ and let $X_i = \tent(X_{i-1})$  for $i=1,2,\ldots,n$. 
For convenience, 
 let $\typefn^i(x)$ denote $\typefn(\enc^i(x))$.  
Let $y \in [x-\epsilon,x+\epsilon]$. 
We say $X$ {\em covers around} $y$ at $i$-th iteration ($i \in \{1,2,\ldots,n\}$) if 
\begin{align}
\{\tent^i(y') \mid \enc^i(y') = \enc^i(y),\, x-\epsilon \leq y' \leq x+\epsilon\} = \typefn^i(y)
\end{align}
holds (recall $\typefn^i(y)=\{\tent^i(y') \mid \enc^i(y') = \enc^i(y)\}$ by definition \ref{def:type}). 
Similarly, 
 we say $X$ {\em fully covers} $S$ ($S \subseteq [x-\epsilon,x+\epsilon]$) at $i \in \{1,2,\ldots,n\}$ 
 if $X$ covers around every $y \in S$ at $i$. 
\begin{lemma}\label{lem:cover}
$X$ fully covers $(x-\epsilon + \frac{1}{n^2}, x + \epsilon - \frac{1}{n^2})$ at or after $\lceil 2\log_{\mu} n \rceil$ iterations. 
\end{lemma}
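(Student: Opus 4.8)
The plan is to show that after $m = \lceil 2\log_\mu n\rceil$ iterations, every segment-type $\typefn^m(y)$ for $y$ in the open interval $(x-\epsilon+\tfrac1{n^2},\,x+\epsilon-\tfrac1{n^2})$ is already ``saturated'' inside $[x-\epsilon,x+\epsilon]$, i.e.\ no point outside the perturbation window contributes to it. The key quantitative fact is Lemma~\ref{lem:encode3}: if $\enc^i(y') = \enc^i(y)$ then $|\tent^i(y')-\tent^i(y)| = \mu^i\,|y'-y|$, so the $\enc^i$-equivalence class of $y$ inside $[0,1)$ is an interval of length at most $\mu^{-i}$ (since its image lies in $[0,1)$). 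Consequently, for $i \ge m$ this class has length at most $\mu^{-m} \le \mu^{-2\log_\mu n} = n^{-2}$.

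First I would fix $y$ with $x-\epsilon+\tfrac1{n^2} < y < x+\epsilon-\tfrac1{n^2}$ and let $C = \{y' \in [0,1) \mid \enc^i(y') = \enc^i(y)\}$ for some $i \ge m$; by Propositions~\ref{prop:order} and~\ref{prop:-heikai} this $C$ is an interval (one end open, one closed), and by the length bound above $|C| \le n^{-2}$. Since $y \in C$ and $y$ is at distance more than $n^{-2}$ from both endpoints $x-\epsilon$ and $x+\epsilon$, the interval $C$ is entirely contained in $(x-\epsilon,x+\epsilon) \subseteq [x-\epsilon,x+\epsilon]$. Therefore
\begin{align*}
\{\tent^i(y') \mid \enc^i(y')=\enc^i(y),\ x-\epsilon \le y' \le x+\epsilon\}
 = \{\tent^i(y') \mid \enc^i(y')=\enc^i(y)\} = \typefn^i(y),
\end{align*}
which is exactly the statement that $X$ covers around $y$ at the $i$-th iteration. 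Since $y$ was an arbitrary point of the open sub-interval and $i$ an arbitrary index $\ge m$, $X$ fully covers $(x-\epsilon+\tfrac1{n^2},\,x+\epsilon-\tfrac1{n^2})$ at and after $m = \lceil 2\log_\mu n\rceil$ iterations.

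The main thing to be careful about is the claim that the $\enc^i$-equivalence class is a genuine interval and that Lemma~\ref{lem:encode3}'s identity applies across the whole class: this needs the monotonicity of $\enc^i$ in the lexicographic order (Proposition~\ref{prop:order}) together with right-continuity (Proposition~\ref{prop:-heikai}) to rule out the class being a union of separated pieces, and then Lemma~\ref{lem:encode3} is applied to the two extreme points of the class so that $\mu^i \cdot |C| \le 1$ gives the length bound. A minor subtlety is the boundary behaviour at an endpoint of type $\tent^i(\tfrac12)=\tfrac12$ and the open/closed end of $C$, but this only affects a single point and does not change the interval-length estimate, so it does not threaten the argument.
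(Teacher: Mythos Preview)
Your proof is correct and follows essentially the same route as the paper's own argument: both use Lemma~\ref{lem:encode3} together with the bound $|\tent^i(y)-\tent^i(y')|\le 1$ to conclude that the $\enc^i$-equivalence class of $y$ has diameter at most $\mu^{-i}\le n^{-2}$, and then appeal to the monotonicity of $\enc^i$ (Proposition~\ref{prop:order}) to see that this class is an interval which therefore sits inside $[x-\epsilon,x+\epsilon]$. The paper's proof is terser (it just says ``the claim is easy from Propositions~\ref{prop:tent-expansion} and~\ref{prop:order}''), whereas you spell out the interval containment and the ``at or after'' part explicitly; but the substance is the same.
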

\begin{proof}
Let $k=\lceil 2\log_{\mu} n \rceil$. 
By Lemma~\ref{lem:encode3}, 
 if $y,y' \in [0,1)$ satisfies $\enc^k(y) = \enc^k(y')$ then 
   $|\tent^k(y) - \tent^k(y')| =\mu^k |y-y'| \geq \mu^{2\log_{\mu}n} |y-y'| = n^2 |y-y'|$. 
 On the other hand, $|\tent^k(y) - \tent^k(y')| \leq 1$, and hence
 the claim is easy from Propositions \ref{prop:tent-expansion} and~\ref{prop:order}. 
\end{proof}

Then, the proof of Lemma~\ref{lem:EK2} consists of two parts: 
 one is that 
 the conditional expectation of $K^2$ is $\Order(\lg^2 n \lg ^2 d / \lg^4 \mu)$ 
 on condition that $X \in (x-\epsilon + \frac{1}{n^2}, x + \epsilon - \frac{1}{n^2})$ (Lemma~\ref{lemm:average_space_complexity}),  
 and the other is that the probability of $X \not\in (x-\epsilon + \frac{1}{n^2}, x + \epsilon - \frac{1}{n^2})$ 
 is small enough to allow  Lemma~\ref{lem:EK2}  
  from Lemma~\ref{lemm:average_space_complexity}. 
 The latter claim is almost trivial (see the proof of   Lemma~\ref{lem:EK2} below)

 The following lemma is the heart of the analysis, 
  which is a version of Lemma 4.3 of \cite{OK23} for random generation (see also Appendix~\ref{sec:proof-lemm5.2}, for a proof). 

\begin{lemma}
  \label{lemm:average_space_complexity}
  Let $\mu \in (1, 2)$ be rational given by an irreducible fraction $\mu = c/d$.
  Suppose for $\mu \in (1, 2)$ that $\tent_{\mu}^{i}(\frac{1}{2}) \neq \frac{1}{2}$ holds for any $i = 1, \dots, n-1$. 
  On condition that $X$ fully covers $(x-\epsilon + \frac{1}{n^2}, x + \epsilon - \frac{1}{n^2})$ at $\lceil \log_{\mu} n \rceil$, 
  the conditional expectation of $K^2$ is 
    $\Order(\log_{\mu}^{2}{n} \log_{\mu}^{2}{d}) = \Order(\lg^2 n \lg ^2 d / \lg^4 \mu)$. 
\end{lemma}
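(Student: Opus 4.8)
The plan is to reduce Lemma~\ref{lemm:average_space_complexity} to an analysis of the stochastic process that tracks the \emph{level} $L(\typefn^i(X))$ as $i$ runs from $1$ to $n$, and to bound the expected square of its maximum $\requiredstgsize$. First I would recall the structural facts from Section~\ref{sec:OK23}: by Lemma~\ref{lemm:transition_function} a single transition either raises the level by one or drops it to at most (roughly) half of its current value, and by Lemma~\ref{lem:back2} there is always at least one admissible symbol that forces such a halving somewhere in any window of length $n$. The conditioning in the hypothesis --- that $X$ fully covers $(x-\epsilon+\tfrac{1}{n^2},\,x+\epsilon-\tfrac{1}{n^2})$ at $\lceil \log_\mu n\rceil$ --- is exactly what lets me treat the evolution after that many iterations as if the initial distribution were uniform on a full interval, so that Lemma~\ref{lem:cond-prob} applies and the level process becomes a genuine Markov chain whose transition probabilities are ratios of segment-type lengths $|\typefn(\cdot)|$.

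The heart of the argument is then a drift/potential estimate for this chain. I would introduce the random variable $\requiredstgsize$ of \eqref{def:K} and show that at each step the conditional probability of the level being $k$ (rather than having been reset to something of order $\log$) decays geometrically: because the segment-type containing a ``high-level'' state is short (its length is exponentially small in the level, since each tent-map application multiplies lengths by $\mu$ and high level means many iterations without resetting), Lemma~\ref{lem:cond-prob} gives that the chain escapes downward with probability bounded below by a constant depending on $\mu$. Concretely I expect $\Pr[\requiredstgsize \ge k] = \Order(n \cdot \mu^{-ck})$ for a suitable constant, with the factor $n$ a union bound over the $n$ iterations and the factor $\mu^{-ck}$ from the short segment type. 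Summing $k \Pr[\requiredstgsize \ge k]$ then yields $\E[\requiredstgsize^2] = \Order(\log_\mu^2 n)$; multiplying by the $\Order(\log_\mu^2 d)$ overhead that comes from each state being stored in $\Order(k\lg d)$ bits (Lemma~\ref{lemm:space_of_stg}, giving $\Order(k^2 \lg d)$ and the extra $\lg d / \lg\mu$ factors when one re-expresses $\log_\mu$ in terms of $\lg$ and $\lg d$) produces the stated bound $\Order(\log_\mu^2 n \log_\mu^2 d) = \Order(\lg^2 n \lg^2 d/\lg^4\mu)$.

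I would then patch in the boundary issue that the surrounding text flags: near the endpoints $x\pm\epsilon$, the truncated interval can fail to be a full segment type, so Lemma~\ref{lem:cond-prob} does not directly give the Markov transition probabilities. The fix is Lemma~\ref{lem:cover}: after $\lceil 2\log_\mu n\rceil$ iterations the interval $(x-\epsilon+\tfrac1{n^2},x+\epsilon-\tfrac1{n^2})$ is fully covered, so on the conditioning event the only points that ever behave non-generically lie in a set of total length $\Order(1/n^2)$, contributing $\Order(1/n^2)$ to any probability and hence negligibly to the expectation even after multiplying by the worst-case bound $\requiredstgsize \le n$ and the $\Order(n^2\lg d)$ space. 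For the first $\Order(\log_\mu n)$ iterations the level is trivially at most $\Order(\log_\mu n)$ anyway, so they cost nothing asymptotically.

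The main obstacle, I expect, is making the ``escape with constant probability'' claim fully rigorous: one must show that whenever the level is large, the segment type $\typefn^i(X)$ is genuinely short \emph{and} that at least one of the two successor symbols that Lemma~\ref{lemm:transition_function} routes to a low level carries a non-negligible fraction of the conditional mass in Lemma~\ref{lem:cond-prob}. This requires controlling $|\typefn(\bitseq_n 0)| + |\typefn(\bitseq_n 1)|$ from below relative to the length of the low-level child --- essentially that the ``reset'' branch is not exponentially thinner than the ``climb'' branch --- which is where the explicit recursion of Lemma~\ref{lem:typei} and the combinatorics of Theorem~\ref{theo:number_of_type} have to be invoked carefully; this is precisely the point that mirrors Lemma~4.3 of \cite{OK23}, and I would adapt that argument, paying attention only to the modified initial distribution. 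The remaining steps --- the geometric tail bound, the summation for $\E[\requiredstgsize^2]$, and the $\lg d$ bookkeeping --- are routine.
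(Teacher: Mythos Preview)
Your overall plan---track the level process, bound the probability of reaching a high level, and then sum---matches the paper's, but there are two genuine gaps.

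First, you have misidentified the source of the $\log_\mu^2 d$ factor. Lemma~\ref{lemm:average_space_complexity} is a statement about $\E[K^2]$ alone, and the $\log_\mu^2 d$ appears in that expectation itself; it has nothing to do with the $\Order(k\lg d)$ storage cost of Lemma~\ref{lemm:space_of_stg}, which is applied only later in the proof of Theorem~\ref{thm:smooth-recog}. In the paper, the $d$-dependence enters through an arithmetic lower bound on segment-type lengths: because $\mu=c/d$ is rational, the endpoints of $\typei_k$ are rationals with denominator dividing $2d^k$, so $|\typei_k|\ge \tfrac{1}{2d^k}$ (Lemma~\ref{lemm:min_sectype_length}). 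This is what forces the barrier level to be $\Order(\log_\mu n\cdot\log_\mu d)$ rather than $\Order(\log_\mu n)$.

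Second, your proposed tail bound $\Pr[K\ge k]=\Order(n\,\mu^{-ck})$ and the claim that the chain ``escapes downward with probability bounded below by a constant depending on $\mu$'' are not justified and are likely false in this generality: the escape probability at level $i$ is governed by ratios $|\typei_{i+1}|/(\mu|\typei_i|)$, and $|\typei_i|$ can be as small as $\tfrac{1}{2d^i}$, so no $\mu$-only constant works. The paper avoids this by a different mechanism. It observes (Observation~\ref{obs:go_straight}) that to reach level $2l$ for the first time the chain must climb the unique path $\typei_l,\typei_{l+1},\ldots,\typei_{2l}$, whose probability telescopes to $\frac{|\typei_{2l}|}{\mu^l|\typei_l|}$ (Lemma~\ref{lem:go_back}). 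It then proves, by a doubling-and-contradiction argument using the lower bound $|\typei_l|\ge\tfrac{1}{2d^l}$, that there \emph{exists} some $l\le 8\lceil\log_\mu d\rceil\lceil\log_\mu n\rceil$ with $\frac{|\typei_{2l}|}{\mu^l|\typei_l|}\le n^{-3}$ (Lemma~\ref{lemm:existence_of_short_q_s}); a union bound over the $n$ time steps then gives $\Pr[K\ge 2l_*]\le n^{-2}$ for $l_*=8\lceil\log_\mu d\rceil\lceil\log_\mu n\rceil$, and $\E[K^2]\le (2l_*)^2+n^2\cdot n^{-2}$ finishes. So the key idea you are missing is not a uniform drift estimate but the existence of a single good ``barrier'' level, and that existence is where both the rationality of $\mu$ and the factor $\log_\mu d$ genuinely enter.
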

 Lemma~\ref{lemm:average_space_complexity} is supported by the following Lemma~\ref{lem:trans-prob}, 
  which is almost trivial from the fact that the iterative tent map $\tent^i$ is piecewise linear 
 (see Appendix~\ref{apx:trans-prob} for a proof). 
\begin{lemma}\label{lem:trans-prob}
Let $X \in [x-\epsilon,x+\epsilon]$ uniformly at random. 
Let $B_1 \cdots B_n = \enc^n(X)$. 
Suppose $X$ fully covers  $y \in[x-\epsilon,x+\epsilon]$ at $i$, and let $\enc^n(y) = b_1\cdots b_n$. 
Then, 
 $\Pr[B_{i+1} = b_{i+1} \mid \enc^i(X)= \enc^i(y)] = \frac{|\typefn(b_1\cdots b_i b_{i+1})|}{\mu |\typefn(b_1\cdots b_i)|}$ holds. 
\end{lemma}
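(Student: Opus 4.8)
The plan is to reduce the conditional probability to a ratio of interval lengths, and to evaluate those lengths using the fact that $\tent^i$ is affine with slope of absolute value $\mu^i$ on each ``code cell.'' Write
$P_i = \{y' \in [x-\epsilon,x+\epsilon] \mid \enc^i(y') = \enc^i(y)\}$ and
$P_{i+1} = \{y' \in [x-\epsilon,x+\epsilon] \mid \enc^{i+1}(y') = b_1\cdots b_i b_{i+1}\}$.
By Proposition~\ref{prop:order} (monotonicity of $\enc$), each of $P_i,P_{i+1}$ is a subinterval of $[x-\epsilon,x+\epsilon]$, $P_{i+1}\subseteq P_i$, and $P_i\ni y$ is nonempty; since $X$ is uniform on $[x-\epsilon,x+\epsilon]$ this gives $\Pr[B_{i+1}=b_{i+1}\mid \enc^i(X)=\enc^i(y)] = |P_{i+1}|/|P_i|$. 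If $b_1\cdots b_i b_{i+1}\notin{\cal L}_{i+1}$ then $P_{i+1}=\emptyset$ and $\typefn(b_1\cdots b_i b_{i+1})=\emptyset$, so both sides of the claimed identity are $0$; hence assume $b_1\cdots b_{i+1}\in{\cal L}_{i+1}$, and it remains to compute the two lengths.

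First I would compute $|P_i|$. By Lemma~\ref{lem:encode3}, any $y',y''\in P_i$ satisfy $|\tent^i(y')-\tent^i(y'')| = \mu^i|y'-y''|$, so $\tent^i$ restricted to $P_i$ is injective and affine with slope of absolute value $\mu^i$, mapping $P_i$ onto an interval of length $\mu^i|P_i|$. The hypothesis that $X$ covers around $y$ at $i$ says exactly that this image equals the whole segment-type $\typefn^i(y)=\typefn(b_1\cdots b_i)$ (recall \eqref{def:type}). Therefore $|P_i| = |\typefn(b_1\cdots b_i)|/\mu^i$.

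Next I would advance one step. Put $R=\typefn(b_1\cdots b_i)$ and let $R_{b_{i+1}}=\{w\in R \mid \text{the bit produced from } w \text{ and } b_i \text{ via \eqref{def:encode2} is } b_{i+1}\}$. The case analysis of Lemma~\ref{lem:transition} shows $R_{b_{i+1}}$ is a subinterval of $R$ lying entirely on one side of $\tfrac12$ (with $\tfrac12$ permitted as an endpoint) and that $\tent(R_{b_{i+1}}) = \typefn(b_1\cdots b_i b_{i+1})$; since $\tent$ is linear with slope $\pm\mu$ on each of $[0,\tfrac12]$ and $[\tfrac12,1)$, we get $|R_{b_{i+1}}| = |\typefn(b_1\cdots b_i b_{i+1})|/\mu$. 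Now $P_{i+1} = \{y'\in P_i \mid \tent^i(y')\in R_{b_{i+1}}\}$, and since $\tent^i$ maps $P_i$ affinely and bijectively onto $R$ with slope magnitude $\mu^i$, the preimage of $R_{b_{i+1}}$ has length $|R_{b_{i+1}}|/\mu^i$, i.e.\ $|P_{i+1}| = |\typefn(b_1\cdots b_i b_{i+1})|/\mu^{i+1}$. Dividing the two lengths yields $|P_{i+1}|/|P_i| = |\typefn(b_1\cdots b_i b_{i+1})|/(\mu\,|\typefn(b_1\cdots b_i)|)$, which is the claim.

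The main point requiring care is the identity $\tent^i(P_i) = \typefn(b_1\cdots b_i)$: the inclusion $\subseteq$ is immediate from \eqref{def:type}, but the reverse inclusion is precisely the ``covers around $y$'' hypothesis, and it is exactly what makes the lengths cancel — without it one only gets $\tent^i(P_i)\subseteq\typefn(b_1\cdots b_i)$ and the formula would break. A secondary, purely bookkeeping nuisance is tracking the open/closed endpoints of $R$, $R_{b_{i+1}}$, $P_i$, $P_{i+1}$ through Lemma~\ref{lem:transition}; as these differ by single points they do not affect any length, so I would note this and suppress it.
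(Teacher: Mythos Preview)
Your proposal is correct and follows essentially the same route as the paper's proof: both define the preimage intervals (the paper's $S_i,S_{i+1}$ are your $P_i,P_{i+1}$), express the conditional probability as the ratio $|P_{i+1}|/|P_i|$, and then use that $\tent^i$ is affine with slope magnitude $\mu^i$ on a code cell together with the covering hypothesis to obtain $\mu^i|P_i|=|\typefn(b_1\cdots b_i)|$ and $\mu^{i+1}|P_{i+1}|=|\typefn(b_1\cdots b_{i+1})|$. Your version is slightly more explicit---you split the $(i{+}1)$-step via Lemma~\ref{lem:transition} and invoke Lemma~\ref{lem:encode3} for the slope, and you handle the degenerate case $b_1\cdots b_{i+1}\notin{\cal L}_{i+1}$ separately---whereas the paper compresses this into the single remark that $\tent^i$ ``preserves the uniform measure''; but the argument is the same.
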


Lemma~\ref{lem:EK2} is easy from Lemma~\ref{lemm:average_space_complexity}, as follows. 
\begin{proof}[Proof of Lemma~\ref{lem:EK2}.]
 Note that  the probability of the event $X \not\in (x-\epsilon + \frac{1}{n^2}, x + \epsilon - \frac{1}{n^2})$ is at most $\frac{2}{n^2}$. 
 Using the trivial upper bound that $K \leq n$, 
  the claim is easy  from  
 Lemma~\ref{lemm:average_space_complexity}. 
\end{proof}

\section{Concluding Remarks}
 Motivated by the possibility of a {\em valid} computation of physical systems, 
   this paper investigated the space complexity of computing a tent code. 
 We showed that a valid approximate calculation is in $\Order(\log n)$ space, that is optimum in terms of $n$,  and 
   gave an algorithm for the valid decision working in $\Order(\log^2 n)$ space, 
   in a sense of the smoothed complexity where the initial condition $x'$ is $\epsilon$-perturbed from $x$. 
 A future work is an extension to the baker's map, 
   which is a chaotic map of piecewise but 2-dimensional. 
 For the purpose, we need an appropriately extended notion of the segment-type. 
 Another future work is an extension to the logistic map, 
  which is a chaotic map of 1-dimensional but quadratic. 
 The time complexity of the tent code is another interesting topic 
   to decide $b_n\in \{0,1\}$ as given a rational $x=p/q$ for a fixed $\mu \in \mathbb{Q}$. 
 Is it possible to compute in time polynomial in the input size $\log p+\log q+\log n$? 
 It might be NP-hard, but we could not find a result. 
 
\bibliographystyle{plain}

\appendix
\section{Subprocesses}\label{apx:subpro}
This section shows two subprocesses Algorithms~\ref{alg:vucdelt} and~\ref{alg:computel}, 
 which are called in Algorithms~\ref{alg:calc-improve} and~\ref{alg:recog}.
 Algorithm~\ref{alg:vucdelt} follows  Lemmas~\ref{lem:transition} and \ref{lem:typei}, and 
 Algorithm~\ref{alg:computel} follows \eqref{def:delta}.

\begin{algorithm}[h]
    \caption{Compute $v,u,c,\delta$}
    \label{alg:vucdelt}
    \begin{algorithmic}[1]
    \REQUIRE $k$   
    \ENSURE $v[k]$, $u[k]$, $c[k]$, $\delta[k-1,0]$, $\delta[k-1,1]$ 
    \IF{$k=-1$}
    \STATE $v[-1] \leftarrow 0$, $u[-1] \leftarrow 0$ \COMMENT{reject state}
    \ENDIF
    \IF{$k=0$}
    \STATE $v[0] \leftarrow 0$, $u[0] \leftarrow 1$, {\rm bit} $c[0] \leftarrow 0$ \COMMENT{$=q_0$}
    \ENDIF
     \IF{$k=1$}
    \STATE $v[1] \leftarrow 0$, $u[1] \leftarrow \tent(\frac{1}{2})$, $c[1] \leftarrow 1$, $\delta[0,0]\leftarrow 1$, $\delta[0,1]\leftarrow 1$   \COMMENT{$=\typei_1$}
    \ENDIF
    \IF{$k \geq 2$}
    \IF{$v[k-1] < \frac{1}{2} < u[k-1]$}
    \IF{$c[k-1]=0$}
    \STATE $\delta[k-1,0] \leftarrow k$, $v[k] \leftarrow \tent(v[k-1])$, $u[k] \leftarrow \tent(\tfrac{1}{2})$, $c[k] \leftarrow 0$
    \STATE $\delta[k-1,1] \leftarrow k'$ such that $v[k'] = \tent(u[k-1])$ and $u[k'] =\tent(\tfrac{1}{2})$
    \ELSE[i.e., $c\lbrack k-1 \rbrack=1$]
    \STATE $\delta[k-1,0] \leftarrow k$, $v[k] \leftarrow \tent(u[k-1])$, $u[k] \leftarrow \tent(\tfrac{1}{2})$, $c[k] \leftarrow 0$
    \STATE $\delta[k-1,1] \leftarrow k'$ such that $v[k'] = \tent(v[k-1])$ and $u[k'] =\tent(\tfrac{1}{2})$
    \ENDIF
    \ELSIF{$u[k-1] \leq \frac{1}{2}$}
    \STATE $\delta[k-1,c[k-1]] \leftarrow k$, $v[k] \leftarrow \tent(v[k-1])$, $u[k] \leftarrow \tent(u[k-1])$, $c[k] \leftarrow c[k-1]$
    \STATE $\delta[k-1,\flip{c[k-1]}] \leftarrow -1$
    \ELSE[i.e., $v \lbrack k-1 \rbrack \geq \frac{1}{2}$]
    \STATE $\delta[k-1,\flip{c[k-1]}] \leftarrow k$, $v[k] \leftarrow \tent(u[k-1])$, $u[k] \leftarrow \tent(v[k-1])$, $c[k] \leftarrow \flip{c[k-1]}$
    \STATE $\delta[k-1,c[k-1]] \leftarrow -1$
    \ENDIF
    \ENDIF
    \end{algorithmic}
\end{algorithm}
\if0
\begin{algorithm}[h]
    \caption{Compute  $b$}
    \label{alg:computeb}
    \begin{algorithmic}[1]
    \REQUIRE a bit $b$, a rational $z \in [0,1)$ 
    \ENSURE a bit $b'$ 
    \IF[$Z_i =\typei_l$]{$b=c[l]$}
    \STATE {\bf if} $z < \frac{1}{2}$ {\bf then} $b' \leftarrow b$, {\bf else} $b' \leftarrow \flip{b}$
    \ELSE[$Z_i = \flip{\typei}_l$]
    \STATE {\bf if} $z \geq \frac{1}{2}$  {\bf then} $b' \leftarrow \flip{b}$, {\bf else} $b' \leftarrow b$
    \ENDIF
    \RETURN $b'$ 
    \end{algorithmic}
\end{algorithm}
\fi
\begin{algorithm}[h]
    \caption{Update $l$ and $b$}
    \label{alg:computel}
    \begin{algorithmic}[1]
    \REQUIRE an integer $l$, bits $b$, $b'$  
    \ENSURE an integer $l$, a bit $b$ 
    \IF[$Z_i =\typei_l$]{$b=c[l]$}
    \STATE $l \leftarrow \delta[l,b']$, $b \leftarrow b'$
    \ELSE[$Z_i = \flip{\typei}_l$]
    \STATE $l \leftarrow \delta[l,\flip{b'}]$, $b \leftarrow \flip{b'}$
    \ENDIF
    \RETURN $l$ and $b$
    \end{algorithmic}
\end{algorithm}

\section{Proof of Lemma ~\ref{lemm:average_space_complexity}}\label{sec:proof-lemm5.2}
This Section proves Lemma~\ref{lemm:average_space_complexity}. 
\begin{lemma}[Lemma~\ref{lemm:average_space_complexity}]
  Let $\mu \in (1, 2)$ be rational given by an irreducible fraction $\mu = c/d$.
  Suppose for $\mu \in (1, 2)$ that $\tent_{\mu}^{i}(\frac{1}{2}) \neq \frac{1}{2}$ holds for any $i = 1, \dots, n-1$. 
  On condition that $X$ fully covers $(x-\epsilon + \frac{1}{n^2}, x + \epsilon - \frac{1}{n^2})$ at $\lceil \log_{\mu} n \rceil$, 
  the conditional expectation of $K^2$ is 
    $\Order(\log_{\mu}^{2}{n} \log_{\mu}^{2}{d}) = \Order(\lg^2 n \lg ^2 d / \lg^4 \mu)$. 
\end{lemma}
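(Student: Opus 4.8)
The plan is to isolate the only respect in which the $\epsilon$-perturbed initial distribution differs from the uniform distribution on $[0,1)$ analysed in \cite{OK23} --- namely the first $m:=\lceil\log_{\mu}n\rceil$ iterations --- and to invoke the level-chain estimate of \cite{OK23} for everything afterwards. Write $L_i=L(\typefn(\enc^i(X)))$ for $i=1,\dots,n$, so that $K=\max_{1\le i\le n}L_i$ and $K^2\le m^2+\big(\max_{m\le i\le n}L_i\big)^2$. The burn-in term is immediate: by Theorem~\ref{theo:number_of_type} the level of any segment-type in $\typeset_i$ is at most $i$, so $L_i\le i\le m$ for all $i\le m$; this contributes only $m^2=\Order(\log_{\mu}^2 n)$ and in particular gives $L_m\le m$.

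The heart is the reduction for $i\ge m$. Once $X$ fully covers $(x-\epsilon+\tfrac1{n^2},x+\epsilon-\tfrac1{n^2})$ at iteration $m$, Lemma~\ref{lem:trans-prob} turns this into the identity $\Pr[B_{i+1}=b_{i+1}\mid \enc^i(X)=\enc^i(y)]=|\typefn(b_1\cdots b_{i+1})|/(\mu\,|\typefn(b_1\cdots b_i)|)$, which is exactly the uniform conditional probability of Lemma~\ref{lem:cond-prob} (using $|\typefn(\bitseq 0)|+|\typefn(\bitseq 1)|=\mu|\typefn(\bitseq)|$). Moreover, by the symmetry $\delta(\typei_l,b)=\typej'\Rightarrow\delta(\flip{\typei}_l,b)=\flip{\typej'}$ and since $|\typei_l|=|\flip{\typei}_l|$, the distribution of the increment $L_{i+1}-L_i$ depends only on $L_i$; hence $(L_i)_{i\ge m}$ is a Markov chain whose kernel does not depend on the (perturbed) initial law at all, and the bound of \cite{OK23} for $\E[(\max_{m\le i\le n}L_i)^2]$ --- which uses only this kernel together with $L_m\le m$ --- transfers verbatim.

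I would then recall (or reproduce, with the minor changes above) that estimate. Its ingredients are: Lemma~\ref{lemm:transition_function}, each step raises the level by one or drops it to at most $\lfloor L_i/2\rfloor+1$; Lemma~\ref{lem:back2}, within any doubling of the current level a state is visited from which some outgoing edge is such a ``reverse'' edge; and Lemma~\ref{lem:trans-prob} together with the observation that the endpoints of $\typei_k$ and $\flip{\typei}_k$ are rationals with denominator $\Order(d^{k})$ (so $1/\Order(d^{k})\le|\typei_k|\le1$), whence any such reverse edge is taken with probability at least $d^{-\Order(k)}$. A potential/tail argument on this ``birth and halving'' walk gives a tail bound for $\max_{m\le i\le n}L_i$ strong enough that $\E[(\max_{m\le i\le n}L_i)^2]=\Order(\log_{\mu}^2 n\,\log_{\mu}^2 d)$; adding the burn-in term yields $\E[K^2]=\Order(\log_{\mu}^2 n\,\log_{\mu}^2 d)=\Order(\lg^2 n\,\lg^2 d/\lg^4\mu)$.

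The step I expect to be the main obstacle is making the reduction fully rigorous, namely that conditioning on ``full cover at $m$'' really does reduce the perturbed chain to the uniform one for \emph{every} later $i$, not merely at $i=m$: one needs the conditional law of $\tent^i(X)$ given $\enc^i(X)$ to remain uniform on the whole segment-type $\typefn(\enc^i(X))$ for all $i\ge m$, so that Lemma~\ref{lem:trans-prob} can be applied repeatedly along the trajectory. This is exactly the ``trouble near the boundaries $x\pm\epsilon$'' alluded to before the lemma; I would handle it by checking, via Lemmas~\ref{lem:encode3} and~\ref{lem:cover}, that the full-cover property propagates forward (the sub-interval realising a segment-type at time $i+1$ is contained in one realising a segment-type at time $i$, which by hypothesis already lies inside $[x-\epsilon,x+\epsilon]$). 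The remaining, quantitatively heaviest ingredient --- the tail estimate for the level walk --- is essentially identical to \cite{OK23}.
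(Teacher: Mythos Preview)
Your high-level plan --- separate the first $m = \lceil \log_\mu n \rceil$ iterations as burn-in, use the full-cover hypothesis together with Lemma~\ref{lem:trans-prob} to reduce the level process $(L_i)_{i \geq m}$ to the uniform-start chain, then invoke the tail estimate from \cite{OK23} --- is correct and matches the paper's approach. The paper does not explicitly split off a burn-in term, but the logic is the same: it establishes full cover by iteration $l_* \geq \lceil 2\log_\mu n \rceil$ via Lemma~\ref{lem:cover} and applies Lemma~\ref{lem:trans-prob} from that point on. Your worry about propagation of full cover is handled exactly as you suggest, and the paper treats it rather casually.

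The one substantive inaccuracy is your summary of the tail estimate itself. You describe it as bounding individual reverse-edge probabilities by $d^{-\Order(k)}$ and running a potential argument on the resulting ``birth and halving'' walk. That is not what the paper (following \cite{OK23}) actually does, and it would be hard to reach the stated bound that way: a single guaranteed reverse edge in each doubling window, taken with probability $\geq d^{-\Order(k)}$, yields only $\Pr[\text{climb } l \to 2l] \leq 1 - d^{-\Order(l)}$, and these factors do not compound usefully across successive doublings. The actual argument is sharper. By Lemma~\ref{lemm:transition_function}, a \emph{first} visit to level $2l$ forces the chain to traverse $l, l+1, \ldots, 2l$ consecutively, so the climb probability is the full telescoping product $\prod_{i=l}^{2l-1} |\typei_{i+1}|/(\mu|\typei_i|) = |\typei_{2l}|/(\mu^l |\typei_l|)$. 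One then shows --- by assuming the contrary along the geometric sequence $l_i = 2^i \lceil \log_\mu n \rceil$ and telescoping the resulting inequalities against the lower bound $|\typei_k| \geq 1/(2d^k)$ until $|\typei_{l_{k+1}}|>1$ is forced --- that some $l \leq 8\lceil \log_\mu d \rceil \lceil \log_\mu n \rceil$ satisfies $|\typei_{2l}|/(\mu^l |\typei_l|) \leq n^{-3}$; a union bound over the $n$ possible first-visit times then gives $\Pr[K \geq 2l] \leq n^{-2}$, from which $\E[K^2]$ follows at once. If you are simply citing \cite{OK23} for this step your plan is fine, but your sketch of its content should be corrected.
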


 The proof strategy of  Lemma~\ref{lemm:average_space_complexity} is as follows. 
 Lemma~\ref{lemm:transition_function}
   implies that a chain must follow the path $\typei_l, \typei_{l+1}, \ldots, \typei_{2l}$ 
     (or $\flip{\typei}_l, \flip{\typei}_{l+1}, \ldots, \flip{\typei}_{2l}$) to reach level $2l$ 
    and the probability is $\frac{|\typei_{2l}|}{\mu^{l}|\typei_{l}|}$ (Lemma~\ref{lem:go_back}). 
 We then prove that 
    there exists $l=\Order(\log n \log d)$ such that $\frac{|\typei_{2l}|}{\mu^{l}|\typei_{l}|} \leq n^{-3}$
      (Lemma~\ref{lemm:existence_of_short_q_s}), 
  which provides $\Pr[K \geq 2l] \leq n^{-2}$ (Lemma~\ref{lem:l*}).  
 Lemma~\ref{lemm:average_space_complexity} is easy from  Lemma~\ref{lem:l*}. 
   
 Let $Z_t = L(\typefn(\enc^t(X))$ for $t=0,1,2,\ldots$, i.e., $Z_t$ denote the level of the state at $t$-th iteration. 
 We observe the following fact from Lemma~\ref{lemm:transition_function}. 
\begin{observation}\label{obs:go_straight}
  If $Z_t$ visits $\typei_{2j}$ (resp.\ $\flip{\typei}_{2j}$) {\em for the first time} then 
    $Z_{t-i} = \typei_{2j-i}$ (resp.\ $Z_{t-i} = \flip{\typei}_{2j-i}$) for $i=1,2,\ldots,j$. 
\end{observation}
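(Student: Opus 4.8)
The plan is to derive this entirely from the branch-and-level structure of the transition function recorded in Lemma~\ref{lemm:transition_function} (together with its mirror for the $\flip{\typei}$-side, valid under the standing hypothesis $\tent_{\mu}^{i}(\tfrac{1}{2})\neq\tfrac{1}{2}$). First I would isolate two consequences of that lemma. (i) Each transition is of exactly one of three kinds: an \emph{increase}, which raises the level by exactly one and stays on the same branch ($\typei_m\to\typei_{m+1}$ or $\flip{\typei}_m\to\flip{\typei}_{m+1}$); a \emph{drop}, whose target has level at most $m/2+1$ and which switches to the opposite branch; or a move to the reject state. (ii) Since the chain starts at a state of level at most $1$ (namely $q_0$, then $\typei_1$ or $\flip{\typei}_1$) and the level never rises by more than one per step, it cannot jump over any value from below; hence if level $2j$ is attained for the first time at step $t$, then every step $s<t$ has level $<2j$, and automatically $t\ge 2j-1\ge j$, so the indices $t-i$ below make sense.

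Next I would run a short downward induction on $i=0,1,\dots,j-1$ with the invariant: the state at step $t-i$ is $\typei_{2j-i}$ (the base case $i=0$ being the hypothesis that $\typefn(\enc^t(X))=\typei_{2j}$). Given that the state at step $t-i$ is $\typei_{2j-i}$, I examine the transition from step $t-i-1$ to step $t-i$. It cannot be the reject transition. If it were a drop, then the level at step $t-i-1$ would be at least $2\bigl((2j-i)-1\bigr)=4j-2i-2$, which is $\ge 2j$ precisely because $i\le j-1$; but then level $2j$ would have been reached before step $t$, contradicting (ii). Hence the transition is an increase, so by (i) the level at step $t-i-1$ is $2j-i-1$ and, since increases preserve the branch and an increase landing on the $\typei$-side can only originate on the $\typei$-side, the state at step $t-i-1$ is $\typei_{2j-i-1}$. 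Carrying the induction to $i=j-1$ shows the state at step $t-j$ is $\typei_{j}$, which is exactly the assertion for $i=1,\dots,j$. The $\flip{\typei}_{2j}$ case is obtained verbatim after swapping the roles of the two branches.

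The only point requiring care is bookkeeping: checking that $4j-2i-2\ge 2j$ holds exactly on $i\le j-1$ (so the statement is tight and one cannot push the induction to $i=j$), and making the ``first time'' contradiction airtight by observing that, because the level moves upward in unit steps starting from a value below $2j$, attaining any level $\ge 2j$ before step $t$ forces attaining the level $2j$ itself before step $t$. I do not foresee a genuine obstacle — the content is a restatement of Lemma~\ref{lemm:transition_function} — but since $Z_t$ is used both for the integer level and, loosely, for the state, I would phrase the induction explicitly in terms of the state $\typefn(\enc^{t-i}(X))$ rather than $Z_{t-i}$ to avoid ambiguity.
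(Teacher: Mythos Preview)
Your proposal is correct and follows essentially the same approach as the paper: both argue via Lemma~\ref{lemm:transition_function} that the only in-edge to $\typei_{k}$ (resp.\ $\flip{\typei}_{k}$) for $k\in\{j+1,\dots,2j\}$ that does not originate at level $\geq 2j$ is the one from $\typei_{k-1}$ (resp.\ $\flip{\typei}_{k-1}$), and then invoke the ``first time'' hypothesis. Your version simply unpacks this as an explicit downward induction with the arithmetic check $4j-2i-2\geq 2j$ for $i\leq j-1$, whereas the paper states the in-edge characterization in one sentence.
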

\begin{proof}
 By Lemma~\ref{lemm:transition_function}, 
  all in-edges to $\typei_k$ (resp.\ $\flip{\typei}_k$) for any $k = j+1,\dots,2j$ 
  come from $\typei_{k-1}$ (resp.\ $\flip{\typei}_{k-1}$),  or a node of level $2j$ or greater. 
 Since $Z_t$ has not visited any level greater than $2j$ by the hypothesis and the above argument again, 
 we obtain the claim. 
\end{proof}

 By Observation~\ref{obs:go_straight}, 
   if a Markov chain $Z_1,Z_2,\ldots$ visits level $2l$ {\em for the first time} at time $t$ 
     then $\lev(Z_{t-l})$ must be $l$. 
 The next lemma gives an upper bound of the probability from level $l$ to $2l$. 
\begin{lemma}\label{lem:go_back}
Suppose that $X$ covers around appropriate $y$ corresponding to $Z_{t-l}$ at $t-l$. Then, 
$$\Pr[\lev(Z_{t}) = 2l \mid \lev(Z_{t-l}) = l] 
 =  \frac{|\typei_{2l}|}{\mu^{l} |\typei_{l}|}.$$
\end{lemma}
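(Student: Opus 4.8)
The plan is to show that the conditional event forces a unique ``straight'' path through the state graph, and then to evaluate its probability as a telescoping product of one-step transition probabilities. Condition on $\lev(Z_{t-l})=l$, and, using the flip symmetry of the transition function (if $\delta(\typei_l,b)=\typej'$ then $\delta(\flip{\typei}_l,b)=\flip{\typej'}$, and $|\flip{\typei}_k|=|\typei_k|$ for every $k$), assume without loss of generality that the state at time $t-l$ is $\typei_l$. First I would argue that $\lev(Z_t)=2l$ forces $Z_{t-l+i}=\typei_{l+i}$ for every $i=0,1,\dots,l$: by Lemma~\ref{lemm:transition_function} a single transition raises the level by at most one, so reaching level $2l$ from level $l$ in exactly $l$ steps requires the level to increase by exactly one at each step; and the only non-reject out-neighbour of $\typei_{l+i-1}$ of level $l+i$ is $\typei_{l+i}$ itself, since the other non-reject out-neighbours $\flip{\typei}_{k+1}$ have $k\le\tfrac{l+i-1}{2}$, hence level $<l+i$ whenever $l\ge 1$. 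This is the same pinning phenomenon recorded in Observation~\ref{obs:go_straight}. Consequently the event in question coincides, conditionally on the state $\typei_l$ at time $t-l$, with the event that the chain appends, at each of the $l$ steps, the unique bit that keeps it on the path $\typei_l,\typei_{l+1},\dots,\typei_{2l}$.

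Next I would compute the probability of that path. The hypothesis says $X$ covers around the point $y$ with $\enc^{t-l}(y)=\enc^{t-l}(X)$ at iteration $t-l$; since covering is preserved under further iteration of $\tent$ (the level-$(i{+}1)$ subfibre inside a covered level-$i$ fibre is mapped onto the whole segment-type $\typefn^{i+1}(y)$, because $\tent$ is a bijection from each half of $[0,1)$ onto $[0,\tfrac{\mu}{2})$), $X$ also covers around the corresponding points at iterations $t-l+1,\dots,t-1$. Hence Lemma~\ref{lem:trans-prob} applies at each of these iterations and gives that the probability of the $i$-th step along the straight path equals
\[
 \frac{|\typefn(B_1\cdots B_{t-l+i}\,b)|}{\mu\,|\typefn(B_1\cdots B_{t-l+i})|}=\frac{|\typei_{l+i+1}|}{\mu\,|\typei_{l+i}|},
\]
independently of which word $B_1\cdots B_{t-l}$ underlies the state $\typei_l$ (only its segment-type matters), and independently across the $l$ steps by the Markov property. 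Multiplying over $i=0,\dots,l-1$ and telescoping yields
\[
 \Pr[\lev(Z_t)=2l\mid \lev(Z_{t-l})=l]=\prod_{i=0}^{l-1}\frac{|\typei_{l+i+1}|}{\mu\,|\typei_{l+i}|}=\frac{1}{\mu^{l}}\cdot\frac{|\typei_{2l}|}{|\typei_{l}|},
\]
which is the claim.

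The step I expect to be most delicate is the bookkeeping around the covering hypothesis: Lemma~\ref{lem:trans-prob} is stated only for the single iteration at which full cover holds, so I must make the propagation of the cover property along the $l$ iterations explicit (and check it is not disturbed when the segment-type straddles $\tfrac12$), and I must be careful that conditioning on $\lev(Z_{t-l})=l$ — which a priori fixes the state only up to the pair $\{\typei_l,\flip{\typei}_l\}$, not the underlying word — is precisely the conditioning under which Lemma~\ref{lem:trans-prob} delivers the clean one-step probabilities, so that the telescoping is legitimate.
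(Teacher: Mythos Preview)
Your proposal is correct and follows essentially the same approach as the paper's proof: invoke the straight-path observation (the paper cites Observation~\ref{obs:go_straight}, you derive it directly from Lemma~\ref{lemm:transition_function}), apply Lemma~\ref{lem:trans-prob} at each step to get the one-step probability $|\typei_{i+1}|/(\mu|\typei_i|)$, telescope, and handle the $\flip{\typei}_l$ case by the flip symmetry $|\typei_i|=|\flip{\typei}_i|$. If anything, your version is more careful than the paper's about the propagation of the covering hypothesis across the $l$ steps, which the paper leaves implicit.
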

\begin{proof}
 By Observation~\ref{obs:go_straight}, 
  the path from $\typei_l$ to  $\typei_{2l}$ is unique and 
  \begin{align}
    \label{eq:asc_1_7}
 \Pr[Z_{t}  = \typei_{2l} \mid Z_{t-l} = \typei_l] 
    &= \prod_{i=l}^{2l-1} p(\typei_{i}, \typei_{i+1}) \nonumber \\
    &= \prod_{i = l}^{2l-1} \frac{|\typei_{i+1}|}{\mu |\typei_{i}|} && (\mbox{by Lemma \ref{lem:trans-prob}}) \nonumber \\
    &= \frac{|\typei_{2l}|}{\mu^{l} |\typei_{l}|}
  \end{align}
holds. 
 We remark that $|\typei_i| = |\flip{\typei}_i|$ holds for any $i$, meaning that 
  $p(\typei_{i}, \typei_{i+1}) = p(\flip{\typei}_{i}, \flip{\typei}_{i+1})$, 
   and hence 
  $\Pr[Z_{t}  =  \flip{\typei}_{2l} \mid Z_{t-l} = \flip{\typei}_l] = \frac{|\typei_{2l}|}{\mu^{l} |\typei_{l}|}$. 
\end{proof}

 The following lemma is the first mission of the proof of Lemma~\ref{lemm:average_space_complexity}. 
\begin{lemma}
  \label{lemm:existence_of_short_q_s}
  Let $\mu \in (1, 2)$ be rational given by an irreducible fraction $\mu = c/d$.
 Suppose for $\mu \in (1, 2)$ that $\tent_{\mu}^{i}(\frac{1}{2}) \neq \frac{1}{2}$ holds for any $i = 1, \dots, n-1$. 
 Then, there exists $l$ such that  $l \leq 8 \ceil{\log_{\mu} d} \ceil{ \log_{\mu} n} $ and 
  \begin{equation}
    \label{eq:shortqs_0}
    \frac{|\typei_{2l}|}{\mu^l |\typei_l|} \le n^{-3}
  \end{equation}
  holds. 
\end{lemma}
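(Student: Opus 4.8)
The proof would begin from the telescoping identity
$\frac{|\typei_{2l}|}{\mu^{l}|\typei_{l}|}=\prod_{i=l}^{2l-1}\frac{|\typei_{i+1}|}{\mu|\typei_{i}|}$.
By Lemma~\ref{lem:typei}, the $i$-th factor is $1$ unless $\inf\typei_{i}<\tfrac12<\sup\typei_{i}$ (call such an $i$ a \emph{cutting index}), and at a cutting index $|\typei_{i+1}|=\mu\,|\tent^{i}(\tfrac12)-\tfrac12|$, so the factor equals $\frac{|\tent^{i}(\frac12)-\frac12|}{|\typei_{i}|}\in(0,1)$; here $\tent^{i}(\tfrac12)$ is automatically one endpoint of $\typei_{i}$, since $\tfrac12$ is an endpoint of the cell $\{x:\enc^{i}(x)=\enc^{i}(\tfrac12)\}$ on which $\tent^{i}$ is affine (Lemma~\ref{lem:encode3}). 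Writing $\delta_{i}:=|\tent^{i}(\tfrac12)-\tfrac12|$ and using that $|\typei_{i+1}|=\mu|\typei_{i}|$ whenever $i$ is not a cutting index, the product over $[l,2l)$ telescopes to $\frac{\delta_{i_{K}}}{\delta_{i_{m}}}\,\mu^{-(i_{K}-i_{m})}$, where $i_{m}$ is the largest cutting index $<l$ and $i_{K}$ the largest cutting index $<2l$ (with the obvious conventions when $l$ itself is a cutting index, and noting there always is a cutting index in $[l,2l)$ by the density estimate below).

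Next I would extract the arithmetic input of $\mu=c/d$. By induction on Lemma~\ref{lem:typei}, $\tent^{i}(\tfrac12)$ and all endpoints of $\typei_{i}$ are rationals with denominator dividing $2d^{i}$; with the hypothesis $\tent^{i}(\tfrac12)\neq\tfrac12$ this gives $\delta_{i}\ge\frac{1}{2d^{i}}$, while $\delta_{i}\le\tfrac12$ and $|\typei_{i}|\le\tfrac{\mu}{2}<1$ hold trivially. Hence in a maximal block of non-cutting steps following a cutting index $p$ the length $|\typei_{i}|$ equals $\mu^{\,i-p}\delta_{p}$ and must stay $<1$, so the next cutting index $p'$ satisfies $p'-p\le\log_{\mu}\!\frac{1}{\delta_{p}}\le(p+1)\log_{\mu}d$. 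Consequently the cutting indices $1=i_{1}<i_{2}<\cdots$ grow at most geometrically, $i_{k+1}\le(1+2\log_{\mu}d)\,i_{k}$, so they are reasonably dense.

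From the telescoped form together with $\delta_{i_{K}}\le\tfrac12$ and $\delta_{i_{m}}\ge\frac{1}{2d^{i_{m}}}$ we get $\frac{|\typei_{2l}|}{\mu^{l}|\typei_{l}|}\le d^{\,i_{m}}\,\mu^{-(i_{K}-i_{m})}$, so it suffices to choose $l\le 8\lceil\log_{\mu}d\rceil\lceil\log_{\mu}n\rceil$ with $\mu^{\,i_{K}-i_{m}}\ge n^{3}d^{\,i_{m}}$. The plan is to take $p=i_{m}$ to be the first cutting index with $p\ge\lceil 3\log_{\mu}n\rceil$ (so, by the geometric growth, $p\le(1+2\log_{\mu}d)\lceil 3\log_{\mu}n\rceil$), and to place $l$ at the right end of the block of non-cutting steps that follows $p$, pulling additional cutting indices of $[l,2l)$ into the gap $i_{K}-i_{m}$ if that block is short. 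The block after $p$ contributes up to $(p+1)\log_{\mu}d$ to $i_{K}-i_{m}$, which absorbs the $d^{\,i_{m}}$-term, and the remaining cutting indices supply the extra factor $\mu^{3\log_{\mu}n}=n^{3}$; a routine estimate then gives $l\le(1+2\log_{\mu}d)^{2}\lceil 3\log_{\mu}n\rceil\le 8\lceil\log_{\mu}d\rceil\lceil\log_{\mu}n\rceil$, as required.

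The delicate point is exactly this last step, because a block of non-cutting steps need not realize its maximal length $(p+1)\log_{\mu}d$. I expect to handle it by a dichotomy governed by the returns $\delta_{j}$ (equivalently by how small $|\typei_{j}|$ becomes inside the window): if there is a ``deep'' return in $[l,2l)$ — some $\delta_{j}$ close to its floor $\frac{1}{2d^{j}}$ with $|\typei_{j}|$ correspondingly tiny — then the single factor $\frac{\delta_{j}}{|\typei_{j}|}$ already drops the product below $n^{-3}$; and if every return in the window is ``shallow'', then every block there is short, which forces enough cutting indices into $[l,2l)$ to make $i_{K}-i_{m}$ as large as we need. In both regimes it is the rational lower bound $\delta_{i}\ge\frac{1}{2d^{i}}$ that produces the $\log_{\mu}d$ factor in the final bound.
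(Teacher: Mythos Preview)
Your structural analysis is correct and illuminating: the factor $\frac{|\typei_{i+1}|}{\mu|\typei_i|}$ is indeed $1$ at non-cutting indices and $\frac{\delta_i}{|\typei_i|}$ at cutting indices, and the telescoped form $\frac{\delta_{i_K}}{\delta_{i_m}}\mu^{-(i_K-i_m)}$ is right. But the final step has a genuine gap. First, the inequality you call ``routine'', $(1+2\log_\mu d)^2\lceil 3\log_\mu n\rceil\le 8\lceil\log_\mu d\rceil\lceil\log_\mu n\rceil$, is simply false: the left side is quadratic in $\log_\mu d$ while the right is linear (e.g.\ at $\log_\mu d=10$ you get $1323$ versus $80$). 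Second, the crude bound $\delta_{i_m}\ge\frac{1}{2d^{i_m}}$ forces you to overcome a factor $d^{i_m}$, i.e.\ you need $i_K-i_m\ge i_m\log_\mu d$; but $i_K<2l$ and $i_m$ can be as large as $l-1$, so a window of width $l$ cannot absorb this in general. Your proposed dichotomy does not resolve this: a ``deep'' return $\delta_j$ inside $[l,2l)$ only helps if $j=i_K$, and even then the telescoped form still carries $\frac{1}{\delta_{i_m}}$; and ``shallow'' returns ensure many cutting indices but that gives $i_K-i_m\approx l$, not $l\log_\mu d$.

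The paper sidesteps all of this with a much simpler non-constructive argument that uses only the bound $|\typei_k|\ge\frac{1}{2d^k}$ (your same arithmetic input, applied to $|\typei_k|$ rather than $\delta_k$) and $|\typei_k|\le 1$. It tests the geometrically doubling values $l_i=2^i\lceil\log_\mu n\rceil$; since $2l_i=l_{i+1}$, if the conclusion fails at every $l_1,\dots,l_k$ then $|\typei_{l_{i+1}}|>n^{-3}\mu^{l_i}|\typei_{l_i}|$ for each $i$, and chaining gives $|\typei_{l_{k+1}}|>n^{-3k}\mu^{l_1+\cdots+l_k}|\typei_{l_1}|\ge \tfrac12\,n^{-3k+2^{k+1}-2-4\log_\mu d}$. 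For $k=\max\{4,\lceil\log_2\log_\mu d\rceil+2\}$ the exponent is positive, forcing $|\typei_{l_{k+1}}|>1$, a contradiction; and $l_k=2^k\lceil\log_\mu n\rceil\le 8\lceil\log_\mu d\rceil\lceil\log_\mu n\rceil$. The doubling is what delivers the \emph{linear} dependence on $\log_\mu d$ that your construction misses.
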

To prove Lemma \ref{lemm:existence_of_short_q_s}, 
 we remark the following fact. 
\begin{lemma}
  \label{lemm:min_sectype_length}
  Let $\mu \in (1, 2)$ be rational given by an irreducible fraction $\mu = c/d$.
  Then, $|\typei_k| \ge \frac{1}{2 d^k}$ for any $k \ge 2$. 
\end{lemma}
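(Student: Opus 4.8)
The plan is to control the denominators of the endpoints of $\typei_k$ and then observe that a positive rational with a bounded denominator cannot be too small. Concretely, I would prove by induction on $k$ that both $\inf \typei_k$ and $\sup \typei_k$ are integer multiples of $\tfrac{1}{2d^k}$, i.e.\ lie in $\tfrac{1}{2d^k}\mathbb{Z}$.

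For the base case $k=1$, Lemma~\ref{lem:typei} gives $\typei_1 = [0,\tfrac{\mu}{2})$, so $\inf\typei_1 = 0$ and $\sup\typei_1 = \mu/2 = c/(2d)$, both of which are multiples of $1/(2d)$. For the inductive step, suppose the claim holds for $k-1$. By Lemma~\ref{lem:typei}, each endpoint of $\typei_k$ is one of $\tent(v)$, $\tent(u)$, $\tent(\tfrac12)$, where $v = \inf\typei_{k-1}$ and $u = \sup\typei_{k-1}$. If $y = m/(2d^{k-1})$ for some $m \in \mathbb{Z}$, then depending on which branch of \eqref{eq:tentmap} applies, $\tent(y)$ equals either $\mu y = cm/(2d^k)$ or $\mu(1-y) = c\,(2d^{k-1}-m)/(2d^k)$; in both cases $\tent(y) \in \tfrac{1}{2d^k}\mathbb{Z}$. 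Moreover $\tent(\tfrac12) = \mu/2 = c/(2d) = c\,d^{k-1}/(2d^k) \in \tfrac{1}{2d^k}\mathbb{Z}$ as well. Hence $\inf\typei_k, \sup\typei_k \in \tfrac{1}{2d^k}\mathbb{Z}$, completing the induction.

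Finally, $\typei_k$ is a nondegenerate interval (one endpoint open, the other closed), so $|\typei_k| = \sup\typei_k - \inf\typei_k$ is a \emph{strictly positive} element of $\tfrac{1}{2d^k}\mathbb{Z}$, and therefore $|\typei_k| \ge \tfrac{1}{2d^k}$, as desired. The only delicate point is the bookkeeping in the inductive step: one must check that \emph{every} endpoint produced by the several cases of Lemma~\ref{lem:typei} — including the ones equal to $\tent(\tfrac12)$, whose natural denominator $2d$ is smaller but still divides $2d^k$ — stays inside $\tfrac{1}{2d^k}\mathbb{Z}$; once this is verified the conclusion is immediate.
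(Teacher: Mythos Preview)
Your proof is correct and follows essentially the same approach as the paper: both arguments show that the endpoints of $\typei_k$ lie in $\tfrac{1}{2d^k}\mathbb{Z}$ and then conclude from positivity of $|\typei_k|$. The only cosmetic difference is that the paper observes directly from the recursion \eqref{eq:20231105a} that the endpoints are of the form $\tent^i(\tfrac12)$ and $\tent^j(\tfrac12)$ with $i,j\le k$ (hence have denominators $2d^i$, $2d^j$ dividing $2d^k$), whereas you track the denominator through the recursion by induction; the substance is the same.
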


\begin{proof}
 By the recursive formula \eqref{eq:20231105a}, 
  we see that $\typei_k$ is either 
   $\left[\tent^{i}(\frac{1}{2}), \tent^{j}(\frac{1}{2})\right)$ or $\left(\tent^{i}(\frac{1}{2}), \tent^{j}(\frac{1}{2})\right]$ where $i \le k$ and $j \le k$.
  We can denote $\tent^{i}(\frac{1}{2})$ as $\frac{c_{i}}{2d^{i}}$ ($c_{i} \in \mathbb{Z}_{>0}$) for any $i$.
  Therefore,
  \begin{equation}
    |\typei_k|
    = \left|\tent^{i}(\tfrac{1}{2}) - \tent^{j}(\tfrac{1}{2})\right|
    = \left|\frac{c_{i}}{2 d^{i}} - \frac{c_{j}}{2 d^{j}}\right|
    = \left|\frac{c_{i} d^{k-i} - c_{j} d^{k-j}}{2 d^k}\right|
  \end{equation}
  holds.
  Clearly, $c_{i} d^{k-i} - c_{j} d^{k-j}$ is an integer, and it is not $0$ since $|\typei_k| \neq 0$.
  Thus, we obtain $|\typei_k| \ge \frac{1}{2 d^k}$.
\end{proof}

 Then, we prove Lemma~\ref{lemm:existence_of_short_q_s}. 
\begin{proof}[Proof of Lemma~\ref{lemm:existence_of_short_q_s}]
  For convenience, let 
   $l_i =  2^i \ceil{\log_{\mu}n} $ for $i = 1, 2, \ldots$. 
  Assume for a contradiction that 
     \eqref{eq:shortqs_0} never hold for any $l_1,l_2,\ldots,l_k$, where 
  $k = \max\{4, \ceil{\log_{2} \log_{\mu} d} + 2\}$ for convenience.  
 In other words, 
  \begin{equation}
    |\typei_{l_{i+1}}| > n^{-3} \mu^{l_{i}} |\typei_{l_{i}}|
  \end{equation}
  holds every $i=1,2,\ldots,k$. 
 Thus, we inductively obtain that 
  \begin{align}
    |\typei_{l_{k+1}}|
    &> n^{-3} \mu^{l_{k}} |\typei_{l_{k}}| \nonumber \\
    &> n^{-6} \mu^{l_{k}} \mu^{l_{k-1}} |\typei_{l_{k-1}}| \nonumber \\
    &> \dots \nonumber \\
    &> n^{-3k} \mu^{l_{k}} \mu^{l_{k-1}} \dots \mu^{l_1} |\typei_{l_1}|
    \label{eq:shortqs_2}
  \end{align}
   holds. 
 By the definition of $l_i$, 
  \begin{equation}
    \label{eq:shortqs_5}
    \mu^{l_{i}} = \mu^{2^{i} \ceil{\log_{\mu}n}} \ge \mu^{2^{i} \log_{\mu}n} = n^{2^{i}}
  \end{equation}
  holds. 
 Lemma \ref{lemm:min_sectype_length} implies that 
  \begin{align}
    \label{eq:shortqs_7}
    |\typei_{l_{1}}|
    \ge \frac{1}{2 d^{l_{1}}}  
    = \frac{1}{2} d^{-2 \ceil{\log_{\mu} n}} 
    \geq \frac{1}{2} d^{-4 \log_{\mu} n} 
    = \frac{1}{2} n^{-4 \log_{\mu} d}
  \end{align}
 holds.  
 Then,  \eqref{eq:shortqs_2},  \eqref{eq:shortqs_5} and \eqref{eq:shortqs_7} imply 
  \begin{align}
    |\typei_{l_{k+1}}|
    &>  n^{-3k}  \cdotp  n^{2^{k} + 2^{k-1} + \dots + 2^{1}} \cdotp \frac{1}{2} n^{-4 \log_{\mu} d} 
    \label{eq:shortqs_6}
  \end{align}
 holds. 
  By taking the $\log_{n}$ of the both sides of \eqref{eq:shortqs_6}, 
    we see that 
  \begin{align}
    \log_{n}{|\typei_{l_{k+1}}|}
    &> -3k + 2^{k+1} - 2 - \log_{n}2 - 4 \log_{\mu}d  \nonumber \\
    &= (2^{k} - 4 \log_{\mu}d) + (2^{k} -3k - 2 - \log_{n}2)
    \label{eq:shortqs_8}
  \end{align}
   holds. 
 Since  $k \geq  \ceil{\log_{2} \log_{\mu} d} + 2$ by definition,  
  it is not difficult to see  that 
  \begin{align}
    2^{k} - 4 \log_{\mu}d 
    &\geq 2^{2 + \log_{2}(\log_{\mu}d)} - 4 \log_{\mu}d \nonumber \\
    &= 4 \log_{\mu}d - 4 \log_{\mu}d \nonumber \\
    &= 0
    \label{eq:shortqs_1}
  \end{align}
  holds.
 Since  $k \geq  4$ by definition,  
 it is also not difficult to observe that 
  \begin{align}
    2^{k} -3k - 2 - \log_{n} 2
   \ \ge\ 2^{4} - 3 \cdot 4 - 2 - \log_{n} 2 
   \ =\ 2- \log_{n} 2
   \ >\ 0
    \label{eq:shortqs_9}
  \end{align}
  holds.  
  Equations \eqref{eq:shortqs_8}, \eqref{eq:shortqs_1} and \eqref{eq:shortqs_9} imply that $\log_{n}{|\typei_{l_{k+1}}|} > 0$, 
     meaning that $|\typei_{l_{k+1}}| > 1$. 
  At the same time, 
    notice that any segment-type $\typej$ satisfies $\typej \subseteq [0,1]$,  
     meaning that 
    $|\typei_{l_{k+1}}| \le 1$. 
  Contradiction. 
 Thus, we obtain \eqref{eq:shortqs_0} for at least one of $l_1,l_2,\ldots,l_k$.  

 Finally, we check the size of $l_k$:  
\begin{align*}
  l_k &= 2^k \ceil{\log_{\mu} n} \leq 2^{\max\{4,\ceil{\log_2 \log_{\mu} d} + 2\}} \ceil{\log_{\mu} n} 
  \leq 2^{\max\{4, \log_2 \log_{\mu} d + 3\}} \ceil{\log_{\mu} n} \\& = \max\{ 16, 8 \log_{\mu} d \} \ceil{\log_{\mu} n}  = 8 \max\{ 2, \log_{\mu} d \} \ceil{\log_{\mu} n} \leq 8 \ceil{\log_{\mu} d} \ceil{\log_{\mu} n} 
\end{align*}
 where the last equality follows $\log_{\mu} d > 1$ since $\mu < 2$ and $d \geq 2$. 
 We obtain a desired $l$. 
\end{proof}

By Lemmas~\ref{lem:go_back} and \ref{lemm:existence_of_short_q_s}, we obtain the following fact. 
\begin{lemma}\label{lem:l*}
Let $l_* = 8 \ceil{\log_{\mu} d} \ceil{\log_{\mu} n} $ for convenience. Then 
\begin{align*}
 \Pr[\requiredstgsize \ge 2 l_*]  \leq n^{-2} 
\end{align*} 
 holds. 
\end{lemma}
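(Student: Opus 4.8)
The plan is to combine the ``short good level'' $l$ provided by Lemma~\ref{lemm:existence_of_short_q_s} with the rigidity of the level dynamics (Observation~\ref{obs:go_straight}) and a union bound over time.

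First I would fix a level $l\le l_* = 8\ceil{\log_\mu d}\ceil{\log_\mu n}$ with $|\typei_{2l}|/(\mu^{l}|\typei_l|)\le n^{-3}$, which exists by Lemma~\ref{lemm:existence_of_short_q_s}. Since $l\le l_*$ we have $\{\requiredstgsize\ge 2l_*\}\subseteq\{\requiredstgsize\ge 2l\}$, so it suffices to bound $\Pr[\requiredstgsize\ge 2l]$. By Lemma~\ref{lemm:transition_function} the level increases by at most one per step and $\lev(Z_1)=1$, so $\lev(Z_t)\le t$ for every $t$; hence if $2l>n$ the event is empty and there is nothing to prove, and otherwise $\{\requiredstgsize\ge 2l\}$ is exactly the event that the level sequence $\lev(Z_1),\lev(Z_2),\dots$ attains the value $2l$ --- necessarily for the first time at some (random) time $T$ with $2l\le T\le n$, since the value $2l$ cannot be skipped.

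Next I would invoke Observation~\ref{obs:go_straight} and its mirror image for the $\flip{\typei}$-branch: on $\{T=t\}$ the state $Z_t$ is the first visit to $\typei_{2l}$ or to $\flip{\typei}_{2l}$, which forces $Z_{t-l}$ to be the corresponding state of level $l$, so in particular $\{T=t\}\subseteq\{\lev(Z_{t-l})=l,\ \lev(Z_t)=2l\}$. Therefore, for each fixed $t$ with $2l\le t\le n$,
\begin{align*}
 \Pr[T=t]
 &\le \Pr\bigl[\lev(Z_{t-l})=l,\ \lev(Z_t)=2l\bigr]
 = \Pr[\lev(Z_{t-l})=l]\cdot\Pr[\lev(Z_t)=2l \mid \lev(Z_{t-l})=l] \\
 &\le \frac{|\typei_{2l}|}{\mu^{l}|\typei_l|}\le n^{-3},
\end{align*}
where the last line uses $\Pr[\lev(Z_{t-l})=l]\le 1$ together with Lemma~\ref{lem:go_back} (applicable because, under the conditioning of Lemma~\ref{lemm:average_space_complexity}, $X$ fully covers the relevant sub-interval, hence covers around the point corresponding to $Z_{t-l}$) and then the choice of $l$. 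Summing over the at most $n$ admissible values of $t$ yields $\Pr[\requiredstgsize\ge 2l]=\sum_{t}\Pr[T=t]\le n\cdot n^{-3}=n^{-2}$, hence $\Pr[\requiredstgsize\ge 2l_*]\le n^{-2}$.

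The step I expect to be the main obstacle is not a hard estimate but the careful justification that the first visit to level $2l$ must arrive along the unique ascending path out of level $l$: one has to handle both copies $\typei_{2l},\flip{\typei}_{2l}$ of each level, exploit that the level is passed but never skipped (so $2l$ is genuinely hit before anything larger), and verify that the covering hypothesis required by Lemma~\ref{lem:go_back} is available at every candidate time $t-l$ --- which is precisely why this lemma is stated and used under the conditioning of Lemma~\ref{lemm:average_space_complexity}. The degenerate cases $2l>n$ and $l_*\ge n$ should also be dismissed explicitly.
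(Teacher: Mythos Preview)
Your proposal is correct and follows essentially the same route as the paper: fix the level $l\le l_*$ from Lemma~\ref{lemm:existence_of_short_q_s}, use Observation~\ref{obs:go_straight} to see that the first visit to level $2l$ forces the chain to have been at level $l$ exactly $l$ steps earlier, bound that conditional probability by $|\typei_{2l}|/(\mu^l|\typei_l|)\le n^{-3}$ via Lemma~\ref{lem:go_back}, and take a union bound over the at most $n$ possible first-visit times. Your handling of the degenerate case $2l>n$ and your explicit remark that the covering hypothesis of Lemma~\ref{lem:go_back} is supplied by the conditioning in Lemma~\ref{lemm:average_space_complexity} are a touch more careful than the paper's write-up, but the argument is the same.
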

\begin{proof}
 For $\mu = c/d$ and $n$, Lemma~\ref{lemm:existence_of_short_q_s} implies that there exists $l$ such that 
  $l \leq l_* $ and 
\begin{align}
 \frac{|\typei_{2l}|}{\mu^{l} |\typei_{l}|} \leq n^{-3}
\label{eq:l*1}
\end{align}
holds.  
 Let $A_t$ ($t=1,\ldots,n$) denote the event that $Z_t$ reaches the level $2 l$ {\em for the first time}. 
 It is easy to see that 
\begin{align}
 \Pr[\requiredstgsize \ge 2l] 
  = \Pr\left[\bigvee_{t=0}^{n} A_t \right] 
\label{eq:l*2}
\end{align}
 holds\footnote{Precisely, $\bigvee_{t=0}^{n} A_t = \bigvee_{t=2l_*}^{n} A_t$ holds, but we do not use the fact here.  } 
   by the definition of $A_t$. 
 We also remark that the event $A_t$ implies not only $Z_t = 2l$ but also $\lev(Z_{t-l}) = l$ by Observation~\ref{obs:go_straight}.  
 It means that 
\begin{align}
 \Pr[A_t] 
  &\leq \Pr[ [\lev(Z_t) = 2l] \wedge [\lev(Z_{t-l}) = l] ] 
  \nonumber\\
  &= \Pr[ \lev(Z_t) = 2l  \mid \lev(Z_{t-l}) = l ] \Pr[ \lev(Z_{t-l}) = l ] \nonumber\\
  &\leq \Pr[ \lev(Z_t) = 2l  \mid \lev(Z_{t-l}) = l ] 
\label{eq:l*3}
\end{align}  
 holds. Then, 
\begin{align*}
 \Pr[\requiredstgsize \ge 2l] 
 &= \Pr\left[\bigvee_{t=0}^{n} A_t \right] 
   && (\mbox{by \eqref{eq:l*2}}) \nonumber \\
 &\le \sum_{t=0}^n \Pr\left[A_t\right] 
   && (\mbox{union bound})  \nonumber \\
  &\leq n \Pr[ \lev(Z_t) = 2l  \mid \lev(Z_{t-l}) = l ] 
   && (\mbox{by \eqref{eq:l*3}}) \nonumber \\
  &\leq n \frac{|\typei_{2l}|}{\mu^{l} |\typei_{l}|}
   && (\mbox{by Lemma~\ref{lem:go_back}}) \nonumber \\
  & \leq n^{-2}
   && (\mbox{by \eqref{eq:l*1}}) 
\end{align*}
  holds. 
We remark that 
$\Pr[\requiredstgsize \ge 2l_*] 
 \leq 
 \Pr[\requiredstgsize \ge 2l] $
is trivial since $l < l_*$. 
\end{proof}

We are ready to prove Lemma \ref{lemm:average_space_complexity}. 
\begin{proof}[Proof of Lemma \ref{lemm:average_space_complexity}]
Let $l_* = 8 \ceil{\log_{\mu} d} \ceil{\log_{\mu} n} $ for convenience. 
Notice that $X$ fully covers $(x-\epsilon + \frac{1}{n^2}, x + \epsilon - \frac{1}{n^2})$
 at or after $l_* \geq \lceil 2\log_{\mu} n \rceil$ by Lemma~\ref{lem:cover}. 
Then 
\begin{align*}
  \E[\requiredstgsize^{2}]
   &= \sum_{k=1}^n k^2\Pr[\requiredstgsize =k]  \nonumber \\
   &= \sum_{k=1}^{2l_*-1} k^2\Pr[\requiredstgsize =k]  + \sum_{k=2l_*}^n k^2\Pr[\requiredstgsize =k]  \nonumber \\
   &\leq (2l_* - 1)^2 \Pr[\requiredstgsize \leq 2l_*-1] + n^{2} \Pr[\requiredstgsize \ge 2l_*] \nonumber \\
   &\leq (2l_* - 1)^2 + n^{2} \Pr[\requiredstgsize \ge 2l_*] \nonumber \\
   &\leq (2l_* - 1)^2 + 1
    && (\mbox{by Lemma~\ref{lem:l*}}) \\
   &= (16 \ceil{\log_{\mu} d} \ceil{\log_{\mu} n} -1)^2 + 1
  \end{align*}
holds. Now the claim is easy. 
\end{proof}

\section{Proof of Lemma~\ref{lem:trans-prob}}\label{apx:trans-prob}
\begin{lemma}[Lemma~\ref{lem:trans-prob}]
Let $X \in [x-\epsilon,x+\epsilon]$ uniformly at random. 
Let $B_1 \cdots B_n = \enc^n(X)$. 
Suppose $X$ covers around $y \in[x-\epsilon,x+\epsilon]$ at $i$, and let $\enc^n(y) = b_1\cdots b_n$. 
Then, 
\begin{align*}
 \Pr[B_{i+1} = b_{i+1} \mid \enc^i(X)= \enc^i(y)] = \frac{|\typefn(b_1\cdots b_i b_{i+1})|}{\mu |\typefn(b_1\cdots b_i)|}
\end{align*} 
  holds. 
\end{lemma}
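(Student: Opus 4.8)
The plan is to exploit the piecewise linearity of the iterated map $\tent^i$ together with Lemma~\ref{lem:encode3}. First I would consider the equivalence class $E_i(y) \defeq \{y' \in [0,1) \mid \enc^i(y') = \enc^i(y)\}$. By Proposition~\ref{prop:order} (monotonicity of $\enc^i$ in the lexicographic order) and right-continuity (Proposition~\ref{prop:-heikai}), $E_i(y)$ is an interval; by Lemma~\ref{lem:encode3}, $\tent^i$ restricted to $E_i(y)$ is affine with slope of absolute value $\mu^i$; and by the definition \eqref{def:type} of the segment-type it maps $E_i(y)$ \emph{onto} $\typefn^i(y) = \typefn(b_1\cdots b_i)$. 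Hence $\tent^i|_{E_i(y)}$ is an affine bijection onto the interval $\typefn(b_1\cdots b_i)$, and in particular $|E_i(y)| = \mu^{-i}\,|\typefn(b_1\cdots b_i)|$.

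Next I would invoke the ``$X$ covers around $y$ at $i$'' hypothesis, which by definition states that $\{\tent^i(y') \mid \enc^i(y') = \enc^i(y),\ x-\epsilon \le y' \le x+\epsilon\} = \typefn^i(y)$; that is, the part of $E_i(y)$ lying inside the window $[x-\epsilon,x+\epsilon]$ already maps onto the whole segment-type. Combined with the bijectivity above this forces $E_i(y) \subseteq [x-\epsilon,x+\epsilon]$ (up to the harmless measure-zero endpoint discrepancy coming from the half-open nature of the intervals). Consequently, conditioning the uniform variable $X$ on the event $\{\enc^i(X) = \enc^i(y)\}$ gives a uniform distribution on $E_i(y)$, and pushing this forward through the affine bijection $\tent^i$ shows that, conditionally on $\{\enc^i(X)=\enc^i(y)\}$, the value $\tent^i(X)$ is uniformly distributed on $\typefn(b_1\cdots b_i)$.

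Finally I would compute the conditional probability directly. By the recursive rule \eqref{def:encode2} and the case analysis of Lemma~\ref{lem:transition}, the event $\{B_{i+1}=b_{i+1}\}$ given $\{\enc^i(X)=\enc^i(y)\}$ is exactly the event that $\tent^i(X)$ lies in a particular sub-interval $S \subseteq \typefn(b_1\cdots b_i)$, and one further application of $\tent$ — which has slope $\pm\mu$ and is injective on $S$ — sends $S$ bijectively onto $\typefn(b_1\cdots b_i b_{i+1})$; thus $|S| = \mu^{-1}|\typefn(b_1\cdots b_i b_{i+1})|$. Since $\tent^i(X)$ is uniform on $\typefn(b_1\cdots b_i)$, the required probability equals $|S|/|\typefn(b_1\cdots b_i)| = |\typefn(b_1\cdots b_i b_{i+1})|/(\mu\,|\typefn(b_1\cdots b_i)|)$. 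I expect the only delicate point to be the second step: making the ``covers around'' hypothesis genuinely rule out truncation of the fiber at the endpoints of $[x-\epsilon,x+\epsilon]$, and checking that the open/closed endpoint mismatches between $E_i(y)$, the window, and the segment-types contribute nothing to the probabilities.
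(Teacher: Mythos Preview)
Your proposal is correct and follows essentially the same route as the paper's proof: both use that $\tent^i$ is an affine bijection from the fiber $E_i(y)$ onto $\typefn(b_1\cdots b_i)$, invoke the ``covers around'' hypothesis to ensure the fiber lies inside $[x-\epsilon,x+\epsilon]$, and then take the ratio of interval lengths. Your write-up is simply more explicit about the bijectivity step and the endpoint/measure-zero caveat than the paper's terse version, but there is no substantive difference in strategy.
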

\begin{proof}
Since $X$ covers around $y$ at $i$, it is not difficult to see that 
\begin{align*}
&\Pr[ \enc^i(X) = \bitseq_i]
       = \frac{|S_i|}{2\epsilon} \quad \mbox{and} \\
&\Pr[ \enc^{i+1}(X) = \bitseq_{i+1}, \enc^i(X) = \bitseq_i]
= \Pr[ \enc^{i+1}(X) = \bitseq_{i+1}]
       = \frac{|S_{i+1}|}{2\epsilon}
\end{align*}
 hold. Thus, 
\begin{align}
\Pr[ \enc^{i+1}(X) = \bitseq_{i+1} \mid \enc^i(X) = \bitseq_i]
       = \frac{|S_{i+1}|}{|S_i|}
       = \frac{\mu^{i+1} |S_{i+1}|}{\mu^{i+1} |S_i|}
\label{eq:20231111a}
\end{align}
   holds. 
 Since the iterative tent map $\tent^i$ is piecewise linear, 
  it is not difficult to see that 
   $\tent^i$ preserves the uniform measure (cf.\ Lemma B.6 in \cite{OK23}), and hence 
  $\mu^i |S_i| = |\typefn(\enc^i(y))|$ as well as 
   $\mu^{i+1} |S_{i+1}| = |\typefn(\enc^{i+1}(y))|$ hold. 
 Then, 
\begin{align*}
\eqref {eq:20231111a} = \frac{|\typefn(\enc^{i+1}(y))|}{\mu|\typefn(\enc^i(y))|}
\end{align*}
 and we obtain the claim. 
\end{proof}
\end{document}